\def\anonymous{0}
\newcolumntype{C}{>{\centering\arraybackslash}p{0.27\textwidth}}
\newcommand{\citep}{\cite}
\newcommand{\citet}{\cite}
\newcommand{\dims}{d}
\newcommand{\nspu}{m}
\newcommand{\unif}{{\mathbf{u}}}
\newcommand{\ngr}{{n_0}}
\newcommand{\rank}{{\text{rank}}}
\newcommand{\Proj}{{\Pi}}
\newcommand{\povmset}{{\mathfrak{M}}}
\newcommand{\nqubits}{{N}}
\newcommand{\pauliI}{{\sigma_I}}
\newcommand{\pauliX}{{\sigma_X}}
\newcommand{\pauliY}{{\sigma_Y}}
\newcommand{\pauliZ}{{\sigma_Z}}
\newcommand{\pauliObsSet}{{\mathcal{P}}}
\newcommand{\prPauli}[2]{{\p_{#2}(#1)}}
\newcommand{\opnorm}[1]{{\left\|#1\right\|}_{\text{op}}}
\newcommand{\tracenorm}[1]{{\left\|#1\right\|}_{1}}
\newcommand{\hsnorm}[1]{{\left\|#1\right\|}_{\text{HS}}}
\newcommand{\barDelta}{{\overline{\Delta}}}
\newcommand{\ptb}{{z}}
\newcommand{\ptbDistr}{{\mathcal{D}_{\cd}}}
\newcommand{\supparen}[1]{^{(#1)}}
\newcommand{\cd}{{c}}
\newcommand{\cop}{\kappa}
\newcommand{\simprob}{\eta}
\newcommand{\isthestate}{\texttt{YES}}
\newcommand{\notthestate}{\texttt{NO}}
\newcommand{\bfP}{\mathbf{P}}
\newcommand{\bfQ}{\mathbf{Q}}
\def\multiset#1#2{\ensuremath{\left(\kern-.3em\left(\genfrac{}{}{0pt}{}{#1}{#2}\right)\kern-.3em\right)}}
\newcommand{\qs}{\rho}
\newcommand{\qmm}{{\rho_{\text{mm}}}}
\newcommand{\qkn}{{\rho_0}}
\newcommand{\polylog}{{\text{polylog}}}
\newcommand{\HH}{\mathbb{H}}
\newcommand{\Herm}[1]{{\HH_{#1}}}
\newcommand{\qbit}[1]{|#1\rangle}
\newcommand{\qadjoint}[1]{\langle#1|}
\newcommand{\qproj}[1]{\qbit{#1}\qadjoint{#1}}
\newcommand{\qoutprod}[2]{\qbit{#1}\qadjoint{#2}}
\newcommand{\qdotprod}[2]{\langle#1|#2\rangle}
\newcommand{\hdotprod}[2]{\left\langle#1,#2\right\rangle}
\newcommand{\matdotprod}[3]{\langle#1|#2|#3\rangle}
\newcommand{\eye}{\mathbb{I}}
\newcommand{\Real}{\text{Re}}
\newcommand{\Img}{\text{Im}}
\newcommand{\VecOp}{\text{vec}}
\newcommand{\vvec}[1]{|#1\rangle\rangle}
\newcommand{\vadj}[1]{\langle\langle#1|}
\newcommand{\vvdotprod}[2]{\langle\langle#1|#2\rangle\rangle}
\newcommand{\bx}{\mathbf{x}}
\newcommand{\Luders}{\mathcal{H}}
\newcommand{\avgLuders}{{\overline{\Luders}}}
\newcommand{\Choi}{{\mathcal{C}}}
\newcommand{\avgChoi}{{\overline{\Choi}}}
\newcommand{\hbasis}{{\mathcal{V}}}
\DeclareMathOperator{\diag}{diag}
\newcommand{\PiRank}{r}
\newcommand{\Wg}{\text{Wg}}
\newcommand{\Sim}{\mathcal{S}}
\newcommand{\Mob}{\text{Mob}}
\newcommand{\Cat}{\text{Cat}}
\newcommand{\Haar}[1]{{\mathcal{U}_{#1}}}
\newcommand{\POVM}{\mathcal{M}}
\newcommand{\permProd}[2]{{\left\langle#1\right\rangle_{#2}}}
\newcommand{\cycle}{\mathcal{C}}
\let\svthefootnote\thefootnote
\newcommand\blankfootnote[1]{%
  \let\thefootnote\relax\footnotetext{#1}%
  \let\thefootnote\svthefootnote%
}
\title{Quantum state testing with restricted measurements}
    \author{Anonymous authors}
    \author{\begin{tabular}{C C }
  Yuhan Liu  & Jayadev Acharya \\
 Cornell University & Cornell University \\ 
\small \texttt{yl2976@cornell.edu} &\small \texttt{acharya@cornell.edu} 
\end{tabular}
}
\begin{document}

\maketitle
\begin{abstract}
    We study quantum state testing where the goal is to test whether $\rho=\rho_0\in\mathbb{C}^{d\times d}$ or $\|\rho-\rho_0\|_1>\varepsilon$, given $n$ copies of $\rho$ and a known state description $\rho_0$. In practice, not all measurements can be easily applied, even using unentangled measurements where each copy is measured separately. We develop an information-theoretic framework that yields unified copy complexity lower bounds for restricted families of non-adaptive measurements through a novel \emph{measurement information channel}. 
Using this framework, we obtain the optimal bounds for a natural family of $k$-outcome measurements with fixed and randomized schemes. We demonstrate a separation between these two schemes, showing the power of randomized measurement schemes over fixed ones.
Previously, little was known for fixed schemes, and tight bounds were only known for randomized schemes with $k\ge d$ and  Pauli observables, a special class of 2-outcome measurements. Our work bridges this gap in the literature.  
\end{abstract}
\blankfootnote{A part of this work was published in the Proceedings of Conference on Learning Theory 2024. This work was supported by NSF award 1846300 (CAREER), NSF CCF-1815893, and a research scholar grant from Google. }

\newpage
\tableofcontents
\newpage

\section{Introduction}
%Property testing is a fundamental problem in computer science. A fascinating topic in property testing is finding algorithms with \emph{sublinear} complexity. One well-known problem with sublinear complexity is distribution testing. To test whether a distribution over $\dims$ elements is a desired distribution only requires $O(\sqrt{\dims})$ i.i.d. samples \cite{BatuFFKRW01, Paninski08}. 

We consider the problem of quantum state testing/certification~\cite{ODonnellW15,wright2016learn,BadescuO019}, where the objective is to verify whether the output of a quantum system is a desired quantum state. It is a special case of quantum property testing where the goal is to test if a quantum state has a certain property~\cite{Montanaro2016}. Property testing \cite{goldreich2017introduction, canonne2020survey} in general has gained significant attention in information theory and theoretical computer science. 
Specifically, given $\ns$ copies of an unknown state $\rho\in\C^{\dims\times\dims}$ and a known state description of $\qkn$, 
we wish to use quantum measurements to determine whether $\rho=\qkn$ or $\tracenorm{\rho-\qkn}>\eps$, where $\tracenorm{\cdot}$ is the trace norm\footnote{Previous works also considered other measures such as fidelity and Bures $\chi^2$-divergence, e.g. \cite{BadescuO019}.}. 
When $\qkn=\qmm\eqdef\eye_\dims/\dims$, the maximally mixed state, the problem is called \emph{mixedness testing}.

% Another related problem is \emph{closeness testing}, where $\qkn$ is also unknown and we are given copies of $\qkn$ as well to test whether $\rho=\qkn$ or $\tracenorm{\rho-\qkn}>\eps$. One practical motivation is to test whether two quantum algorithms produce the same outputs.

Our goal is to characterize the \emph{copy complexity}, the minimum number of copies of unknown states to solve the testing problem. Prior work of~\cite{ODonnellW15, BadescuO019} showed that $\ns=\Theta(\dims/\eps^2)$ copies are necessary and sufficient. The dimension of quantum states is $\sim\dims^2$, and thus the copy complexity is sublinear in the dimension. However, their result relies on \emph{entangled measurements}, where we treat the $\ns$ copies of $\rho$ as one giant state $\rho^{\otimes \ns}$ and apply arbitrary measurements to it. This is extremely difficult in practice even for moderate values of $\dims$ and $\ns$ as it requires a quantum computer with a lot of qubits that can sustain coherence for a long time. 

Thus, it is reasonable to study the problem under \emph{measurement restrictions}, namely measurements that are easier to implement than an entangled measurement. Naturally, weaker measurement means more copies are needed, and the copy complexity may not be sublinear anymore. This leaves an important question,

\begin{center}
    \fbox{Given a measurement restriction, can sublinear copy complexity be achieved for state certification?}
\end{center}

Several works have considered \emph{unentangled measurements}\footnote{Also called \emph{incoherent} and \emph{independent} measurements in previous literature} where measurements are performed on one copy of $\rho$ at a time. They are easier to implement than entangled measurements, and thus will be the focus of our work. There are three types of unentangled \emph{measurement schemes} which determine how the measurement for each copy is chosen: \textbf{fixed, randomized}, and \textbf{adaptive}, in the order of increasing generality and power, which can lead to lower copy complexity at the cost of increasing difficulty in implementation. We introduce the formal definitions of these measurement schemes in \cref{sec:setup}.

When \emph{no further measurement restrictions} are present, previous works have established the optimal copy complexity for quantum state certification for the three types of unentangled measurements. For randomized non-adaptive and adaptive schemes, the optimal copy complexity is $\Theta(\dims^{3/2}/\eps^2)$~\cite{BubeckC020,Chen0HL22}. However, for deterministic schemes, the optimal copy complexity is $\ns=\Theta(\dims^2/\eps^2)$ \cite{Yu21sample, liu2024role}. In short, randomness is necessary to achieve sublinear copy complexity.

The unentangled measurement schemes proposed in the previous works allow for arbitrary measurements to be performed on each copy. 
However, we still may not be able to apply arbitrary measurements to each copy and can only choose each measurement from a subset $\povmset$. 
Due to physical restrictions, the measurement instrument may only have a finite number of outcomes. For example, a single-photon detector may only detect a finite number of photons in a fixed timeframe due to limited temporal resolution. 
Another notable example is the \emph{Pauli observables}, a class of 2-outcome measurements associated with the Pauli operators which are simple to implement on quantum computers. Thus, it is practically reasonable to consider measurements with only a limited number of outcomes. 

In this work, we study quantum state certification with restricted unentangled measurements, particularly focusing on the case when the number of potential outcomes of measurements is limited to at most $\ab$. This is an important step in understanding quantum property testing with restricted measurements. We extend the lower bound techniques in \cite{liu2024role} to restricted measurements and design optimal algorithms for finite outcome measurements.

\subsection{Problem setup}
\label{sec:setup}

\subsubsection{Quantum states and measurements} We first introduce notations and basics of quantum computing.
We use the Dirac notation $\qbit{\psi}$ to denote a vector in $\C^{\dims}$. $\qadjoint{\psi}\eqdef(\qbit{\psi})^\dagger$ is the conjugate transpose, which is a row vector. $\qdotprod{\psi}{\phi}$ is the Hibert-Schmidt inner product of $\qbit{\psi}$ and $\qbit{\phi}$. We denote the set of all $\dims\times\dims$ Hermitian matrices by $\Herm{\dims}$. A $\dims$-dimensional quantum system is described by a positive-semidefinite Hermitian matrix $\rho\in\Herm{\dims}$ with $\Tr[\rho]=1$. We assume $\dims=2^{\nqubits}$ where $\nqubits$ is the number of qubits in the system.

Measurements are formulated as \emph{positive operator-valued measure} (POVM). Let $\mathcal{X}$ be an outcome set. Then a POVM $\POVM=\{M_x\}_{x\in \mathcal{X}}$, where $M_x$ is p.s.d. and $\sum_{x\in \mathcal{X}}M_x=\eye_\dims$. Let $X$ be the outcome of measuring $\rho$ with $\POVM$, then the probability observing $x\in\mathcal{X}$ is given by the \emph{Born's rule},
\[
\probaOf{X=x}=\Tr[\rho M_x].
\]
For measurements with at most $\ab$ outcomes, $\mathcal{X}=[\ab]\eqdef\{1, \ldots, \ab\}$ is finite\footnote{The set $\mathcal{X}$ can be countably or uncountably infinite. In both cases, POVMs and Born's rule can be generalized.}, or equivalently $\log_2\ab$ bits of classical memory.

\subsubsection{Quantum state testing with unentangled measurements}
Given $\ns$ copies of an unknown state $\rho\in\Herm{\dims}$, we wish to design
\begin{itemize}
    \item $\ns$ POVMs $\POVM^\ns=(\POVM_1, \ldots, \POVM_\ns)$, where $\POVM_i=\{M_x\supparen{i}\}$ is applied to the $i$th copy and produce an outcome $x_i$, which follow a discrete distribution $\p_{\rho}\supparen{i}=[\p_{\rho}\supparen{i}(1), \ldots, \p_{\rho}\supparen{i}(\ab)]$ where $\p_{\rho}\supparen{i}(x)=\Tr[M_x\supparen{i}\rho]$ is defined by the Born's rule. Define $\bx=(x_1, \ldots, x_{\ns})\in[\ab]^\ns$ to be the collection of all outomces.
    \item A tester $T:[\ab]^\ns\mapsto\{\isthestate, \notthestate\}$ that predicts whether $\rho=\qkn$ or $\tracenorm{\rho-\qkn}>\eps$ based on the outcomes $\bx$. We wish to guarantee that the answer is correct with probability at least 2/3, 
    \[
\Pr_{\rho = \qkn}(T(\bx) = \isthestate) \ge \frac23,\quad  \text{and} \inf_{\rho:\tracenorm{\rho-\qkn}>\eps}\Pr(T(\bx) = \notthestate) \ge \frac 23.
\]
\end{itemize}

When $\qkn=\qmm\eqdef\eye_\dims/\dims$, the problem is called \emph{mixedness testing}. We wish to characterize the \emph{copy complexity}, the minimum $\ns$ such that there exists a tester and measurement scheme that achieves the desired $2/3$ success probability for all $\qkn$.

Given $\POVM^\ns$, we define $\bfP_{\rho}$ as the distribution of all outcomes $\bx$ when the state is $\rho$.
We formulate the three unentangled measurement schemes and their outcome distributions $\bfP_{\rho}$ and discuss their advantages and drawbacks.
\begin{description}
\item{\textbf{Fixed.}} $\POVM_i$'s are determined \emph{before} receiving copies of $\rho$, and $\bfP_\rho=\otimes_{i=1}^{\ns}\p_\rho\supparen{i}$ is a product distribution. 

A key advantage of such protocols is that the same set of measurements can be used for multiple repetitions of the testing problem. Moreover, there is no latency since the measurements are not designed after the states are made available, which is a drawback of the following protocols.

\item{\textbf{Randomized non-adaptive.}} There is a random seed $R\sim\mathcal{R}$, and each measurement $\POVM_i=\POVM_i(R)$ is a function of $R$. \emph{Conditioned on $R$}$, \bfP_\rho$ is a product distribution.

Compared to fixed measurements, the drawback is that whenever we wish to run the task again, 
we need to select a new set of measurements using a new instance of the common randomness\footnote{If the set of measurements is finite, we can prepare all measurements beforehand and sample with classical randomness. However, this could still be difficult if the set is very large.}. 
Preparing the new measurements can be costly, and in some cases, the random sampling process may not even be feasible.

\item{\textbf{Adaptive.}} The $i$th measurement depends on a random seed $R\sim\mathcal{R}$ and all previous outcomes, $\POVM_i=\POVM_i(x_1, \ldots, x_{i-1}, R)$. In general, $\bfP_{\rho}$ is not a product distribution.

A primary drawback of this scheme is the latency and complications associated with designing measurements one after another. We note that since the first measurement outcome can be used as a source of common randomness, adaptive schemes are inherently randomized.

\end{description}

\subsection{Prior results}
When there is no other constraint on measurements besides the measurement scheme, the copy complexity of quantum state certification is well understood.
For entangled measurements, \cite{ODonnellW15} showed that $\ns=\Theta(\dims/\eps^2)$ is necessary and sufficient for mixedness testing. \cite{BadescuO019} further showed that $\ns=O(\dims/\eps^2)$ is sufficient for testing against all states $\qkn$.
\cite{BubeckC020, ChenLO22instance} showed that for randomized non-adaptive measurements, the copy complexity is $\Theta(\dims^{3/2}/\eps^2)$. \cite{Chen0HL22} further showed that adaptivity does not help improve the copy complexity. The role of randomness in quantum state certification was very recently initiated. \cite{Yu21sample} showed that $\ns=O(\dims^2/\eps^2)$ is sufficient. A recent work by~\cite{liu2024role} showed that $\ns=\Omega(\dims^2/\eps^2)$ is also necessary, thus demonstrating a separation with randomized measurements. 

To our knowledge, the problem with measurement restrictions beyond measurement schemes has not been extensively studied. We are only aware of \cite{Yu2023almost} which showed that with randomized and non-adaptive Pauli observables, $\ns={\Theta}(\dims^2\polylog(\dims)/\eps^2)$ copies are necessary and sufficient for state certification. However, the lower bound only holds for Pauli observables and is not general enough even for arbitrary two-outcome measurements. Thus, a huge mystery remains for measurements with a finite number of outcomes in the simplest non-adaptive setting, where we do not know whether the copy complexity can be sublinear, let alone the exact dependency on the number of outcomes. It is also unknown whether randomness helps, even for the simplest case of Pauli observables. 

\subsection{New results}
\label{sec:new-results}
We prove copy complexity bounds for both randomized and fixed measurements for the problem of unentangled quantum state certification when each measurement is restricted to have at most $\ab$ outcomes. 

\paragraph{Randomized measurements.}
We completely characterize the copy complexity for randomized non-adaptive finite-outcome measurements in~\cref{thm:random-k}.
\begin{theorem}
\label{thm:random-k}
    With randomized non-adaptive $k$-outcome measurements where $\ab\ge 2$,  
    \[
    \ns=\bigTheta{\frac{\dims^2}{\eps^2\sqrt{\min\{\ab, \dims\}}}}
    \]
    copies are necessary and sufficient to test whether $\rho=\qkn$ or $\tracenorm{\rho-\qkn}>\eps$ with probability at least $2/3$. 
\end{theorem}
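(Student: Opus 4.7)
I plan to prove the upper and lower bounds by combining (i) a chi-squared reduction to classical identity testing using coarse-grained Haar-random basis measurements, and (ii) the Ingster--Suslina method applied to a Paninski-style random perturbation, controlled by a spectral bound on a \emph{measurement information operator} associated with each POVM.

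\emph{Upper bound.} Sample $U\sim \Haar{\dims}$. For $\ab\ge \dims$, apply the $\dims$-outcome basis measurement $\{U\qproj{j}U^\dagger\}_{j\in[\dims]}$; for $\ab<\dims$, partition $[\dims]$ into $\ab$ groups $G_1,\ldots,G_\ab$ of nearly equal size and apply $\POVM=\{U P_{G_x} U^\dagger\}_{x\in[\ab]}$ with $P_{G_x}=\sum_{j\in G_x}\qproj{j}$. A direct Weingarten second-moment computation should yield
\begin{equation*}
\EE_U\!\left[\chi^2\bigl(\mathbf{p}_\rho^{\POVM},\mathbf{p}_{\qkn}^{\POVM}\bigr)\right] \;\gtrsim\; \frac{\min\{\ab,\dims\}}{\dims}\,\hsnorm{\rho-\qkn}^2 \;\ge\; \frac{\min\{\ab,\dims\}\,\eps^2}{\dims^2},
\end{equation*}
using $\hsnorm{\rho-\qkn}^2\ge \tracenorm{\rho-\qkn}^2/\dims \ge \eps^2/\dims$. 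Since $\mathbf{p}_{\qkn}^{\POVM}$ is (approximately) uniform on $\min\{\ab,\dims\}$ bins, classical chi-squared identity testing has sample complexity $O(\sqrt{\min\{\ab,\dims\}}/\chi^2)$, delivering $\ns = O(\dims^2/(\eps^2\sqrt{\min\{\ab,\dims\}}))$.

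\emph{Lower bound.} Fix $\qkn=\qmm$ and choose an orthonormal basis $\{B_\alpha\}_{\alpha=1}^{\dims^2-1}$ of traceless operators in $\Herm{\dims}$ (e.g., generalized Gell-Mann). For uniform signs $Z=(z_\alpha)\in\{\pm 1\}^{\dims^2-1}$ set $\Delta_Z=\sum_\alpha z_\alpha B_\alpha$ and $\rho_Z=\qmm+(c\eps/\dims^{3/2})\Delta_Z$; standard Wigner-type bounds give $\rho_Z\succeq 0$ and $\tracenorm{\rho_Z-\qmm}\ge \eps$ with probability $1-o(1)$ for an appropriate constant $c$. For any randomized non-adaptive $\ab$-outcome scheme $\POVM^\ns(R)$, Ingster--Suslina gives
\begin{equation*}
1+\chi^2\!\left(\EE_Z[\bfP_{\rho_Z}],\bfP_{\qmm}\right) = \EE_R\EE_{Z,Z'}\!\left[\prod_{i=1}^\ns(1+\delta_i(Z,Z',R))\right],
\end{equation*}
where $\delta_i(Z,Z',R)=(c^2\eps^2/\dims^2)\,\hdotprod{\Delta_Z}{K_i\Delta_{Z'}}$ and $K_i$ is the psd operator on $\Herm{\dims}$ defined by $\hdotprod{A}{K_iB}=\sum_x \Tr[A M_x^{(i)}]\Tr[M_x^{(i)} B]/\Tr[M_x^{(i)}]$. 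The central lemma I would establish is the spectral bound $\rank K_i\le \ab$, $\opnorm{K_i}\le 1$, and $\Tr K_i\le \min\{\ab,\dims\}$, hence $\hsnorm{K_i}^2\le \min\{\ab,\dims\}$. Taking expectations over the signs using $\EE[z_\alpha z_{\alpha'}]=\delta_{\alpha\alpha'}$ gives $\EE[(\sum_i\delta_i)^2]=(c^4\eps^4/\dims^4)\hsnorm{\sum_i K_i}^2\le c^4\eps^4\,\ns^2\min\{\ab,\dims\}/\dims^4$, and a standard variance-to-product argument ($\prod_i(1+\delta_i)\le e^{\sum_i\delta_i}$ together with second-moment control) forces $\chi^2\le 1/3$ only when $\ns = \Omega(\dims^2/(\eps^2\sqrt{\min\{\ab,\dims\}}))$, matching the upper bound.

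The main technical obstacle is the spectral bound on $K_i$. The operator-norm bound $\opnorm{K_i}\le 1$ follows from Cauchy--Schwarz via $\Tr[A M_x]^2\le \Tr[M_x A^2]\cdot\Tr[M_x]$, summed over $x$ using $\sum_x M_x=\eye_\dims$. The trace bound combines (a) the identity $\Tr K_i=\sum_x \Tr[(M_x^{(i)})^2]/\Tr[M_x^{(i)}]-1$, (b) the per-term control $\Tr[M_x^2]/\Tr[M_x]\le \min(\opnorm{M_x},\Tr[M_x])$, and (c) the completeness relation $\sum_x \Tr M_x=\dims$; together they yield $\Tr K_i\le \min\{\ab,\dims\}$. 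This interplay between the $\ab$-outcome restriction and the completeness relation is the single-measurement manifestation of the measurement information channel and is precisely where the $\sqrt{\min\{\ab,\dims\}}$ improvement over the fixed-scheme $\Omega(\dims^2/\eps^2)$ bound is secured; a secondary but classical task is to verify the random matrix estimate $\tracenorm{\Delta_Z}=\Theta(\dims^{3/2})$ with high probability for the chosen basis.
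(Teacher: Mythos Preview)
Your proposal follows essentially the same route as the paper: Haar-random coarse-grained basis measurements reduced to classical identity testing for the upper bound, and a Paninski-type sign perturbation analyzed via the Ingster--Suslina decoupling together with the spectral bounds $\opnorm{K_i}\le 1$, $\Tr K_i\le\min\{\ab,\dims\}$ on the measurement information operator for the lower bound. Your $K_i$ is precisely the paper's measurement information channel restricted to the traceless subspace, and your Cauchy--Schwarz/completeness arguments for its norms are exactly the paper's Lemmas~4.3 and~4.4.

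Two places in the upper-bound sketch need tightening. First, the reference distribution $\mathbf{p}_{\qkn}^{\POVM}$ is only (approximately) uniform when $\qkn=\qmm$; for a general $\qkn$ some bins can be arbitrarily small and the chi-squared divergence is not controlled. The paper sidesteps this by working in $\ell_2$: it shows $\normtwo{\mathbf{p}_\rho^U}\le 10/\sqrt{\ab}$ with high probability for \emph{any} state $\rho$, and then invokes the $\ell_2$ identity tester whose complexity depends on $\min\{\normtwo{\mathbf{p}},\normtwo{\mathbf{q}}\}$ rather than on a uniform reference. Second, a lower bound on $\EE_U[\chi^2]$ (or on $\EE_U[\normtwo{\mathbf{p}_\rho^U-\mathbf{p}_{\qkn}^U}^2]$) does not by itself yield a tester: you need the separation to be large for a \emph{typical} $U$, which requires a fourth-moment Weingarten computation followed by Paley--Zygmund (this is the content of the paper's Lemma~6.1 and its appendix). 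Your ``second-moment computation'' as written is not enough for this step. The lower-bound side is fine as sketched; the ``variance-to-product'' step is exactly the Rademacher-chaos MGF bound the paper uses, and your $\EE_R$ should simply be replaced by a supremum over fixed schemes in the min-max formulation, which your uniform-in-$\POVM$ spectral bound already handles.
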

Setting $\ab\ge\dims$ recovers the $\Theta(\dims^{3/2}/\eps^2)$ bound for general randomized schemes in \cite{BubeckC020}\footnote{The upper bound was corrected by \cite{ChenLO22instance}.}.

\paragraph{Fixed measurements.} We further investigate whether randomization is necessary to achieve the tight bounds above and prove a tight bound for fixed measurements in~\cref{thm:fixed-k-lower}.
\begin{theorem}
\label{thm:fixed-k-lower}
    For all $\ab\ge 2$,  with fixed unentangled 
    $k$-outcome measurements, 
    \[
    \ns=\bigTheta{\frac{\dims^3}{\eps^2{\min\{\ab, \dims\}}}}
    \]
    copies are necessary and sufficient to test whether $\rho=\qkn$ or $\tracenorm{\rho-\qkn}>\eps$ with probability at least $2/3$.
\end{theorem}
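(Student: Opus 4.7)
The plan is to prove the matching upper and lower bounds of $\bigTheta{\dims^3/(\varepsilon^2\min\{k,\dims\})}$ separately, extending the \emph{measurement information channel} machinery that underlies \cref{thm:random-k} to handle the more restrictive fixed non-adaptive setting.

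For the upper bound, I would construct $L = \lceil(\dims^2-1)/(k-1)\rceil$ fixed $k$-outcome POVMs that are jointly tomographically complete: partition a Hermitian basis of $\Herm{\dims}$ (shifted and rescaled Pauli operators are a convenient choice) into groups of $k-1$ operators, and define one $k$-outcome POVM per group whose $k-1$ nontrivial outcomes simultaneously estimate that group's operator expectations; when $k\ge\dims$ a single rank-one tight-frame POVM on $\dims^2$ outcomes already suffices. Splitting the $\ns$ copies evenly over the $L$ POVMs with $m=\ns/L$ each and aggregating a chi-square identity test against the $\qkn$-induced outcome distributions, the analysis combines a per-POVM estimate $\chi^2(\p_\rho\supparen{i}\|\p_{\qkn}\supparen{i})\gtrsim k\cdot\hsnorm{\Pi_i(\rho-\qkn)}^2$ with the basis identity $\sum_i\hsnorm{\Pi_i(\rho-\qkn)}^2 \gtrsim \hsnorm{\rho-\qkn}^2\ge \tracenorm{\rho-\qkn}^2/\dims$, balancing signal against the null-hypothesis variance $O(\sqrt{Lk})$ to conclude $\ns=O(\dims^3/(k\varepsilon^2))$.

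For the lower bound, I would draw a random perturbation $\Delta$ of $\qkn$ on traceless Hermitian matrices from a prior (e.g.\ a Gaussian Wigner ensemble or a random signed projector) calibrated so that $\tracenorm{(\qkn+\Delta)-\qkn}\gtrsim\varepsilon$ with constant probability, and bound the expected $\chi^2$ divergence between $\bfP_{\qkn+\Delta}$ and $\bfP_{\qkn}$ for any fixed non-adaptive $\POVM^\ns=(\POVM_1,\dots,\POVM_\ns)$ with $\POVM_i=\{M_x\supparen{i}\}_{x\in[k]}$. The $\chi^2$ decomposes per copy as
\[
\sum_{x=1}^{k}\frac{\bigl(\Tr[M_x\supparen{i}\Delta]\bigr)^2}{\Tr[M_x\supparen{i}\qkn]},
\]
and using the trace constraint $\sum_x M_x\supparen{i}=\eye_\dims$ together with the fact that only $k$ outcomes contribute, the prior expectation of each term should be bounded by $O(k\varepsilon^2/\dims^3)$; summing over $i$ and invoking Le Cam then forces $\ns=\Omega(\dims^3/(k\varepsilon^2))$. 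Taking $k\wedge\dims$ recovers the $\dims^2/\varepsilon^2$ lower bound of~\cite{liu2024role} in the $k\ge\dims$ regime.

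The main obstacle will be securing the \emph{linear} $1/k$ dependence in the fixed lower bound rather than the weaker $1/\sqrt{k}$ that appears in the randomized bound of \cref{thm:random-k}. In the randomized setting, averaging the posterior over the shared seed produces a second-moment cancellation that yields the $\sqrt{k}$ improvement; no such averaging is available for a fixed scheme, and the hardest perturbation can preferentially place mass in the orthogonal complement of the joint span of $\{M_x\supparen{i}\}_{i,x}$. Making this precise will likely require tailoring the prior's covariance to the measurement's spectral geometry---for instance, taking $\Delta$ to be a random projector of rank adapted to $\POVM^\ns$---and controlling higher moments of $\Tr[M_x\supparen{i}\Delta]$ sharply enough that the per-POVM contribution is genuinely $O(k/\dims^3)$ rather than the naive $O(\sqrt{k}/\dims^2)$ obtained by a black-box reduction to the randomized argument; this is precisely where the fixed-versus-randomized separation will be formalized.
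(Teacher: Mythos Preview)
Your high-level framework for the lower bound---Le~Cam via a $\chi^2$ bound with a random perturbation prior---matches the paper, and you correctly diagnose that the crux is upgrading $1/\sqrt{k}$ to $1/k$ by making the prior measurement-dependent. However, the concrete mechanism you suggest (``random projector of rank adapted to $\POVM^\ns$'') is not what works, and the claim that an isotropic prior already yields a per-copy contribution of $O(k\eps^2/\dims^3)$ is incorrect. For an isotropic prior, the decoupled $\chi^2$ bound of the form $\E_{\sigma,\sigma'}\exp\{\ns\dims\,\vadj{\Delta_\sigma}\avgChoi\vvec{\Delta_{\sigma'}}\}$ (Lemma~\ref{lem:decoupled-chi-square}) is governed by the Hilbert--Schmidt norm $\hsnorm{\avgChoi}\le\sqrt{k}$, which recovers only the randomized $\Omega(\dims^2/(\sqrt{k}\eps^2))$ bound. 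The paper's fix is different and sharper: choose the perturbation basis $V_1,\dots,V_\ell$ (with $\ell=\dims^2/2$) to be the \emph{eigenvectors of the average MIC $\avgLuders$ with the smallest eigenvalues}. Then $V^\dagger\avgChoi V=\diag(\lambda_1,\dots,\lambda_\ell)$, and since the top half of the spectrum already absorbs most of $\tracenorm{\avgLuders}$, one has $\hsnorm{V^\dagger\avgChoi V}^2\le \ell\lambda_\ell^2\le O(\tracenorm{\avgLuders}^2/\dims^2)\le O(k^2/\dims^2)$, which is exactly the extra $\dims^2/k$ gain over the isotropic case. This eigenbasis alignment is the missing idea in your plan.

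On the upper bound, your direct construction (partition a Hermitian basis into $L\approx\dims^2/k$ groups, one $k$-outcome POVM per group) is a plausible generalization of the Pauli scheme, but you have not justified the key inequality $\chi^2(\p_\rho\supparen{i}\|\p_{\qkn}\supparen{i})\gtrsim k\,\hsnorm{\Pi_i(\rho-\qkn)}^2$ or the variance claim. The paper takes a different and cleaner route: for $k=\dims$ it uses the maximal-MUB algorithm of~\cite{Yu21sample} (giving $O(\dims^2/\eps^2)$), and for $k<\dims$ it composes each $\dims$-outcome MUB measurement with the $\log k$-bit distributed simulation protocol of~\cite{ACT:19:IT2}, incurring an $O(\dims/k)$ blow-up per simulated sample and hence $O(\dims^3/(k\eps^2))$ overall. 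This avoids having to analyze a new test statistic from scratch.
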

Setting $\ab\ge \dims$ recovers the $\Theta(\dims^2/\eps^2)$ bound for general fixed measurement schemes in \cite{Yu21sample,liu2024role}.

Observe a large separation between the bounds in \cref{thm:random-k,thm:fixed-k-lower}. 
% Since $\min\{\ab, \dims\}\le \dims$, we must have
% \[
% \frac{\dims^3}{\eps^2{\min\{\ab, \dims\}}}= \frac{\dims}{\sqrt{\min\{\ab, \dims\}}}\cdot \frac{\dims^2}{\eps^2\sqrt{\min\{\ab, \dims\}}}\ge \sqrt{\dims}\cdot\frac{\dims^2}{\eps^2\sqrt{\min\{\ab, \dims\}}},
% \]
% and thus there is at least $\Theta(\sqrt{\dims})$ separation between fixed and randomized finite-outcome measurements. 
To see it more clearly for each $\ab$, we set $k=d^\alpha$ where $\alpha\in(0, 1)$. With randomness, the copy complexity in \cref{thm:random-k} is
\[
\ns=\Theta(d^{2-\alpha/2}/\eps^2),
\]
which is sublinear (in the dimensionality of quantum states). Without randomness, \cref{thm:fixed-k-lower} leads to a bound of 
\[
\ns=\Theta(d^{3-\alpha}/\eps^2),
\]
which is superlinear. Thus, the copy complexity changes from sublinear to superlinear only due to the lack of randomness. The ratio between fixed and randomized copy complexity is $\Theta(\dims^{1-\alpha/2}/\eps^2)$, which is between $\sqrt{\dims}$ and $\dims$.
This result quantitatively demonstrates the power of randomness in quantum state testing.

For $\ab=2$ in \cref{thm:fixed-k-lower}, we design an optimal algorithm with fixed Pauli observables that uses $\ns=O(\dims^3/\eps^2)$ copies, thereby completely characterizing the copy complexity for fixed Pauli observables and 2-outcome measurements. 
\begin{theorem}
\label{thm:fixed-pauli-upper-lower}
    Using fixed Pauli observables or 2-outcome measurements, $\ns=\Theta(\dims^3/\eps^2)$ copies are necessary and sufficient to test whether $\rho=\qkn$ or $\tracenorm{\rho-\qkn}>\eps$ with probability at least $2/3$.
\end{theorem}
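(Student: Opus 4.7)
The lower bound is immediate from \cref{thm:fixed-k-lower} specialized to $k=2$, which gives $\ns=\Omega(\dims^3/\eps^2)$ for any fixed $2$-outcome scheme; because each non-identity Pauli has $\pm1$ eigenvalues and thus implements a valid $2$-outcome POVM, the Pauli class is a subset of the $2$-outcome class, so a Pauli-restricted tester inherits the same lower bound. Consequently, the entire task reduces to constructing a fixed Pauli scheme using $\ns=O(\dims^3/\eps^2)$ copies; this automatically also gives the $2$-outcome upper bound since Paulis are themselves $2$-outcome measurements.

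The plan is $\ell_2$-distance testing in the Pauli coefficient basis. I would expand $\rho=\frac{1}{\dims}\sum_{P\in\pauliObsSet}\alpha_P(\rho)P$ with $\alpha_P(\rho)=\Tr[P\rho]$, use the orthogonality $\Tr[PQ]=\dims\cdot\mathbf{1}\{P=Q\}$ to get $\hsnorm{\rho-\qkn}^2=\frac{1}{\dims}\sum_P(\alpha_P(\rho)-\alpha_P(\qkn))^2$, and combine with $\tracenorm{\cdot}\le\sqrt{\dims}\,\hsnorm{\cdot}$ to reformulate the alternative hypothesis as
\[
\sum_{P\in\pauliObsSet}\bigl(\alpha_P(\rho)-\alpha_P(\qkn)\bigr)^2 \;\ge\; \eps^2,
\]
with the value $0$ under the null. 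The identity Pauli contributes nothing because $\alpha_I(\rho)=1$ for every state, so this is an $\ell_2^2$-testing problem over a $(\dims^2-1)$-dimensional coefficient vector against a known target, using $\pm 1$-valued Pauli measurement outcomes $X_i^P$ with $\EE[X_i^P]=\alpha_P(\rho)$.

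The algorithm I would use allocates $m=\Theta(\dims/\eps^2)$ copies to each non-identity Pauli $P$ and forms the unbiased $U$-statistic
\[
T_P \;=\; \frac{1}{m(m-1)}\sum_{i\ne j}\bigl(X_i^P-\alpha_P(\qkn)\bigr)\bigl(X_j^P-\alpha_P(\qkn)\bigr),
\]
so $\EE[T_P]=(\alpha_P(\rho)-\alpha_P(\qkn))^2$ and $\Var[T_P]=O(1/m^2)$ under the null. Because disjoint copies are used across different Paulis, the $T_P$'s are independent, so the aggregate $T=\sum_P T_P$ has null mean $0$ and null variance $O(\dims^2/m^2)$, while $\EE[T]\ge\eps^2$ under the alternative. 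Thresholding $T$ at $\eps^2/2$ separates the two hypotheses with constant probability once $\eps^2\gtrsim \dims/m$, and the total copy count is $\ns=(\dims^2-1)m=O(\dims^3/\eps^2)$.

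The step requiring care is bounding $\Var[T]$ under the alternative, where each $T_P$ picks up an extra contribution of order $\delta_P^2/m$ with $\delta_P=\alpha_P(\rho)-\alpha_P(\qkn)$, summing to $\Var[T]=O(\|\delta\|_2^2/m+\dims^2/m^2)$. Comparing the signal $\|\delta\|_2^2\ge\eps^2$ to the alternative-side standard deviation $\sqrt{\|\delta\|_2^2/m}$ reduces to $m\gtrsim 1/\|\delta\|_2^2\le 1/\eps^2$, which is dominated by the null-side requirement $m\gtrsim \dims/\eps^2$; both regimes are therefore handled by the same choice of $m$.
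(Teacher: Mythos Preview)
Your proposal is correct and follows essentially the same route as the paper: the lower bound is taken from \cref{thm:fixed-k-lower} at $k=2$, and the upper bound allocates $\Theta(\dims/\eps^2)$ copies to each non-identity Pauli and performs $\ell_2$ testing on the Pauli-coefficient vector via the identity $\hsnorm{\rho-\qkn}^2=\tfrac{1}{\dims}\sum_P(\alpha_P(\rho)-\alpha_P(\qkn))^2$. The only difference is packaging: the paper groups the outcomes into i.i.d.\ samples from a product Bernoulli distribution and invokes the black-box tester of \cite{Canonne2020testing} (\cref{thm:prod-bern-testing-l2}), whereas you build the unbiased $U$-statistic and carry out the variance analysis directly---the underlying statistic and sample allocation are the same.
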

Thus there is a strict separation with the bound of $\ns=\tilde{\Theta}(\dims^2/\eps^2)$ for random Pauli observables in~\cite{Yu2023almost}. 

\subsection{Related work}

\paragraph{Quantum state certification} \cite{BadescuO019,ChenLO22instance, Yu21sample,Yu2023almost} proposed algorithms that also apply to closeness testing. \cite{ChenLO22instance,Chen0HL22} went beyond worst-case bounds and derived near-optimal bounds for unentangled measurements that decrease when $\qkn$ is approximately low rank. While instance-optimal bounds have very important practical implications, we note that for finite-outcome measurements even the worst-case bounds were not known. 

\paragraph{Related quantum state inference problems} \emph{Quantum tomography}~\citep{ODonnellW16, wright2016learn, ODonnellW17,guctua2020fast,flamian2023tomography} aims learn the full state description of the unknown state $\rho$. Tomography algorithms can be applied to state certification, though with far more samples than needed. For full rank $\rho$, to estimate up to $\eps$ in trace distance, $\Theta(\dims^3/\eps^2)$ copies are necessary and sufficient for unentangled measurements \citep{kueng2017low,HaahHJWY17}, even with adaptivity~\cite{chen2023does}. For non-adaptive Pauli observables, the bound is $\Theta(\dims^4/\eps^2)$~\citep{flamia2011direct,lowe2022lower}. In particular,~\cite {lowe2022lower} developed a lower bound technique for measurements with constant number of outcomes, but their result is not explicitly stated for general $\ab$.

There are other closely related problems. \cite{ogawa2000strong, brandao2020adversarial,regula2023postselected} studied hypothesis testing where the goal is to distinguish between two known states. Hypothesis selection~\citep{BadescuO21, fawzi:hal-04107265} aims to determine $\rho$ from a finite set of hypothesis sets. Shadow tomography~\citep{Aaronson20,huang2020predicting,BrandaoKLLSW19} aims to learn the statistic of $\rho$ over a finite set of observables.

\paragraph{Classical distribution testing}
Quantum state certification is a generalization of the classical problem of testing discrete distributions, where the goal is to decide whether samples from an unknown distribution are from a desired distribution. Entangled measurement is analogous to the centralized setting where all samples are accessible in one place, while unentangled measurement is similar to the distributed case where samples are over multiple devices and not directly accessible. 

Centralized testing of discrete distributions has been well-studied since~\cite{BatuFFKRW01, Paninski08}. Recently, various works studied information-constrained testing in the distributed setting.~\cite{duchi2013local, ACFT:19:IT3} focused on privacy-preserving information.~\cite{barnes2019lower, ACT:19:IT2} considered communication-constrained inference where each device can only send limited bits for each sample. This is in some sense the classical equivalent of quantum state certification using unentangled finite-outcome measurements. Our framework is qualitatively similar to~\cite{AcharyaCT19,ACLST22iiuic} which developed a unified lower bound framework for information-constrained testing and learning of distributions. We refer the readers to~\cite{canonne2022topics} for a survey of the above topics.

\paragraph{Organization.} In~\cref{sec:techniques} we summarize our key technical contributions. In~\cref{sec:preliminaries} we describe additional technical preliminaries. In~\cref{sec:MIC} we introduce important properties of the measurement information channel. In~\cref{sec:lower} we prove the key lower bound of~\cref{thm:lower-channel}. In~\cref{sec:rand-k-upper} we introduce the algorithm using randomized $\ab$-outcome measurements. We introduce the algorithms using fixed Pauli observables  in~\cref{sec:fixed-Pauli-upper} and general $\ab$-outcome measurements in \cref{sec:fixed-d-outcome}.

\section{Our techniques}
\label{sec:techniques}
\subsection{Lower bound via measurement information channel}
One of our main contributions is a unified lower bound framework for restricted unentangled measurements which relates the hardness of testing to a novel \emph{measurement information channel (MIC)}.

\begin{definition}
\label{def:mic}
    Let $\POVM=\{M_x\}_{x}$ be a POVM. The \emph{measurement information channel (MIC)} $\Luders_{\POVM}:\C^{\dims\times\dims}\mapsto\C^{\dims\times\dims}$ and its matrix representation $\Choi_{\POVM}$ are defined as
\begin{equation}
    \Luders_{\POVM}(A)\eqdef\sum_{x}M_x\frac{\Tr[M_xA]}{\Tr[M_x]}, \quad\Choi_{\POVM}\eqdef \sum_{x}\frac{\vvec{M_x}\vadj{M_x}}{\Tr[M_x]} \in \C^{\dims^2\times\dims^2},
\end{equation}
where $\vvec{M_x}=\VecOp(M_x)$ and $\vadj{M_x}=\VecOp(M_x)^{\dagger}$.
\end{definition}
The channel maps a quantum state to another quantum state. Intuitively, it characterizes the similarity of the outcome distributions after applying $\POVM$ to $\rho$ and the maximally mixed state $\qmm$. 
The ability to test against the maximally mixed state is described by the eigenvalues of the channel. 
In~\cref{thm:lower-channel}, we show a unified lower bound for non-adaptive measurements. 
Depending on the availability of randomness, the bound depends on different norms of $\Luders_{\POVM}$.  We summarize~\cref{thm:lower-channel} and the existing and new results for non-adaptive state certification in~\cref{tab:results}. 
\begin{theorem}
\label{thm:lower-channel}
    Let $\povmset$ be a set of allowable POVMs on a $\dims$-dimensional system. Let $\ns_R$ and $\ns_F$ be the copy complexity of mixedness testing for randomized and fixed unentangled measurements from $\povmset$ respectively. Then for $\dims\ge 16$ and $\eps\le 1/200$,
    \[
    \ns_R=\bigOmega{\frac{\dims^2}{\eps^2\sup_{\POVM\in\povmset }\hsnorm{\Luders_{\POVM}}}},\quad \ns_F=\bigOmega{\frac{\dims^2}{\eps^2}\cdot \frac{\dims}{\sup_{\POVM\in\povmset }\tracenorm{\Luders_{\POVM}}}},
    \]
    where $\hsnorm{\Luders_{\POVM}}$ is the Hilbert-Schmidt/Frobenius norm and  $\tracenorm{\Luders_{\POVM}}$ is  the trace norm, which are $\ell_2$ and $\ell_1$ norms of the spectrum respectively.
\end{theorem}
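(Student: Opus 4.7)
The plan is a quantum adaptation of Ingster's $\chi^2$-method combined with Le Cam's two-point argument. I would consider hard instances of the form $\rho_\ptb = \qmm + \eps\Delta_\ptb$, where $\Delta_\ptb$ is a random traceless Hermitian perturbation --- a suitably scaled Rademacher combination $\Delta_\ptb = c\sum_j \ptb_j A_j$ built from an orthonormal Hermitian basis $\{A_j\}$ of the traceless subspace --- with the scaling $c$ chosen so that $\rho_\ptb \succeq 0$ and $\tracenorm{\rho_\ptb-\qmm}>\eps$ with constant probability. Le Cam's two-point method then reduces the lower bound to showing that $\chi^2(\E_\ptb \bfP_{\rho_\ptb}\,\|\,\bfP_{\qmm}) = O(1)$ unless $\ns$ meets the claimed quantities.

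The bridge between the $\chi^2$ and the MIC is the single-measurement identity
\[
H_{\ptb,\ptb'}(\POVM) \eqdef \sum_x \frac{\Tr[M_x(\rho_\ptb-\qmm)]\Tr[M_x(\rho_{\ptb'}-\qmm)]}{\Tr[M_x]/\dims} = \dims\,\Tr[\Delta_\ptb\,\Luders_\POVM(\Delta_{\ptb'})],
\]
which follows from Born's rule together with the definition of $\Luders_\POVM$. Because the scheme is non-adaptive, conditioning on the common randomness $R$ makes $\bfP_{\rho_\ptb}$ a product distribution; Ingster's factorization and $1+x\le e^x$ then give
\[
\chi^2(\E_\ptb\bfP_{\rho_\ptb}\,\|\,\bfP_{\qmm}) \le \E_R\,\E_{\ptb,\ptb'}\exp\Bigl(\sum_{i=1}^\ns H_{\ptb,\ptb'}(\POVM_i(R))\Bigr)-1.
\]

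The two regimes differ in how this exponent is controlled. For randomized schemes, conditioning on $R$ fixes the measurements before $\ptb$ is drawn, so $\sum_i H_{\ptb,\ptb'}(\POVM_i(R))$ is a mean-zero bilinear Rademacher chaos whose variance is governed by $\sum_i \hsnorm{\bar\Luders_{\POVM_i(R)}}^2$, where $\bar\Luders_\POVM$ denotes the restriction of $\Luders_\POVM$ to the traceless Hermitian subspace. A Hanson--Wright-style MGF estimate then turns the subadditive bound $\sum_i \hsnorm{\bar\Luders_{\POVM_i(R)}}^2 \le \ns \sup_\POVM \hsnorm{\Luders_\POVM}^2$ into the advertised randomized lower bound after requiring $\chi^2 = O(1)$. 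For fixed schemes the adversary knows all $\POVM_i$'s before choosing the hard instance, so $\Delta_\ptb$ can be adversarially supported on the top eigenspace of $\sum_i \Luders_{\POVM_i}$; this replaces the Frobenius-norm estimate above by one governed by the aggregated trace norm $\sum_i\tracenorm{\Luders_{\POVM_i}}\le\ns\sup_\POVM\tracenorm{\Luders_\POVM}$, gaining the extra factor of $\dims/\sup_\POVM\tracenorm{\Luders_\POVM}$ relative to the randomized case.

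The main obstacle is the fine-grained control of the Rademacher chaos while respecting the structural constraints of the problem. For the randomized bound one must carefully restrict $\Luders_\POVM$ to the $(\dims^2-1)$-dimensional traceless Hermitian subspace, so that the unavoidable identity-direction eigenvalue (always $1$) does not inflate the estimate. For the fixed bound the further difficulty is ensuring that the adversarially concentrated perturbation remains a valid PSD trace-one state with $\tracenorm{\rho_\ptb-\qmm}>\eps$ --- this requires a delicate scaled Rademacher / sign-balancing construction in the spirit of~\cite{liu2024role}, now adapted so that the hard instance exploits the MIC's trace-norm spectrum rather than the ambient dimension. Combining these two ingredients produces both displayed inequalities.
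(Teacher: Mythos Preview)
Your proposal follows essentially the same architecture as the paper's proof: Le Cam's two-point method, Ingster's $\chi^2$ factorization, the identity $H_{\ptb,\ptb'}(\POVM)=\dims\,\Tr[\Delta_\ptb\Luders_\POVM(\Delta_{\ptb'})]$, a Rademacher perturbation along an orthonormal Hermitian basis, and a Hanson--Wright-type MGF bound on the resulting bilinear chaos. The randomized case is described correctly.

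There is, however, one genuine error in the fixed case. You write that the adversary should support $\Delta_\ptb$ on the \emph{top} eigenspace of $\sum_i\Luders_{\POVM_i}$. This is backwards: the adversary wants the perturbation to be \emph{invisible} to the measurements, so $\Delta_\ptb$ must be supported on the eigenspace with the \emph{smallest} eigenvalues of the average MIC $\avgLuders=\frac{1}{\ns}\sum_i\Luders_{\POVM_i}$. The paper takes the eigenbasis $V_1,\ldots,V_{\dims^2}$ of $\avgLuders$ ordered by increasing eigenvalue and perturbs only along the bottom $\ell=\dims^2/2$ directions. The pigeonhole bound $\lambda_\ell\le\tracenorm{\avgLuders}/(\dims^2-\ell)$ then gives $\hsnorm{V^\dagger\avgChoi V}^2\le\ell\lambda_\ell^2\le 2\tracenorm{\avgLuders}^2/\dims^2$, which is precisely what produces the extra factor of $\dims$ in the fixed lower bound. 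If you instead put $\Delta_\ptb$ in the top eigenspace, the relevant eigenvalues can be as large as $1$ (since $\opnorm{\Luders_\POVM}\le 1$), and you recover at best the randomized bound.

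One further point worth flagging: the paper handles the PSD/trace-distance validity uniformly across \emph{all} choices of orthonormal basis $\{V_j\}$ via a random-matrix concentration result (their Theorem~5.6) showing $\opnorm{\sum_j\ptb_jV_j}=O(\sqrt{\dims})$ with high probability regardless of the basis. This is what lets the same construction serve both the randomized case (fixed basis) and the fixed case (measurement-dependent eigenbasis) without separate arguments, and it resolves the obstacle you identified in your last paragraph.
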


\begin{table}
    \def\arraystretch{1.6}
        \centering
        \begin{tabular}{|c | c | c | c | c |}
        \hline
       Scheme  & Restricted (lower) & Pauli & $2\le  \ab<\dims$ & $\ab\ge \dims$ \\ \hline
             {Randomized} & $\frac{\dims^2}{\eps^2\max_{\POVM}\hsnorm{\Luders_{\POVM}}}$ & {$\frac{\dims^2}{\eps^2}^\dagger$} & {\color{blue}$\frac{\dims^2}{\eps^2\sqrt{k}}$} & {$\frac{\dims^{3/2}}{\eps^2}$}   \\ \hline
             {Fixed} &$\frac{\dims^2}{\eps^2}\cdot \frac{\dims}{\max_{\POVM}\tracenorm{\Luders_{\POVM}}}$ & {\color{blue}$\frac{\dims^3}{\eps^2}$} &{\color{blue}$\frac{\dims^3}{\eps^2{\ab}}$}    &{\color{blue}$\frac{\dims^2}{\eps^2}$}  \\\hline
        \end{tabular}
        \caption{Existing and new worst-case copy complexity for quantum state certification with non-adaptive measurements. All results are constant-optimal unless otherwise indicated. New results are highlighted in blue. $\dagger$: tight up to log factors.}
        \label{tab:results}
\end{table}

By Cauchy-Schwarz, $\hsnorm{\Luders_{\POVM}}\ge \tracenorm{\Luders_{\POVM}}/\sqrt{\dims^2}$ (because the spectrum has dimension $\dims^2$). 
Thus the lower bound for $\ns_R$ is smaller than that of $\ns_F$, 
which is consistent with that randomized measurements are more powerful than fixed ones. 
This plug-and-play result allows us to easily obtain lower bounds for any measurement set $\povmset$, as long as we can upper bound the respective norms of the MICs in $\povmset$. For $\ab$-outcome measurements, the bounds are stated in~\cref{lem:finite-outcome-norms}.
\begin{lemma}
\label{lem:finite-outcome-norms}
    For a POVM $\POVM$ with at most $\ab$ outcomes, $\hsnorm{\Luders_{\POVM}}^2\le \tracenorm{\Luders_{\POVM}}\le \min\{\ab,\dims\}.$
\end{lemma}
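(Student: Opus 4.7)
The plan is to use the fact that $\Choi_{\POVM}$ is the transfer matrix of $\Luders_{\POVM}$ — one checks directly from $\Tr[M_xA] = \vadj{M_x}\vvec{A}$ that $\VecOp(\Luders_{\POVM}(A)) = \Choi_{\POVM}\vvec{A}$ — and that it is positive semidefinite as a nonnegative combination of the rank-one psd matrices $\vvec{M_x}\vadj{M_x}$. Consequently the eigenvalues $\lambda_1,\dots,\lambda_{\dims^2}\ge 0$ of $\Choi_{\POVM}$ satisfy $\hsnorm{\Luders_{\POVM}}^2 = \sum_i \lambda_i^2$ and $\tracenorm{\Luders_{\POVM}} = \sum_i \lambda_i$, so both desired inequalities reduce to spectral facts.

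For the left inequality $\hsnorm{\Luders_{\POVM}}^2 \le \tracenorm{\Luders_{\POVM}}$, the key lemma I would establish is $\opnorm{\Choi_{\POVM}} \le 1$; then each $\lambda_i \in [0,1]$, so $\lambda_i^2 \le \lambda_i$, and summation gives the claim. To bound the operator norm, I would pick any $A \in \C^{\dims\times\dims}$ and compute
\[
\vadj{A}\,\Choi_{\POVM}\,\vvec{A} = \sum_x \frac{\vert\Tr[M_xA]\vert^2}{\Tr[M_x]}.
\]
Applying Cauchy--Schwarz in the Hilbert--Schmidt inner product to the factorization $\Tr[M_xA] = \Tr[M_x^{1/2}\cdot M_x^{1/2}A]$ yields $\vert\Tr[M_xA]\vert^2 \le \Tr[M_x]\cdot \Tr[A^\dagger M_x A]$; summing over $x$ and using $\sum_x M_x = \eye_\dims$ then gives $\vadj{A}\,\Choi_{\POVM}\,\vvec{A} \le \Tr[A^\dagger A] = \vvdotprod{A}{A}$, so $\opnorm{\Choi_{\POVM}} \le 1$.

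For the right inequality, I would compute $\tracenorm{\Luders_{\POVM}} = \Tr[\Choi_{\POVM}] = \sum_x \Tr[M_x^2]/\Tr[M_x]$ (using psd-ness) and obtain both halves of the minimum via two simple estimates on each summand in terms of the eigenvalues $\mu_{x,1},\dots,\mu_{x,\dims} \in [0,1]$ of $M_x$. Since $\mu_{x,i}^2 \le \mu_{x,i}$, each ratio is $\le 1$, giving the total $\le \ab$; alternatively $\sum_i \mu_{x,i}^2 \le (\sum_i \mu_{x,i})^2$ gives ratio $\le \Tr[M_x]$, and summing uses $\sum_x \Tr[M_x] = \Tr[\eye_\dims] = \dims$. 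Taking the minimum yields the stated bound. The only non-routine step is the operator norm bound; once the Cauchy--Schwarz split $M_x = M_x^{1/2}\cdot M_x^{1/2}$ is spotted, the rest is straightforward spectral bookkeeping on each POVM element.
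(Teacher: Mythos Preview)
Your proof is correct and follows the same overall route as the paper: establish $\opnorm{\Luders_{\POVM}}\le 1$ (the paper's \cref{lem:mic-opnorm}), bound $\tracenorm{\Luders_{\POVM}}=\Tr[\Choi_{\POVM}]=\sum_x\hsnorm{M_x}^2/\Tr[M_x]$ by $\min\{\ab,\dims\}$ (the paper's \cref{lem:mic-trace}), and combine. Your trace-norm argument is essentially identical to the paper's, just phrased in terms of the eigenvalues of $M_x$ rather than the norm inequalities $\hsnorm{M_x}\le\tracenorm{M_x}$ and $\hsnorm{M_x}^2\le\opnorm{M_x}\tracenorm{M_x}$.

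The one notable difference is your proof of the operator-norm bound. The paper first reduces to Hermitian $X$, then eigendecomposes $M_x=\sum_j\lambda_{x,j}\qproj{\psi_{x,j}}$ and applies Cauchy--Schwarz plus AM--GM term-by-term to reach $\Tr[M_xX]^2\le\Tr[M_x]\Tr[M_xX^2]$. Your split $\Tr[M_xA]=\hdotprod{M_x^{1/2}}{M_x^{1/2}A}$ followed by a single Hilbert--Schmidt Cauchy--Schwarz yields the same inequality (indeed the more general $|\Tr[M_xA]|^2\le\Tr[M_x]\Tr[A^\dagger M_xA]$ for arbitrary $A$) in one line, with no reduction to the Hermitian case and no eigendecomposition. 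This is a cleaner argument; what the paper's version buys is only that it stays entirely within real arithmetic once $X$ is Hermitian.
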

Combining~\cref{lem:finite-outcome-norms} and~\cref{thm:lower-channel} proves all lower bounds in~\cref{thm:random-k,thm:fixed-k-lower}. 

\paragraph{Relation to \cite{liu2024role}.} When $M_x$ is rank-1, i.e. $M_x=\qproj{\psi_x}$, the MIC 
 $\Luders_{\POVM}$ is exactly the L\"uders channel~\citep{debrota2019luders} which describes one possible form of expected post-measurement state after measuring with $\POVM$. However, their arguments using L\"uders channel only hold for rank-1 POVMs. Our formulation using MIC is a necessary extension of their argument to arbitrary POVMs which may not be rank-1.

\subsection{The disadvantages of fixed measurements}
\label{sec:fixed-disadvantage}
One of the key findings in \cref{sec:new-results} is the separation between fixed and randomized measurements. We explain the high-level reason why such separation exists through a simple example.

In randomized schemes, given the copies of the state, we then choose the measurements randomly. However, for fixed measurements, the measurement scheme is fixed and the state is then chosen by an imaginary adversary. Thus, without randomness, nature would have the opportunity to \textit{adversarially} design a quantum state that fools the pre-defined set of measurements. When shared randomness is available, we can avoid the bad effect of adversarial choice of quantum states. In principle, this qualitative gap is like the difference between randomized algorithms and deterministic algorithms.

We use a simple example to demonstrate this idea. Suppose we choose each measurement $\POVM_i$ simply to be the same canonical basis measurement, i.e. $\POVM_i=\{\qproj{x}\}_{x=0}^{\dims-1}$. Then nature can set $\rho$ to be the ``+'' state where
\begin{equation}
    \rho = \qproj{\phi},\quad \qbit{\phi} = \frac{1}{\sqrt{\dims}}\sum_{x=0}^{\dims-1}\qbit{x}.
    \label{equ:plus-state}
\end{equation}

Note that the trace distance $\frac{1}{2}\tracenorm{\rho-\qmm}=1-1/\dims\simeq 1$. When the underlying state is $\rho$, all measurement outcomes $x_i$ would be independent samples from the uniform distribution over $\{0, \ldots, d-1\}$. However, if the state is the maximally mixed state $\qmm$, the distribution of each measurement outcome would also be the uniform distribution over $\{0, \ldots, d-1\}$. Thus, even though the trace distance between $\rho$ and $\qmm$ is large, the measurement scheme is completely fooled.

On the other hand, with shared randomness, one can (theoretically) sample a basis uniformly from the Haar measure as in \cite{BubeckC020} to easily avoid this issue. No fixed $\rho$ would be able to completely fool the randomized basis measurement sampled uniformly. In fact with high probability, the randomly sampled basis is good in the sense that the outcome distribution would be far enough when the two states $\rho$ and $\qmm$ are far (see~\cite[Lemma 6.3]{ChenLO22instance}).

Similar to \cite{AcharyaCT19}, we develop a min-max and max-min framework to formally characterize the difference between fixed and randomized measurements.

\subsection{A novel lower bound construction} 
\label{sec:summary:new-construction}
The design of hard instances has to account for the difference illustrated above when proving lower bounds for randomized and fixed measurements. In particular, for randomized measurements, the lower bound construction can be \emph{measurement independent}. However, for fixed measurements, since the states can be chosen adversarially, the lower bound construction needs to be \emph{measurement-dependent}.

Many prior works on testing and tomography~\cite{ODonnellW15,ODonnellW16,HaahHJWY17, BubeckC020, ChenCH021,Chen0HL22} use \textit{measurement-independent} distributions over states in $\mathbb{C}^{\dims\times\dims}$ to prove lower bounds. 
In particular, \cite{BubeckC020, Chen0HL22} show that testing within a specific class requires at least $\ns=\Omega(d^{3/2}/\dst^2)$ when working with \emph{randomized} and \emph{adaptive} unentangled measurements respectively. 
Unfortunately, these measurement-independent constructions are unable to capture the disadvantage of fixed measurements illustrated in our simple example in \cref{sec:fixed-disadvantage}. We note that the lower bound construction in~\cite{Yu2023almost} is measurement-dependent, but specifically tailored to Pauli measurements and not general enough for our purpose.

Our generic \textit{measurement-dependent} lower bound construction is a necessary and novel contribution that leads to tight lower bounds for state certification with fixed measurements. It takes the form
\begin{equation}
\label{equ:quantum-construction-informal}
    \sigma_z=\qmm + \frac{\dst}{\sqrt{\dims}}\cdot\frac{\cd}{\dims}\sum_{i=1}^{\dims^2/2} z_iV_i,
\end{equation}
where $\{V_i\}_{i=1}^{\dims^2-1}$ are $\dims^2-1$ orthonormal trace-0 Hermitian matrices, $z=(z_1, \ldots, z_{\dims^2/2})$ are uniformly sampled from $\{-1, 1\}^{\dims^2/2}$, and $\cd$ is an absolute constant. 
In essence, we perform independent binary perturbations along different trace-0 directions. We show with appropriate choice of $c$, regardless of the choice of $\{V_i\}_{i=1}^{\dims^2-1}$, 
with high probability over the randomness of $z$, $\sigma_z$ is a valid quantum state and $\dst$-far in trace distance from $\qmm$.
The matrices $V_i$'s can be chosen \emph{dependent} on the fixed measurement scheme that we want to \emph{fool}.
In particular, the perturbations $V_1, \ldots, V_{\dims^2/2}$ can be chosen in directions about which the fixed measurement schemes provide the least information.
The matrices $V_i$'s can also be fixed, in which case the construction is measurement-independent and our framework naturally leads to the lower bound for randomized non-adaptive measurements in~\cite{BubeckC020}.

\paragraph{Relation to Paninski's construction~\cite{Paninski08}.} Our construction can be viewed as a generalization of Paninski's construction for classical discrete distribution testing where the hard instances are constructed as perturbations around the uniform distribution. Assume $\dims$ is even and let $z\in\{-1, 1\}^{\dims/2}$. For a distribution over $[\dims]$ and some constant $c$,
\begin{align}
\p_z= \frac{1}{\dims} (1+c\dst z_1, 1-c\dst z_1, &\ldots, 1+c\dst
z_t, 1-4\dst z_t, \notag\\
&\ldots, 1+c\dst z_{\dims/2}, 1-c\dst z_{\dims/2} )\,. \label{eq:paninski}
\end{align}
We can see that each $z$ defines a perturbation $\mathbf{d}_z=\p_z-\unif$ in the space of \emph{sum-zero} vectors, and the number of such perturbations is $\dims/2$, the same order as the dimensionality of discrete distributions over $[\dims]$. In \eqref{equ:quantum-construction-informal}, we make \emph{trace-zero} perturbations to the maximally mixed state along $\Theta(\dims^2)$ different directions. Recall that the dimensionality of quantum states is $\dims^2$, and maximimally mixed state generalizes uniform distributions, our construction \eqref{equ:quantum-construction-informal} can be viewed as a generalization of \eqref{eq:paninski}.

\paragraph{Relation to previous quantum constructions.} The binary perturbations in our construction are mathematically easier to handle than previous works. \cite{BubeckC020, HaahHJWY17, ChenCH021} designed the hard cases using random unitary transformations around the maximally-mixed state, which requires difficult calculations using Weingarten calculus~\cite{weingarten1978asymptotic,collins2003moments}. In contrast, our arguments avoid the difficult representation-theoretic tools. \cite{Chen0HL22, chen2023does} used Gaussian orthogonal ensembles, which perturbs each matrix entry with independent Gaussian distributions. Binary perturbations share many statistical similarities with Gaussian since both are sub-gaussian distributions. However, the former is arguably simpler as the support is finite, and thus information-theoretic tools can be more easily applied. We note that all these constructions are somewhat inspired by Paninski's construction in \cite{Paninski08}.

\section{Preliminaries}
\label{sec:preliminaries}

\subsection{Classical distribution testing}

\paragraph{Probability distances and divergences} Let $\p$ and $\q$ be distributions over a finite domain $\mathcal{X}$. The \emph{total variation distance} is defined as 
$$
\totalvardist{\p}{\q}\eqdef\sup_{S\subseteq\mathcal{X}}(\p(S)-\q(S))=\frac{1}{2}\sum_{x\in\mathcal{X}}|\p(x)-\q(x)|.$$

The KL-divergence is
$$\kldiv{\p}{\q}\eqdef\sum_{x\in\mathcal{X}}\p(x)\log\frac{\p(x)}{\q(x)}.$$

The chi-square divergence is
$$\chisquare{\p}{\q}\eqdef \sum_{x\in\mathcal{X}}\frac{(\p(x)-\q(x))^2}{\q(x)}.$$

By Pinsker's inequality and concavity of logarithm,
\[
2\totalvardist{\p}{\q}^2\le \kldiv{\p}{\q}\le \chisquare{\p}{\q}.
\]
We may also define $\ell_p$ distances between distributions, $
\norm{\p-\q}_p\eqdef\Paren{\sum_{x\in\mathcal{X}}{|\p(x)-\q(x)|^p}}^{1/p}.
$

\paragraph{Distribution testing under $\ell_2$ distance} Testing discrete distributions under $\ell_2$ distance is a well studied problem. The algorithm and its sample complexity is stated as follows.
\begin{theorem}[{\citet[Lemma 2.3]{DiakonikolasK16}}]
     \label{thm:distr-testing}
     Let $\p$ be an unknown distribution and $\q$ be a known target distribution over $[k]$ such that $\min\{\normtwo{p}, \normtwo{q}\}\le b$. Let $\bx=(x_1, \ldots, x_\ns)$ be $\ns$ i.i.d. samples from $\p$. There exists an algorithm TestIdentityL2($\q, \bx,\eps, \delta$) that outputs $\isthestate$ if $\p=\q$ and $\notthestate$ if $\normtwo{\p-\q}>\eps$ with probability at least $1-\delta$ using $\ns=1000b\log(1/\delta)/\eps^2$ samples. 
\end{theorem}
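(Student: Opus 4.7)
The plan is to build an unbiased estimator for $\normtwo{\p-\q}^2$ and turn it into a threshold test. Under Poissonized sampling, where the counts $N_x \sim \text{Poi}(\ns\p(x))$ are mutually independent across $x\in[\ab]$, I would define
\[
Z \eqdef \sum_{x=1}^{\ab} \Paren{(N_x - \ns\q(x))^2 - N_x}.
\]
A one-line calculation using $\EE[N_x]=\Var[N_x]=\ns\p(x)$ yields $\EE[Z] = \ns^2 \normtwo{\p-\q}^2$, so the natural test is to accept when $Z \le \ns^2\eps^2/2$ and reject otherwise. This is the Chan--Diakonikolas--Valiant--Valiant $\ell_2$ statistic specialised to a known reference distribution $\q$.

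Next I would bound $\Var[Z]$. Independence of the $N_x$ makes the variance decompose into a sum over $x$ of Poisson moment expressions; expanding $(N_x - \ns\q(x))^2 - N_x$ as a degree-four polynomial in $N_x$ and plugging in the standard central-moment formulas for a Poisson variable leads to a bound of the form
\[
\Var[Z] = O\Paren{\ns^2 \normtwo{\q}^2 + \ns^3 \normtwo{\p-\q}^2 \normtwo{\q}^2}.
\]
Chebyshev's inequality then separates $\EE[Z]=0$ (null) from $\EE[Z]\ge\ns^2\eps^2$ (alternative) with constant probability whenever $\ns = \Omega(\normtwo{\q}/\eps^2)$. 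To handle the case where instead only $\normtwo{\p}\le b$ is known, I would use $\normtwo{\q}\le \normtwo{\p}+\normtwo{\p-\q}\le b+\normtwo{\p-\q}$ in the variance bound; the extra $\normtwo{\p-\q}$ contribution is harmless because it is at most $O(1)$, so $b$ can replace $\normtwo{\q}$ throughout. Finally, taking the majority vote over $\Theta(\log(1/\delta))$ independent copies of this constant-probability test boosts success to $1-\delta$, and standard de-Poissonization loses only constants, giving the claimed $\ns=O(b\log(1/\delta)/\eps^2)$.

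The main obstacle is the variance computation. Squaring $Z$ produces several cross terms of the form $\ns^a\p(x)^b\q(x)^c$, and checking that none of them scale worse than $\ns^2 \normtwo{\q}^2$ (in the null) or $\ns^3\normtwo{\p-\q}^2\normtwo{\q}^2$ (in the alternative) requires careful bookkeeping — in particular, terms involving $\ns^3\normtwo{\p}^2\normtwo{\q}^2$ must be avoided so that the $b$-dependence in the final complexity is linear rather than quadratic. Once the variance has the right shape, threshold selection, Chebyshev's inequality, median-of-trials amplification, and de-Poissonization are all standard identity-testing machinery.
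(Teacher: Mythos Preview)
The paper does not prove this statement at all; it is quoted verbatim from \cite{DiakonikolasK16} as a preliminary and used as a black box in \cref{alg:rand-k}, \cref{alg:no-shared-d}, and \cref{alg:no-shared-k}. So there is no ``paper's own proof'' to compare against. Your sketch is essentially the standard argument behind that lemma (the Chan--Diakonikolas--Valiant--Valiant $\ell_2$ statistic, Poissonize, Chebyshev, then median amplification), and it is the right approach.

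One small slip: your displayed variance bound has $\normtwo{\q}^2$ in the second term, but the actual calculation gives $\Var[Z] = 2\ns^2\normtwo{\p}^2 + 4\ns^3\sum_x \p(x)(\p(x)-\q(x))^2$, and Cauchy--Schwarz on the cross term yields $4\ns^3\normtwo{\p}\,\normtwo{\p-\q}^2$, i.e.\ linear in $\normtwo{\p}$, not quadratic. With the squared version you wrote, Chebyshev on the alternative would require $\ns\gtrsim b^2/\eps^2$ rather than $b/\eps^2$ --- exactly the pitfall you flag in your last paragraph. Fixing that exponent (and swapping $\normtwo{\p}\leftrightarrow\normtwo{\q}$ via the triangle inequality as you describe) makes the argument go through.
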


\subsection{Hibert space over linear operators}

\paragraph{Hilbert space over complex matrices.}
The space of complex matrices $\C^{\dims\times\dims}$ is a Hilbert space with inner product 
$$\hdotprod{A}{B}\eqdef\Tr[A^\dagger B], A, B\in \C^{\dims\times \dims}.$$
For Hermitian matrices $A,B$, $\hdotprod{A}{B}=\hdotprod{B}{A}\in \R$. Thus the subspace of Hermitian matrices $\Herm{\dims}$ is a \textit{real} Hilbert space (i.e. the associated field is $\R$) with the same matrix inner product. 

Vectorization defines a homomorphism between $\C^{\dims\times\dims}$ and $\C^{\dims^2}$, where 
$$\VecOp(\qoutprod{i}{j})\eqdef \qbit{j}\otimes \qbit{i}.$$
Vectorization for a general matrix $A$ is defined through linearity. For convenience we denote $\vvec{A}\eqdef\VecOp(A)$. Matrix inner product can be written as inner product on $\C^{\dims^2}$, $\hdotprod{A}{B}=\vvdotprod{A}{B}$. 

\paragraph{(Linear) superoperators.} Let $\mathcal{N}:\C^{\dims\times \dims}\mapsto \C^{\dims\times \dims}$ be a linear operator over $\C^{\dims\times \dims}$, which we refer to as superoperators\footnote{This is to distinguish from a matrix $A\in\C^{\dims\times \dims}$, which can be viewed as an operator over $\C^\dims$. Indeed an operator over $\C^{\dims\times \dims}$ need not be linear, but we only deal with linear ones in this work, so we drop ``linear'' for brevity.}. Every superoperator $\mathcal{N}$ has a matrix representation $\Choi(\mathcal{N})\in \C^{\dims^2\times\dims^2}$ that satisfies 
$$\vvec{\mathcal{N}(X)}=\Choi(\mathcal{N})\vvec{X}$$ 
for all matrices $X\in\C^{\dims\times\dims}$. It can be verified that for the measurement information channel $\Luders_{\POVM}$ in~\cref{def:mic}, $\Choi_{\POVM}\vvec{A}=\vvec{\Luders_{\POVM}(A)}$.

\paragraph{Schatten norms.} Let $\Lambda=(\lambda_1, \ldots, \lambda_\dims)\ge 0$ be the \emph{singular values} of a linear operator $A$, which can be a matrix or a superoperator. {For Hermitian matrices, the singular values are the absolute values of the eigenvalues.} Then for $p\ge 1$, the \emph{Schatten $p$-norm} is defined as 
$$
\|A\|_{S_p}\eqdef \|\Lambda\|_p.
$$ 
The Schatten norms of a superoperator $\mathcal{N}$ and its matrix representation $\Choi(\mathcal{N})$ match exactly, $\|\mathcal{N}\|_{S_p}=\|\Choi(\mathcal{N})\|_{S_p}$. Some important special cases are trace norm $$\tracenorm{A}\eqdef\|A\|_{S_1},$$
Hilbert-Schmidt norm 
$$\hsnorm{A}\eqdef\|A\|_{S_2}=\sqrt{\hdotprod{A}{A}},$$
and operator norm 
$$\opnorm{A}\eqdef\|A\|_{S_\infty}=\max_{i=1}^\dims\lambda_i.$$
Due to Cauchy-Schwarz and monotonicity of $\ell_p$ norms,
\[
\hsnorm{A}\le \tracenorm{A}\le \sqrt{\dims}\hsnorm{A}.
\]
By H\"older's inequality,
\[
\hsnorm{A}^2\le \opnorm{A}\tracenorm{A}.
\]

\section{The measurement information channel (MIC)}
\label{sec:MIC}
Recall for a measurement $\POVM$, the measurement information channel is

$$\Luders_{\POVM}(\rho)\eqdef\sum_{x}M_x\frac{\Tr[M_x\rho]}{\Tr[M_x]}.$$

It is a \emph{measure and prepare channel}~\cite[Eq (4.4.82)]{khatri2021principles} where upon measuring a quantum state $\rho$ with $\POVM$ and observing an outcome $x$, the system prepares the quantum state $M_x/\Tr[M_x]$. Its matrix representation is 
\[
\Choi_{\POVM}\eqdef \sum_{x}\frac{\vvec{M_x}\vadj{M_x}}{\Tr[M_x]} \in \C^{\dims^2\times\dims^2},
\]
which satisfies $\Choi_{\POVM}\vvec{A}=\vvec{\Luders_{\POVM}(A)}$ for all $\dims\times\dims$ matrix $A$.

In this section, we first discuss some important properties of MIC. Then we study a special case that links MIC to a natural physical quantity. Finally, we use a simple example to show that the MIC of a measurement characterizes the power of distinguishability of this measurement.

\subsection{Properties of MIC}
We introduce important properties of the MIC in~\cref{fact:mic-properties}. The proof is in~\cref{app:fact:mic-properties}.
\begin{lemma}
\label{fact:mic-properties}
    Let $\POVM$ be a POVM and $\Luders_{\POVM}$ be the MIC with matrix representation $\Choi_{\POVM}$. Then
    \begin{enumerate}
        \item $\Choi_{\POVM}$ is positive semi-definite.
        \item $\Luders_{\POVM}$ is unital, i.e. $\Luders_{\POVM}(\eye_\dims)=\eye_\dims$.
        \item $\Luders_{\POVM}$ is trace-preserving, i.e. $\Tr[\Luders_{\POVM}(X)]=\Tr[X]$.
        \item $\Luders_{\POVM}$ is Hermitian preserving, i.e. for all Hermitian $X$, $\Luders_{\POVM}(X)$ is also Hermitian.
    \end{enumerate}
\end{lemma}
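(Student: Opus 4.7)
The plan is to verify each of the four properties directly from the definitions of $\Luders_\POVM$ and $\Choi_\POVM$, using only two basic facts about POVMs: each effect $M_x$ is positive semidefinite and $\sum_x M_x = \eye_\dims$. All four items are essentially one-line calculations, so I would present them in the natural order.

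For item 1, I would expand $\Choi_\POVM = \sum_x \frac{1}{\Tr[M_x]} \vvec{M_x}\vadj{M_x}$ and note that each summand is a rank-one outer product, which is automatically positive semidefinite, scaled by a positive coefficient $1/\Tr[M_x] > 0$ (valid because $M_x$ is PSD and nonzero; if $M_x = 0$, the term can be dropped from the sum by convention). A sum of PSD matrices is PSD, giving the claim.

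For items 2 and 3, I would substitute $\eye_\dims$ and a general $X$ into the definition of $\Luders_\POVM$. For unitality: $\Luders_\POVM(\eye_\dims) = \sum_x M_x \cdot \frac{\Tr[M_x]}{\Tr[M_x]} = \sum_x M_x = \eye_\dims$. For trace preservation, using linearity of trace: $\Tr[\Luders_\POVM(X)] = \sum_x \Tr[M_x]\cdot \frac{\Tr[M_x X]}{\Tr[M_x]} = \Tr\bigl[\bigl(\sum_x M_x\bigr)X\bigr] = \Tr[X]$. Both steps rely only on $\sum_x M_x = \eye_\dims$.

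For item 4, I would observe that if $X$ is Hermitian then each scalar $\Tr[M_x X]$ is real, since $M_x$ is Hermitian and the trace of a product of two Hermitian matrices is real (follows from $\Tr[AB]^* = \Tr[B^\dagger A^\dagger] = \Tr[BA] = \Tr[AB]$). Hence $\Luders_\POVM(X) = \sum_x M_x \cdot (\text{real scalar})$ is a real linear combination of Hermitian matrices, and therefore Hermitian. I do not anticipate any obstacles here — the lemma is a collection of standard sanity checks showing that the MIC behaves like a legitimate quantum channel, and the only mild care needed is handling the degenerate case $\Tr[M_x]=0$ in item 1, which is resolved by dropping zero effects.
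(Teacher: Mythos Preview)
Your proposal is correct and essentially identical to the paper's own proof: both verify each item directly from the definition, using that each $M_x$ is PSD and $\sum_x M_x=\eye_\dims$, with the same one-line computations for items 2--4 and the same observation (rank-one outer products with nonnegative weights) for item 1. Your explicit handling of the degenerate case $\Tr[M_x]=0$ is a minor addition the paper omits.
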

 An immediate corollary is that $\Luders_{\POVM}$ has an eigen-decomposition where one eigenvector is $\eye_\dims$ and all other eigenvectors are trace-0 Hermitian matrices.
\begin{corollary}
\label{cor:mic-eigenbasis}
    For all POVM $\POVM$, there exists Hermitian matrices $\hbasis=(V_1, \ldots, V_{\dims^2})$ with $\Tr[V_j]=0$ for all $j\le \dims^2-1$ and $V_{\dims^2}=\eye_\dims/\sqrt{\dims}$ which form an orthonormal eigenbasis of $\Luders_{\POVM}$. 
\end{corollary}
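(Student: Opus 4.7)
The plan is to restrict $\Luders_{\POVM}$ to the real Hilbert space $\Herm{\dims}$ of Hermitian matrices, show that it is self-adjoint there, and then invoke the real spectral theorem while arranging for $\eye_\dims/\sqrt{\dims}$ to appear as the designated last basis vector $V_{\dims^2}$.

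First, Lemma~\ref{fact:mic-properties}(4) guarantees that $\Luders_{\POVM}$ maps $\Herm{\dims}$ into itself, so we may view it as a linear operator on the real inner product space $(\Herm{\dims}, \hdotprod{\cdot}{\cdot})$, which has real dimension $\dims^2$. A direct computation using that each $M_x$ is Hermitian (hence $M_x^\dagger = M_x$) and that $\Tr[M_x A]$, $\Tr[M_x B]$ are real scalars for Hermitian $A, B$ shows
\[
\hdotprod{A}{\Luders_{\POVM}(B)} \;=\; \sum_x \frac{\Tr[M_x A]\,\Tr[M_x B]}{\Tr[M_x]} \;=\; \hdotprod{\Luders_{\POVM}(A)}{B},
\]
so $\Luders_{\POVM}$ is symmetric on $\Herm{\dims}$. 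The real spectral theorem then yields an orthonormal eigenbasis of $\Herm{\dims}$ consisting of Hermitian matrices with real eigenvalues, which is exactly the kind of eigenbasis $\hbasis$ the corollary asks for.

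Next, by Lemma~\ref{fact:mic-properties}(2) we have $\Luders_{\POVM}(\eye_\dims) = \eye_\dims$, so the unit-norm Hermitian matrix $\eye_\dims/\sqrt{\dims}$ lies in the eigenspace with eigenvalue $1$. Setting $V_{\dims^2} \eqdef \eye_\dims/\sqrt{\dims}$, we pick an orthonormal basis of this eigenspace containing $V_{\dims^2}$ (possible by Gram--Schmidt inside the eigenspace), and pick any orthonormal bases of Hermitian matrices for the remaining eigenspaces. The resulting $V_1, \ldots, V_{\dims^2-1}$ are all orthogonal to $V_{\dims^2}$, and
\[
0 \;=\; \hdotprod{V_j}{\eye_\dims/\sqrt{\dims}} \;=\; \Tr[V_j]/\sqrt{\dims}
\]
forces $\Tr[V_j] = 0$, as required.

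The only nontrivial step is verifying self-adjointness of $\Luders_{\POVM}$ on $\Herm{\dims}$; once this is in hand, everything else is a routine spectral-theorem argument combined with unitality. One could alternatively derive self-adjointness from positive semi-definiteness of $\Choi_{\POVM}$ (Lemma~\ref{fact:mic-properties}(1)) via the vectorization isomorphism and the fact that $\vvec{\Herm{\dims}}$ is a real invariant subspace on which $\Choi_{\POVM}$ acts symmetrically, but the direct inner-product identity above seems cleanest.
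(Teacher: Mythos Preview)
Your proof is correct and follows essentially the same route as the paper: restrict $\Luders_{\POVM}$ to $\Herm{\dims}$, use unitality to place $\eye_\dims/\sqrt{\dims}$ in the basis, and read off trace-zero from orthogonality to $\eye_\dims$. If anything, you are more careful than the paper, which simply asserts the existence of a Hermitian eigenbasis without explicitly justifying diagonalizability; your direct verification of self-adjointness via $\hdotprod{A}{\Luders_{\POVM}(B)}=\sum_x \Tr[M_xA]\Tr[M_xB]/\Tr[M_x]$ fills that gap cleanly.
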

\begin{proof}
    Since $\Luders_{\POVM}$ is Hermitian preserving, it is a linear map from $\Herm{\dims}$ to $\Herm{\dims}$. The dimension of $\Herm{\dims}$ is $\dims^2$. Thus, there exists $\dims^2$ matrices $V_1, \ldots, V_{\dims^2}\in \Herm{\dims}$ which are eigenvectors of $\Luders_{\POVM}$.

    Since $\Luders_{\POVM}$ is unital, $\eye_\dims$ is an eigenvector of $\Luders_{\POVM}$ with eigenvalue of 1. The eigenspace orthogonal to $\eye_\dims$ is exactly the trace-0 subspace because $\hdotprod{\eye_\dims}{X}=0\iff \Tr[\eye_\dims X]=\Tr[X]=0$. Setting $V_{\dims^2}=\eye_\dims/\sqrt{\dims}$ and with appropriate normalization for the trace-0 eigenvectors, we obtain an orthonormal basis.
\end{proof}

We upper bound the operator norm in~\cref{lem:mic-opnorm} and the trace norm in~\cref{lem:mic-trace}. The proofs are in~\cref{app:lem:mic-opnorm} and~\cref{app:lem:mic-trace} respectively.

\begin{lemma}
\label{lem:mic-opnorm}
    For all POVM $\POVM$, $\opnorm{\Luders_{\POVM}}\le 1$.
\end{lemma}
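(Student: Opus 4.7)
The plan is to exploit the PSD structure of $\Choi_\POVM$ established in \cref{fact:mic-properties} by writing $\Luders_\POVM$ as $\Phi^*\Phi$ for a single linear map $\Phi$. Once this factorization is in hand, the largest singular value of $\Luders_\POVM$ (which equals $\opnorm{\Luders_\POVM}$) is just $\opnorm{\Phi}^2$, so it suffices to show $\opnorm{\Phi}\le 1$. The natural ``measure-and-prepare'' decomposition of $\Luders_\POVM$ already hints at the right $\Phi$: it should map a matrix $A$ into a vector of (normalized) Born probabilities.

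Concretely, I would define $\Phi:\C^{\dims\times\dims}\to\C^{\mathcal{X}}$ by
\[
\Phi(A)_x \eqdef \frac{\Tr[M_x A]}{\sqrt{\Tr[M_x]}},
\]
and check via a one-line calculation, using that each $M_x$ is Hermitian, that its adjoint with respect to the HS and standard inner products is $\Phi^*(v) = \sum_x v_x M_x/\sqrt{\Tr[M_x]}$. Expanding $\Phi^*\Phi(A) = \sum_x M_x\,\Tr[M_x A]/\Tr[M_x]$ recovers exactly $\Luders_\POVM(A)$, so $\opnorm{\Luders_\POVM} = \opnorm{\Phi^*\Phi} = \opnorm{\Phi}^2$.

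The remaining step is to bound $\opnorm{\Phi}\le 1$, i.e.\ $\sum_x |\Tr[M_x A]|^2/\Tr[M_x] \le \hsnorm{A}^2$ for every $A$. Writing $\Tr[M_x A] = \hdotprod{\sqrt{M_x}}{\sqrt{M_x}A}$ and applying Cauchy--Schwarz in the HS inner product yields
\[
|\Tr[M_x A]|^2 \le \hsnorm{\sqrt{M_x}}^2 \cdot \hsnorm{\sqrt{M_x}A}^2 = \Tr[M_x]\cdot\Tr[M_x A A^\dagger].
\]
Summing over $x$ and using $\sum_x M_x = \eye_\dims$ bounds the sum by $\Tr[A A^\dagger] = \hsnorm{A}^2$, giving the claim.

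I do not anticipate any real obstacle: the whole argument is the factorization $\Luders_\POVM = \Phi^*\Phi$ followed by a single Cauchy--Schwarz. The only mildly subtle point is picking the correct normalization $1/\sqrt{\Tr[M_x]}$ in $\Phi$, so that the completeness relation $\sum_x M_x = \eye_\dims$ is exactly what closes the bound at the end.
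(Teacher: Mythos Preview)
Your proof is correct and cleaner than the paper's. Both arguments reduce to the termwise inequality $|\Tr[M_x A]|^2/\Tr[M_x]\le \Tr[M_x A A^\dagger]$ and then sum using $\sum_x M_x=\eye_\dims$, but the routes to that inequality differ. The paper first reduces to Hermitian $X$, then eigen-decomposes each $M_x=\sum_j\lambda_{x,j}\qproj{\psi_{x,j}}$, expands $\Tr[M_x X]^2$ as a double sum, and bounds each cross term via Cauchy--Schwarz followed by AM--GM before resumming. Your square-root trick $\Tr[M_x A]=\hdotprod{\sqrt{M_x}}{\sqrt{M_x}A}$ collapses all of that into a single application of Cauchy--Schwarz in the HS inner product, and it works uniformly for arbitrary (not just Hermitian) $A$, so the paper's reduction step is unnecessary in your approach. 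The explicit factorization $\Luders_\POVM=\Phi^*\Phi$ is a nice bonus: it simultaneously re-derives item~1 of \cref{fact:mic-properties} (positive semidefiniteness of $\Choi_\POVM$) and makes the operator-norm bound transparent, whereas the paper treats those two facts separately.
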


\begin{lemma}
\label{lem:mic-trace}
    For all POVM $\POVM$ with at most $\ab$ outcomes, $\tracenorm{\Luders_{\POVM}}\le \min\{\dims, \ab\}$. 
\end{lemma}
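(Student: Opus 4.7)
The plan is to exploit the fact that $\Choi_{\POVM}$ is positive semidefinite (item 1 of \cref{fact:mic-properties}) together with the matching $\tracenorm{\cdot} = \Tr[\cdot]$ identity for PSD matrices. Concretely, I would first reduce
\[
\tracenorm{\Luders_{\POVM}} \;=\; \tracenorm{\Choi_{\POVM}} \;=\; \Tr[\Choi_{\POVM}] \;=\; \sum_x \frac{\vvdotprod{M_x}{M_x}}{\Tr[M_x]} \;=\; \sum_x \frac{\Tr[M_x^2]}{\Tr[M_x]},
\]
using the definition of $\Choi_{\POVM}$ and $\hdotprod{A}{B}=\vvdotprod{A}{B}=\Tr[A^\dagger B]$. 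So the whole question becomes: bound $\sum_x \Tr[M_x^2]/\Tr[M_x]$ by $\min\{\dims,\ab\}$.

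For the bound by $\ab$, I would argue termwise that $\Tr[M_x^2]/\Tr[M_x]\le 1$. Since $\{M_x\}$ is a POVM, each $M_x$ is PSD and $M_x\preceq \sum_y M_y = \eye_\dims$, hence $\opnorm{M_x}\le 1$. Because $0\preceq M_x\preceq\eye_\dims$ implies $M_x^2\preceq M_x$ (eigenvalues in $[0,1]$ satisfy $\lambda^2\le\lambda$), taking traces gives $\Tr[M_x^2]\le \Tr[M_x]$. Summing over the at most $\ab$ outcomes yields $\tracenorm{\Luders_{\POVM}}\le \ab$.

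For the bound by $\dims$, I would sharpen the termwise estimate using the operator norm. By H\"older-type inequality for PSD matrices, $\Tr[M_x^2]\le \opnorm{M_x}\cdot\Tr[M_x]$, so
\[
\frac{\Tr[M_x^2]}{\Tr[M_x]} \;\le\; \opnorm{M_x} \;\le\; \Tr[M_x],
\]
where the last step uses PSD-ness of $M_x$ (the operator norm is bounded by the trace for PSD matrices). Summing,
\[
\tracenorm{\Luders_{\POVM}} \;\le\; \sum_x \Tr[M_x] \;=\; \Tr\!\Bigl[\sum_x M_x\Bigr] \;=\; \Tr[\eye_\dims] \;=\; \dims.
\]
Combining the two bounds gives $\tracenorm{\Luders_{\POVM}}\le \min\{\dims,\ab\}$, as desired.

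There is no real obstacle here; the proof is essentially a two-line calculation once one recognizes that PSD-ness of $\Choi_{\POVM}$ collapses the trace norm to a trace, and that the key termwise bound splits naturally into ``$\le 1$'' (giving the $\ab$ side) versus ``$\le \Tr[M_x]$'' (giving the $\dims$ side via the POVM completeness relation). The only mild subtlety is remembering to use the right inequality on each side so that both bounds fall out of the same expression.
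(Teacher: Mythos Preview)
Your proof is correct and follows essentially the same approach as the paper: reduce $\tracenorm{\Luders_{\POVM}}$ to $\Tr[\Choi_{\POVM}]=\sum_x \Tr[M_x^2]/\Tr[M_x]$ via positive semidefiniteness, then bound each term by $1$ (for the $\ab$ side) and by $\Tr[M_x]$ (for the $\dims$ side). The only cosmetic difference is which inequality you cite where---the paper invokes $\hsnorm{M_x}\le\tracenorm{M_x}$ for the $\dims$ bound and H\"older for the $\ab$ bound, whereas you use $M_x^2\preceq M_x$ for the $\ab$ bound and H\"older followed by $\opnorm{M_x}\le\Tr[M_x]$ for the $\dims$ bound---but these are equivalent one-line facts.
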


 By H\"older's inequality, $\hsnorm{\Luders_{\POVM}}^2\le \opnorm{\Luders_{\POVM}}\tracenorm{\Luders_{\POVM}}\le \min\{\dims, \ab\}$, which proves \cref{lem:finite-outcome-norms}.

\subsection{Special case: L\"uders channel}
Consider the family of rank-1 measurements $\POVM=\{M_x\}$ where each $M_x=\qproj{\psi_x}$ is rank-1. Note that $\qbit{\psi_x}$ may not be normalized. In the \emph{unrestricted case} where we can choose arbitrary measurements for each copy, it is without loss of generality to only consider rank-1 measurements\footnote{This is because if some $M_x$ is not rank-1, then we can apply eigen-decomposition and write it as a sum of rank-1 projection matrices, $M_x=\sum_{j}\qproj{\phi_{j,x}}$. Replacing $M_x$ with $\{\qproj{\phi_{j,x}}\}_j$ only increases the power of the measurement because one outcome $x$ can be represented using the collection of outcomes $\{(j, x)\}_j$. See \cite[Lemma 4.8]{ChenCH021} for a formal proof.}. In this case, the MIC evaluates as
    \begin{equation}
        \Luders_{\POVM}(\rho) = \sum_{x=1}^{k}\frac{\qproj{\psi_x}\rho\qproj{\psi_x}}{\qdotprod{\psi_x}{\psi_x}} =  \sum_{x=1}^kK_x\rho K_x,\; K_x=\frac{\qproj{\psi_x}}{\sqrt{\qdotprod{\psi_x}{\psi_x}}}.
        \label{equ:rank-1-luders-channel}
    \end{equation}
Note that $K_x^2=M_x$ so that $K_x$ is the unique p.s.d. square-root of $M_x$. 

In fact, \eqref{equ:rank-1-luders-channel} has a natural physical interpretation. When $\POVM$ acts on $\rho$ and we obtain an outcome $x$, then the \textit{generalized L\"uder's rule}~\citep{luders1950zustandsanderung} gives one possible form of the post-measurement state\footnote{The post-measurement states are undefined for general POVMs.},
\[
\rho^x\eqdef\frac{K_x\rho K_x}{\Tr[M_x\rho]}, \quad K_x=\sqrt{M_x}.
\]
$\Luders_{\POVM}(\rho) $ is exactly the expectation of $\rho^x$ when $x$ follows the outcome distribution defined by the Born's rule,
\[
\Luders_{\POVM}(\rho)=\sum_{x}\rho^x\Tr[M_x\rho].
\]
Thus, when $\POVM$ is rank-1, $\Luders_{\POVM}$ characterizes a natural physical process which describes one possible post-measurement state transition when the measurement outcome is not available, which is called L\"uders channel \cite{debrota2019luders,luders1950zustandsanderung}. As \cite{liu2024role} proved, this channel characterizes the fundamental limit of testing: If two input states $\rho,\rho'$ lead to the same post-measurement state, then they are physically indistinguishable after measurement, let alone any algorithms that only rely on the measurement outcomes.

\subsection{MIC characterizes distinguishability: a toy example}
\label{sec:toy-example}
We use the same example in \cref{sec:fixed-disadvantage} to demonstrate how MIC characterizes the distinguishability of any fixed measurement. Recall that if we choose $\POVM=\{\qproj{x}\}_{x=0}^{\dims-1}$, then nature can set $\rho$ to be the ``+'' state where
\begin{equation*}
    \rho = \qproj{\phi},\quad \qbit{\phi} = \frac{1}{\sqrt{\dims}}\sum_{x=0}^{\dims-1}\qbit{x}.
\end{equation*}

We argued that both $\rho$ and $\qmm$ yields the uniform distribution when measured with $\POVM$ but their trace distance is nearly as large as it can be. Now we look at how MIC plays a role in this argument. For $\POVM=\{\qproj{x}\}_{x=0}^{\dims-1}$, the MIC is 
$$\Luders_{\POVM}(\cdot)=\sum_{x=0}^{\dims-1}\qproj{x}(\cdot)\qproj{x}.$$

It turns out that $\rho-\qmm$ exactly falls into the 0-eigenspace of $\Luders$,
\begin{align*}
    \Luders_{\POVM}(\rho-\qmm)&=\sum_{x=0}^{d-1}\qproj{x}(\rho-\qmm)\qproj{x}\\
    &=\sum_{x=0}^{d-1}\qbit{x}\qdotprod{x}{\phi}\qdotprod{\phi}{x}\qadjoint{x}-\sum_{x=0}^{d-1}\qproj{x}\frac{\eye_d}{d}\qproj{x}\\
    &=\sum_{x=0}^{d-1}\qbit{x}\qadjoint{x}\frac{1}{d}-\frac{\eye_\dims}{\dims}=0.
\end{align*}

It is reasonable to believe that for any measurement $\POVM$, if the difference of two states $\rho-\rho'$ fall into an eigenspace of $\Luders_{\POVM}$ with small eigenvalues, then it is hard to distinguish $\rho,\rho'$ with $\POVM$.

This toy example serves as a good intuition about the role of MIC in testing lower bound, but lacks rigorousness and does not generalize to randomized measurements and the case where measurement applied to each copy can be different. In the following sections, we will formalize this intuition by relating the hardness of testing to the MIC for general non-adaptive measurements.

\section{Lower bound framework using MIC}
\label{sec:lower}
This section is dedicated to proving the key lower bound result in~\cref{thm:lower-channel}. We focus on mixedness testing where $\qkn=\qmm\eqdef\eye_\dims/\dims$. For an arbitrary state $\rho$ we define $\Delta_\rho\eqdef\rho-\qmm$.

\subsection{The min-max and max-min chi-square divergences}
\label{sec:min-max-max-min}

We use the Le Cam's method~\cite{LeCam73,yu1997assouad} to prove lower bounds. Recall that given measurement scheme $\POVM^n$ and a state $\rho$,  the distribution of the outcomes $\bx$ is $\bfP_{\rho}$. Let $\mathcal{P}_\eps\eqdef\{\rho:\tracenorm{\qs-\qmm}>\eps\}$ be states at least $\eps$ far from $\qmm$ in trace distance. We define \emph{almost-$\eps$ perturbations}, which have constant probability mass over $\mathcal{P}_\eps$.

\begin{definition}
    A distribution $\mathcal{D}$ over quantum states is an \emph{almost-$\eps$ perturbation} if $\probaDistrOf{\sigma\sim\mathcal{D}}{\sigma\in\mathcal{P}_\eps}>\frac{1}{2}.$ Denote the set of \emph{almost-$\eps$ perturbation} as $\Gamma_\eps$.
\end{definition}

Consider a two-party game played between nature and an algorithm designer. Nature flips a coin $Y$ where $Y=0$ or $1$ with probability 1/2. If $Y=1$, then nature sets $\rho=\qmm$. Applying $\POVM^n$ on $\ns$ copies, the outcomes $\bx\sim \bfP_{\qmm}$. If $Y=0$, nature chooses $\mathcal{D}\in \Gamma_\eps$ of its choice and samples $\rho\sim\mathcal{D}$. In this case, the outcomes $\bx\sim \expectDistrOf{\rho\sim\mathcal{D}}{\bfP_{\rho}}$. Using a mixedness tester, we should guess $Y$ correctly with constant probability. By Le Cam's and Pinsker's inequality, this means that $\chisquare{\expectDistrOf{\rho\sim\mathcal{D}}{\bfP_{\rho}}}{\bfP_{\qmm}}$ cannot be too small. \cref{lem:min-max-and-max-min} states the precise necessary conditions for the existence of a mixedness tester using fixed and randomized measurements.

\begin{lemma}[{\cite[Lemma IV.8, IV.10]{AcharyaCT19}}]
\label{lem:min-max-and-max-min}
If there exists a tester using fixed non-adaptive measurements with a success probability at least $2/3$, then
   \begin{equation}
   \label{equ:max-min-lb}
        \frac{2}{25}\le \max_{\mathcal{M}^n\text{ fixed}}\min_{\mathcal{D}\in \Gamma_\eps}\chisquare{\expectDistrOf{\sigma\sim\mathcal{D}}{\bfP_{\sigma}}}{\bfP_{\qmm}}.
   \end{equation}
If there exists a tester using randomized non-adaptive measurements with 2/3 success rate, then
   \begin{equation}
\label{equ:min-max-lb}
    \frac{2}{25}\le \min_{\mathcal{D}\in \Gamma_\eps}\max_{\mathcal{M}^n\text{ fixed}}\chisquare{\expectDistrOf{\sigma\sim\mathcal{D}}{\bfP_{\sigma}}}{\bfP_{\qmm}}.
\end{equation}
\end{lemma}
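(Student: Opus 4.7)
The plan is to prove both inequalities via Le Cam's two-point method, lifting from total variation to chi-square via Pinsker's inequality and the standard bound $\kldiv{P}{Q}\le\chisquare{P}{Q}$.

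\textbf{Setup.} For any $\mathcal{D}\in\Gamma_\eps$, consider the Bayesian game in which $Y\in\{0,1\}$ is uniform: conditioned on $Y=0$ the state is $\qmm$, and conditioned on $Y=1$ the state is drawn from $\mathcal{D}$. For any non-adaptive scheme $\POVM^\ns$, the outcomes $\bx$ follow $\bfP_{\qmm}$ under $Y=0$ and the mixture $\bar{\bfP}_{\mathcal{D}}\eqdef\expectDistrOf{\sigma\sim\mathcal{D}}{\bfP_{\sigma}}$ under $Y=1$. A tester $T$ that succeeds with probability $2/3$ on every $\rho\in\mathcal{P}_\eps$ and at $\qmm$ satisfies
\[
\Pr(T(\bx)=\isthestate\mid Y=0)-\Pr(T(\bx)=\isthestate\mid Y=1)\ge c_0
\]
for an absolute constant $c_0>0$: under $Y=1$ we decompose $\mathcal{D}$ into its restriction to $\mathcal{P}_\eps$ (on which the tester outputs \isthestate{} with probability at most $1/3$) and to its complement (where we bound the error by $1$), and combine with the almost-$\eps$ property $\probaDistrOf{\sigma\sim\mathcal{D}}{\sigma\in\mathcal{P}_\eps}>1/2$. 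The variational characterization of total variation applied to $T$ then gives $\totalvardist{\bfP_{\qmm}}{\bar{\bfP}_{\mathcal{D}}}\ge c_0$, and Pinsker plus KL$\le\chi^2$ yields $\chisquare{\bar{\bfP}_{\mathcal{D}}}{\bfP_{\qmm}}\ge 2c_0^2$; tracking constants produces the stated $2/25$.

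\textbf{Fixed case (max--min, \eqref{equ:max-min-lb}).} A tester using fixed measurements commits to a single $\POVM^\ns$ that must succeed against every $\rho\in\mathcal{P}_\eps$, and hence against every $\mathcal{D}\in\Gamma_\eps$. The previous step therefore applies to this $\POVM^\ns$ with the \emph{same} chi-square bound for \emph{every} $\mathcal{D}\in\Gamma_\eps$. Minimizing over $\mathcal{D}$ and maximizing over $\POVM^\ns$ gives \eqref{equ:max-min-lb}.

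\textbf{Randomized case (min--max, \eqref{equ:min-max-lb}).} Fix an arbitrary $\mathcal{D}\in\Gamma_\eps$. A randomized scheme is a distribution over fixed schemes indexed by a seed $R\sim\mathcal{R}$; viewing $(R,\bx)$ as the observation, the joint distributions under $Y=0$ and $Y=1$ are $P(R)\bfP_{\qmm}^{R}$ and $P(R)\bar{\bfP}_{\mathcal{D}}^{R}$. The tester's success again forces TV distance at least $c_0$ between these joints, and the factoring identity
\[
\totalvardist{P(R)\bfP_{\qmm}^{R}}{P(R)\bar{\bfP}_{\mathcal{D}}^{R}}=\expectOf{R}{\totalvardist{\bfP_{\qmm}^{R}}{\bar{\bfP}_{\mathcal{D}}^{R}}}
\]
produces some realization $R^\star$ whose deterministic scheme $\POVM^\ns(R^\star)$ attains $\totalvardist{\bfP_{\qmm}^{R^\star}}{\bar{\bfP}_{\mathcal{D}}^{R^\star}}\ge c_0$. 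Converting to chi-square as before gives a fixed $\POVM^\ns$ (depending on $\mathcal{D}$) with $\chisquare{\bar{\bfP}_{\mathcal{D}}}{\bfP_{\qmm}}\ge 2/25$, which supplies the inner maximum; taking min over $\mathcal{D}$ yields \eqref{equ:min-max-lb}.

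\textbf{Main obstacle.} The delicate point is the almost-$\eps$ relaxation: because $\mathcal{D}$ may place nontrivial mass outside $\mathcal{P}_\eps$, a naive application of Le Cam collapses, so one must carefully balance the tester's guarantees on $\mathcal{P}_\eps$ and at $\qmm$ against unconstrained behavior off $\mathcal{P}_\eps$ to extract the absolute constant $c_0$ and land on $2/25$ after passing through Pinsker. The quantifier swap in the randomized case is then essentially automatic once randomness is absorbed into the observation and joint TV is identified with expected TV over the seed.
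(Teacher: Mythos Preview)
Your proposal is correct and follows essentially the same route as the paper: Le Cam's two-point method with the mixture $\bar{\bfP}_{\mathcal{D}}$, handling the mass outside $\mathcal{P}_\eps$ by conditioning on the event $E=\{\sigma\in\mathcal{P}_\eps\}$, then lifting TV to $\chi^2$ via Pinsker and $\kldiv{\cdot}{\cdot}\le\chisquare{\cdot}{\cdot}$. The only cosmetic difference is in the randomized case, where you extract a good seed $R^\star$ via the identity $\totalvardist{P(R)\bfP_{\qmm}^{R}}{P(R)\bar{\bfP}_{\mathcal{D}}^{R}}=\expectOf{R}{\totalvardist{\bfP_{\qmm}^{R}}{\bar{\bfP}_{\mathcal{D}}^{R}}}$, whereas the paper argues directly that the expected success probability over $R$ is large and hence some realization achieves it; both are the same averaging argument.
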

\begin{proof}
      Recall that $Y=0$ and $\rho=\qmm$ with probability 1/2 and $Y=1$ and $\rho\sim \mathcal{D}$ with probability $1/2$. First, we prove that for any measurement scheme $\POVM^\ns$, the chi-square divergence of outcome distributions should be far for $Y=0$ and $Y=1$. 
    
    In the former case when the state is $\qmm$ and $Y=0$, then the tester outputs the correct answer with probability at least $2/3$, 
    \[
    \Pr[\hat{Y}=0|Y=0]\ge 2/3.
    \]
    
    When $\rho\sim\mathcal{D}$, note that by the definition of almost-$\eps$ perturbations, the probability that $\|\sigma_z-\qmm\|_1>\eps$ is at least $4/5$. Denote this event as $E$, then $\Pr[E|Y=1]\ge 4/5$ . We can lower bound the success probability as
    \[
    \Pr[\hat{Y}=1|Y=1]\ge \Pr[Y=1|E, Y=1)]\Pr[E|Y=1]\ge \frac{2}{3}\cdot \frac{4}{5}=\frac{8}{15}.
    \]
    Combining the two parts,
    \[
    \Pr[Y=\hat{Y}]=\frac{1}{2}\Pr[\hat{Y}=0|Y=0]+\frac{1}{2}\Pr[\hat{Y}=1|Y=1]\ge \frac{1}{2}\Paren{\frac{2}{3}+\frac{8}{15}}=\frac{3}{5}.
    \]
By Le Cam's method~\cite{LeCam73,yu1997assouad}, 
    $$1-\frac{3}{5}\ge \frac{1}{2}(1-\totalvardist{\expectDistrOf{z}{\bfP_{\sigma_z}}}{\bfP_{\qmm}}) \implies\totalvardist{\expectDistrOf{z}{\bfP_{\sigma_z}}}{\bfP_{\qmm}}\ge \frac{1}{5}. $$
    
    Using Pinsker's inequality and the relation between KL and chi-square divergences, we have,
    \begin{equation*}
        \frac{1}{5}\le \totalvardist{\expectDistrOf{\sigma}{\bfP_{\sigma}}}{\bfP_{\qmm}}\le \sqrt{\frac{1}{2}\kldiv{\expectDistrOf{\sigma\sim\mathcal{D}}{\bfP_{\sigma}}}{\bfP_{\qmm}}}\le\sqrt{\frac{1}{2}\chisquare{\expectDistrOf{\sigma\sim\mathcal{D}}{\bfP_{\sigma}}}{\bfP_{\qmm}}}.
    \end{equation*}
    Rearranging the terms,
    \begin{equation}
    \label{equ:chi-square-lb}
        \frac{2}{25}\le \chisquare{\expectDistrOf{\sigma\sim\mathcal{D}}{\bfP_{\sigma}}}{\bfP_{\qmm}}.        
    \end{equation}

Now we apply \eqref{equ:chi-square-lb} to fixed and randomized measurements to arrive at the min-max and max-min expressions.

For a fixed measurement scheme $\POVM^\ns$, nature can choose a $\mathcal{D}\in \Gamma_\eps$ that minimizes the chi-square divergence in~\eqref{equ:chi-square-lb}.
According to~\eqref{equ:chi-square-lb}, if there exists a fixed $\POVM^{\ns}$ that achieves at least 2/3 probability in testing maximally mixed states, we must have
\begin{equation*}
    \frac{2}{25}\le \max_{\mathcal{M}^n\text{ fixed}}\min_{\mathcal{D}\in \Gamma_\eps}\chisquare{\expectDistrOf{\sigma\sim\mathcal{D}}{\bfP_{\sigma}}}{\bfP_{\qmm}}.
\end{equation*}
Thus a max-min game is played between the two parties and nature has an advantage to decide its best action based on the choice of the algorithm designer.

With randomness, in principle, a max-min game is still played, but instead, the maximization is over all \textit{distributions} of fixed (non-entangled) measurements.  Using a similar argument as~\cite[Lemma IV.8]{AcharyaCT19}, for the best distribution over all $\mathcal{M}^n$,  the expected accuracy over $R\sim\mathcal{R}$ is at least 1/2 for all $\mathcal{D}\in \Gamma_\eps$. Thus, for all $\mathcal{D}$, there must exist an instantiation $R(\mathcal{D})$ such that using the fixed measurement $\mathcal{M}^n(R(\mathcal{D}))$ the testing accuracy is at least 1/2. Therefore, 
\begin{equation*}
    \frac{2}{25}\le \min_{\mathcal{D}\in \Gamma_\eps}\max_{\mathcal{M}^n\text{ fixed}}\chisquare{\expectDistrOf{\sigma\sim\mathcal{D}}{\bfP_{\sigma}}}{\bfP_{\qmm}},
\end{equation*}
which intuitively says that a min-max game is played and the algorithm designer has an advantage. 
\end{proof}

The intuition of \cref{lem:min-max-and-max-min} is that nature and the algorithm designer play a two-party game, 
where the algorithm designer tries to maximize the chi-square divergence, while nature picks $\mathcal{D}\in \Gamma_\eps$ that minimizes it and fools the algorithm. For fixed measurements, nature can choose the distribution $\mathcal{D}\in \Gamma_\eps$ adversarially depending on the measurement scheme. With randomness, measurements are instantiated after the states are given, which puts the algorithm designer at a second-mover advantage.
The min-max and max-min arguments are similar to \cite{AcharyaCT19} and we point to \cite[Lemma IV.8, IV.10]{AcharyaCT19} for additional reference.

Therefore, to obtain a copy complexity lower bound for fixed measurements requires upper bounding~\eqref{equ:max-min-lb}, while for randomized schemes requires upper bounding~\eqref{equ:min-max-lb}. We can see that randomness is a ``game changer'' that changes a max-min game to a min-max game. Since min-max is no smaller than max-min, testing with randomness is easier than testing without it. This formalizes the intuition we established from the toy example in \cref{sec:fixed-disadvantage}.

\subsection{Relating chi-squared divergence to the MIC}
\label{sec:avg-luders}
Our central contribution is that we relate the min-max and max-min divergences to the \textit{average measurement information channel} defined by $\POVM^\ns$. Define the shorthand $\Luders_i\eqdef\Luders_{\POVM_i}$ and $\Choi_i\eqdef\Choi_{\POVM_i}$, the channel $\avgLuders$ and its matrix representation $\avgChoi$ are defined as
\begin{equation}
\avgLuders=\frac{1}{\ns}\sum_{i=1}^{\ns}\Luders_i,\quad \avgChoi\eqdef\frac{1}{\ns}\sum_{i=1}^{\ns}\Choi_i.
    \label{equ:average-luders-channel}
\end{equation}
Due to linearity, \cref{fact:mic-properties} and \cref{cor:mic-eigenbasis} hold for $\avgLuders,\avgChoi$. 
The key technical result is \cref{lem:decoupled-chi-square} which upper bounds $\chisquare{\expectDistrOf{\sigma\sim\mathcal{D}}{\bfP_{\sigma}}}{\bfP_{\qmm}}$ for fixed measurements using the average MIC. 
\begin{lemma}
\label{lem:decoupled-chi-square}
Let $\sigma, \sigma'$ be independently drawn from a distribution $\mathcal{D}$, and $\POVM^n=(\POVM_1, \ldots, \POVM_\ns)$ be a fixed measurement scheme. Define $\Delta_\sigma = \sigma -\qmm$. Then
    \begin{align}
    \label{equ:chi-square-quantum}
        \chisquare{\expectDistrOf{\sigma\sim\mathcal{D}}{\bfP_{\sigma}}}{\bfP_{\qmm}}&\le \expectDistrOf{\sigma, \sigma'\sim \mathcal{D}}{\exp\left\{\ns\dims \cdot\vadj{\Delta_\sigma}\avgChoi\vvec{\Delta_{\sigma'}} \right\}}-1
    \end{align}
    where  $\avgChoi$ is the matrix representation of the average measurement information channel.
\end{lemma}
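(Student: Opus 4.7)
The plan is an Ingster-style chi-square decoupling specialized to fixed (hence product) measurements, followed by the elementary inequality $1+y\le e^y$. First I would symmetrize by introducing an independent copy $\sigma'\sim\mathcal{D}$:
\begin{equation*}
\chisquare{\expectDistrOf{\sigma}{\bfP_\sigma}}{\bfP_{\qmm}} + 1 \;=\; \sum_{\mathbf{x}} \frac{(\EE_\sigma \bfP_\sigma(\mathbf{x}))^2}{\bfP_{\qmm}(\mathbf{x})} \;=\; \EE_{\sigma,\sigma'}\sum_{\mathbf{x}} \frac{\bfP_\sigma(\mathbf{x})\,\bfP_{\sigma'}(\mathbf{x})}{\bfP_{\qmm}(\mathbf{x})}.
\end{equation*}
Because the measurements are fixed, $\bfP_\rho=\bigotimes_{i=1}^{\ns}\p_\rho^{(i)}$, so the inner sum factorizes into a product over the $\ns$ copies of single-outcome moments $\sum_{x} \p_\sigma^{(i)}(x)\p_{\sigma'}^{(i)}(x)/\p_{\qmm}^{(i)}(x)$.

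Next I would compute each single-copy factor explicitly. Using $\p_{\qmm}^{(i)}(x)=\Tr[M_x^{(i)}]/\dims$ and writing $\p_\rho^{(i)}(x)=\p_{\qmm}^{(i)}(x)+\Tr[M_x^{(i)}\Delta_\rho]$, the integrand expands into four pieces. Summing over $x$, the ``constant'' piece gives $1$; the two linear-in-$\Delta$ cross terms vanish because $\sum_x M_x^{(i)}=\eye_\dims$ together with $\Tr[\Delta_\rho]=0$; and the remaining bilinear piece becomes
\begin{equation*}
\dims\sum_{x}\frac{\Tr[M_x^{(i)}\Delta_\sigma]\,\Tr[M_x^{(i)}\Delta_{\sigma'}]}{\Tr[M_x^{(i)}]} \;=\; \dims\cdot\vadj{\Delta_\sigma}\,\Choi_i\,\vvec{\Delta_{\sigma'}},
\end{equation*}
using $\Tr[M_x^{(i)}\Delta_\rho]=\vvdotprod{M_x^{(i)}}{\Delta_\rho}=\vadj{M_x^{(i)}}\vvec{\Delta_\rho}$ (valid by Hermiticity) and the definition $\Choi_i=\sum_x\vvec{M_x^{(i)}}\vadj{M_x^{(i)}}/\Tr[M_x^{(i)}]$ from \cref{def:mic}. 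Hence each factor equals $1+\dims\cdot\vadj{\Delta_\sigma}\Choi_i\vvec{\Delta_{\sigma'}}$, and this quantity is nonnegative because it is a sum of ratios of nonnegative numbers, so in particular the scalar $y_i\eqdef\dims\cdot\vadj{\Delta_\sigma}\Choi_i\vvec{\Delta_{\sigma'}}$ satisfies $y_i\ge -1$.

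Finally I would multiply over $i$ and use $1+y\le e^y$ termwise, obtaining
\begin{equation*}
\prod_{i=1}^{\ns}\bigl(1+\dims\cdot\vadj{\Delta_\sigma}\Choi_i\vvec{\Delta_{\sigma'}}\bigr) \;\le\; \exp\!\Bigl(\dims\sum_{i=1}^{\ns}\vadj{\Delta_\sigma}\Choi_i\vvec{\Delta_{\sigma'}}\Bigr) \;=\; \exp\bigl(\ns\dims\cdot\vadj{\Delta_\sigma}\avgChoi\vvec{\Delta_{\sigma'}}\bigr),
\end{equation*}
where the last equality uses the linearity $\avgChoi=\frac{1}{\ns}\sum_i\Choi_i$ from \eqref{equ:average-luders-channel}. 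Taking the expectation over $(\sigma,\sigma')$ and subtracting $1$ yields exactly \eqref{equ:chi-square-quantum}.

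The argument is essentially a careful bookkeeping exercise, and the only genuinely delicate step is step 2: correctly identifying the surviving bilinear form as $\vadj{\Delta_\sigma}\Choi_i\vvec{\Delta_{\sigma'}}$ and verifying that the single-copy factor is nonnegative, so that the inequality $1+y\le e^y$ can be applied. The product-to-exponential conversion is where the assumption of fixed (product) measurements is crucially used; for randomized schemes one would have to first condition on the shared randomness $R$ and then perform the same calculation, which is beyond the scope of this particular lemma.
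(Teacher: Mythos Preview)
Your proof is correct and follows essentially the same route as the paper: the paper invokes the Ingster--Pollard identity (\cref{lem:chi-square-expansion}) to get $\chisquare{\expectDistrOf{\sigma}{\bfP_\sigma}}{\bfP_{\qmm}}=\expectDistrOf{\sigma,\sigma'}{\prod_i(1+H_i)}-1$, computes $H_i=\dims\,\vadj{\Delta_\sigma}\Choi_i\vvec{\Delta_{\sigma'}}$ via Born's rule, and then applies $1+x\le e^x$ termwise; you carry out the same decoupling by hand. Your explicit remark that each factor $1+y_i=\sum_x \p_\sigma^{(i)}(x)\p_{\sigma'}^{(i)}(x)/\p_{\qmm}^{(i)}(x)\ge 0$ is a nice touch, since nonnegativity of the factors (not the inequality $1+y\le e^y$ itself, which holds for all real $y$) is what justifies multiplying the termwise bounds.
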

\begin{proof}
    We can directly bound the chi-square divergence using the following lemma.
    \begin{lemma}[\cite{Pollard:2003}]
    \label{lem:chi-square-expansion}
        Let $\bfP=\p\supparen{1}\otimes \cdots\otimes \p\supparen{n}$ be a fixed product distribution and $\bfQ_\theta=\q_{\theta}\supparen{1}\otimes \cdots \otimes \q_{\theta}\supparen{n}$ be parameterized by a random variable $\theta$. Then
        \[
        \chisquare{\expectDistrOf{\theta}{\bfQ_{\theta}}}{\bfP}=\expectDistrOf{\theta, \theta'}{\prod_{i=1}^n(1+H_i(\theta, \theta'))}-1,
        \]
        where $\theta'$ is an independent copy, and 
        $
        H_i(\theta, \theta')\eqdef \expectDistrOf{x\sim \p\supparen{i}}{\delta_\theta\supparen{i}(x)\delta_{\theta'}\supparen{i}(x)},\;\delta_{\theta}\supparen{i}(x)\eqdef\frac{\q_\theta\supparen{i}(x)-\p\supparen{i}(x)}{\p\supparen{i}(x)}.
        $
    \end{lemma}
    We apply \cref{lem:chi-square-expansion} with $\bfP=\bfP_{\qmm}$ and $\bfQ_\sigma=\bfP_{\sigma}$, so that $\delta_{\sigma}\supparen{i}(x)=\Paren{\p_{\sigma}\supparen{i}(x)-\p_{\qmm}\supparen{i}(x)}/{\p_{\qmm}\supparen{i}(x)}$.
    We can now evaluate $H_i(\sigma, \sigma')$ by expanding the probabilities with the Born's rule,
    \begin{align*}
        &H_i(\sigma, \sigma') \\
        &=\expectDistrOf{x\sim\p_{\qmm}\supparen{i}}{\frac{(\p_{\sigma}\supparen{i}(x)-\p_{\qmm}\supparen{i}(x))(\p_{\sigma'}\supparen{i}(x)-\p_{\qmm}\supparen{i}(x))}{(\p_{\qmm}\supparen{i}(x))^2}}\\
        &=\sum_{x}\frac{(\p_{\sigma}\supparen{i}(x)-\p_{\qmm}\supparen{i}(x))(\p_{\sigma'}\supparen{i}(x)-\p_{\qmm}\supparen{i}(x))}{\p_{\qmm}\supparen{i}(x)}\\
        &=\sum_{x}\frac{\Tr[M_x\supparen{i}(\sigma-\qmm)]\Tr[M_x\supparen{i}(\sigma'-\qmm)]}{\Tr[M_x\supparen{i}]/\dims} & \text{(Born's rule)}\\
        &=\dims\sum_{x}\frac{\vvdotprod{\sigma-\qmm}{M_x\supparen{i}}\vvdotprod{M_x\supparen{i}}{\sigma'-\qmm}}{\Tr[M_x]} & \text{(Matrix inner product)}\\
        &=\dims \vadj{\Delta_{\sigma}}\Choi_i\vvec{\Delta_{\sigma'}}.
    \end{align*}
    
    Then, using~\cref{lem:chi-square-expansion}, and the fact that $1+x\le \exp(x)$,
    \begin{align*}
        \chisquare{\expectDistrOf{\sigma\sim\mathcal{D}}{\bfP_{\sigma}}}{\bfP_{\qmm}}&=\expectDistrOf{\sigma, \sigma'}{\prod_{i=1}^n(1+H_i(\sigma, \sigma'))}-1\\
        &\le \;\expectDistrOf{\sigma,\sigma'}{\exp\left\{\sum_{i=1}^nH_i(\sigma, \sigma')\right\}}-1\\
        &=\expectDistrOf{\sigma,\sigma'}{\exp\left\{\dims\sum_{i=1}^\ns\vadj{\Delta_{\sigma}}\Choi_i\vvec{\Delta_{\sigma'}}\right\}}-1\\
        &=\;\expectDistrOf{\sigma, \sigma'}{\exp\{\ns\dims \vadj{\Delta_{\sigma}}\avgChoi_i\vvec{\Delta_{\sigma'}}\}}-1.
    \end{align*}
    The last step follows by linearity and that $\avgChoi=\frac{1}{\ns}\sum_{i=1}^{\ns}\Choi_i$. 
\end{proof}

\paragraph{Explaining the example in~\cref{sec:fixed-disadvantage}.} We now use~\cref{lem:decoupled-chi-square} to explain why choosing a fixed basis measurement $\{\qproj{x}\}_{x=0}^{\dims-1}$ for all copies as in~\cref{sec:fixed-disadvantage} would fail. Since there are only $\dims$ rank-1 projectors, the rank of $\avgChoi$ is $\dims$, but $\avgChoi$ has a dimension of $\dims^2\times\dims^2$ and thus there are a total of $\dims^2-\dims$ eigenvectors with 0 eigenvalues. From~\cref{prop:perturbation-trace-distance}, we know that there must exist a trace-0 $\Delta$ in the 0-eigenspace such that $\sigma=\qmm+\Delta\in\mathcal{P}_{\eps}$. For this particular $\sigma$ the upper bound in~\eqref{equ:chi-square-quantum} is 0, and thus it is impossible to distinguish $\qmm$ and $\sigma$. This is consistent with the discussion in~\cref{sec:fixed-disadvantage}.

We can make a more general argument that to avoid the catastrophic failure similar to the dummy example in~\cref{sec:fixed-disadvantage}, $\avgChoi$ has to be nearly full-rank: $\text{rank}(\avgChoi)\ge (1-o(1))\dims^2$. Thus $(1-o(1))\dims^2$ linearly independent components are needed in all the POVMs. Indeed if otherwise, the dimension of the 0-eigenspace of $\avgLuders$ is $\Omega(\dims^2)$, we can again invoke~\cref{prop:perturbation-trace-distance} (perhaps with some different constants) to find a \emph{single fixed} $\sigma$ that completely fools the measurement scheme.

\subsection{The new lower bound construction}
\label{sec:new-construction}
We now describe a generic construction of a distribution over density matrices that will serve as hard case for both fixed and randomized measurements. {We will take a finite set of trace-0 orthonormal Hermitian matrices. Then, we take a linear combination of these matrices with coefficients chosen at random to be $\pm 1$ (appropriately normalized). When we add maximally mixed state $\qmm$ to this distribution's output we obtain a perturbed distribution around $\qmm$.}

The construction is defined in \cref{def:perturbation}. As discussed in \cref{sec:summary:new-construction}, it can be viewed as a generalization of Paninski's construction \cref{eq:paninski} frequently used for discrete distribution learning and testing.
 \begin{definition}
 \label{def:perturbation}
     Let $\ell\in[\frac{\dims^2}{2}, \dim^2-1]$ and $\hbasis=(V_1, \ldots, V_{\dims^2}=\frac{\eye_\dims}{\sqrt{\dims}})$ be an orthonormal basis of $\Herm{\dims}$, and $\cd$ be a universal constant. Let  $z=(z_1, \ldots, z_\ell)$ be uniformly drawn from $\{-1, 1\}^\ell$,
     \begin{equation}
         \Delta_z = \frac{c\dst}{\sqrt{\dims}}\cdot\frac{1}{\sqrt{\ell}}\sum_{i=1}^\ell z_iV_i, \quad \barDelta_z= \Delta_z\min\left\{1, \frac{1}{\dims \opnorm{\Delta_z}}\right\}.
         \label{equ:delta_z}
     \end{equation}
     Finally we set $\sigma_z=\qmm + \barDelta_z$ whose distribution we denote as $\ptbDistr(\hbasis)$.
       
 \end{definition}
 
The goal of normalizing $\Delta_z$ is to ensure that $\sigma_z$ is a valid density matrix. 
However, after normalization, the trace distance $\tracenorm{\sigma_z-\qmm}$ may not be greater than $\dst$. 
Nevertheless, we can show that the probability of this bad event is negligible. The central claim is ~\cref{thm:rand-mat-opnorm-concentration} which states that the operator norm of a random matrix with independently perturbed orthogonal components is $O(\sqrt{\dims})$ with high probability. The proof is in~\cref{app:prop:perturbation-trace-distance}.
\begin{restatable}{theorem}{randmatopnorm}
\label{thm:rand-mat-opnorm-concentration}
    Let $V_1, \ldots, V_{\dims^2}\in\C^{\dims\times \dims}$ be an orthonormal basis of $\C^{\dims\times \dims}$ and $\ptb_1, \ldots, \ptb_{\dims^2}\in\{-1, 1\}$ be independent symmetric Bernoulli random variables. Let $W=\sum_{i=1}^{\ell}\ptb_iV_i$ where $\ell\le \dims^2$. For all $\alpha>0$, there exists $\cop_\alpha$, {which is increasing in $\alpha$} such that
    \[
    \probaOf{\opnorm{W}>\cop_\alpha\sqrt{\dims}}\le 2\exp\{-\alpha\dims\}.
    \]
\end{restatable}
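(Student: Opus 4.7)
The plan is to reduce the operator norm to a supremum of scalar Rademacher sums over pairs of unit vectors, control each such sum by Hoeffding's inequality using a Parseval-type identity, and finish with a standard $\varepsilon$-net union bound. The crucial observation that avoids the $\sqrt{\log d}$ loss of matrix Khintchine/Bernstein is that because $\{V_i\}_{i=1}^{d^2}$ is orthonormal in the Hilbert--Schmidt inner product, for any rank-one $u v^\dagger$ with $\|u\|=\|v\|=1$ we have, by Parseval,
\[
\sum_{i=1}^{\ell} |u^\dagger V_i v|^2 \;=\; \sum_{i=1}^{\ell} \bigl|\hdotprod{V_i}{u v^\dagger}\bigr|^2 \;\le\; \|u v^\dagger\|_{\mathrm{HS}}^2 \;=\; 1.
\]

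First, I would write $\opnorm{W} = \sup_{u,v} |u^\dagger W v|$, where the supremum runs over unit vectors in $\C^{\dims}$, and expand $u^\dagger W v = \sum_{i=1}^{\ell} \ptb_i \, a_i(u,v)$ with $a_i(u,v) \eqdef u^\dagger V_i v$. For a fixed pair $(u,v)$, apply Hoeffding's inequality separately to the real and imaginary parts of this sum; using the bound on $\sum_i |a_i|^2$ above, this yields $\Pr[|u^\dagger W v|> t]\le 4\exp(-t^2/8)$.

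Next, I would pass from a fixed pair to a supremum via a standard $\varepsilon$-net argument. Viewing $\C^{\dims}$ as $\R^{2\dims}$, let $\mathcal{N}_\varepsilon$ be an $\varepsilon$-net of the unit sphere with $|\mathcal{N}_\varepsilon|\le (3/\varepsilon)^{2\dims}$, and use the well-known discretization inequality $\opnorm{W}\le (1-2\varepsilon)^{-1}\max_{u,v\in\mathcal{N}_\varepsilon}|u^\dagger W v|$. Union-bounding over $\mathcal{N}_\varepsilon\times\mathcal{N}_\varepsilon$ gives
\[
\Pr\bigl[\opnorm{W}>t\bigr] \;\le\; 4\Paren{\tfrac{3}{\varepsilon}}^{4\dims}\exp\!\Paren{-\tfrac{(1-2\varepsilon)^2 t^2}{8}}.
\]
Fixing, e.g., $\varepsilon=1/4$ and plugging in $t=\cop_\alpha\sqrt{\dims}$ reduces the claim to choosing $\cop_\alpha$ so that $(1-2\varepsilon)^2\cop_\alpha^2/8 \ge \alpha + 4\log(12) + (\log 4)/\dims$, giving $\cop_\alpha = O(\sqrt{\alpha+1})$, which is monotonically increasing in $\alpha$ as required.

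The only real work is bookkeeping of constants (to guarantee $\cop_\alpha$ is increasing in $\alpha$ and that the prefactor absorbs into the $2\exp(-\alpha\dims)$ right-hand side); there is no conceptual obstacle, since the Parseval identity gives the ``variance proxy'' of size $1$ essentially for free, which is exactly what is needed to beat the matrix concentration bound and get $O(\sqrt{\dims})$ rather than $O(\sqrt{\dims\log\dims})$.
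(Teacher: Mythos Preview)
Your proof is correct and takes a genuinely different, more elementary route than the paper. The paper fixes a single unit vector $x$, writes $Wx=V_x\Pi_\ell z$ with $V_x=[V_1x,\ldots,V_{d^2}x]$, proves the nontrivial \emph{isometry} claim $V_xV_x^\dagger=\eye_\dims$ (so $\opnorm{V_x}=1$, $\hsnorm{V_x}=\sqrt{\dims}$), and then invokes a Hanson--Wright style concentration result (\cite[Theorem~6.3.2]{vershynin2018high}) to bound $\Pr[\normtwo{Wx}>(1+s)\sqrt{\dims}]$; it finishes with a one-sided $1/2$-net over $x$. You instead linearize fully to the scalar $u^\dagger Wv$, use Parseval for the orthonormal system $\{V_i\}$ to get $\sum_i|u^\dagger V_iv|^2\le 1$ directly (this is in fact the quadratic form of the paper's isometry identity, obtained without the separate matrix computation), apply plain Hoeffding, and close with a two-sided net over $(u,v)$. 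Your argument avoids the heavier concentration lemma and the isometry proof at the cost of a second net, which loosens the constant $\cop_\alpha$ (the paper extracts $\cop_1\le 10$, which feeds into the downstream choice $c\le 10\sqrt 2$; your bound gives a larger but still absolute $\cop_1$). For the theorem as stated---existence of an increasing $\cop_\alpha$---your approach is entirely adequate and arguably cleaner.
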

\begin{remark}
    Standard random matrix theory (e.g. \cite{tao2023topics}[Corollary 2.3.5]) states that if each entry of $W$ is independent and uniform from $\{-1, 1\}$, i.e. $W=\sum_{i,j}z_{ij}E_{ij}$ where $E_{ij}$ is a matrix with 1 at position $(i, j)$ and 0 everywhere else, then $\opnorm{W}=O(\sqrt{\dims})$ with high probability.~\cref{thm:rand-mat-opnorm-concentration} generalizes this argument to arbitrary basis $\{V_i\}_{i=1}^{\dims^2}$. {This could be of independent interest.}
\end{remark}

An immediate corollary of~\cref{thm:rand-mat-opnorm-concentration} is that with appropriately chosen constant $c$ in~\cref{def:perturbation}, $\sigma_z$ is $\dst$ far from $\qmm$ with overwhelming probability, so that we can almost neglect the effect of normalization.
\begin{corollary}
\label{prop:perturbation-trace-distance}
    Let $\dims ^2/2\le \ell\le \dims^2-1$. Let  $z$ be drawn from a uniform distribution over $\{-1,1\}^{\ell}$ , and $\Delta_z, \sigma_z$ are as defined in~\cref{def:perturbation}. Then, there exists a universal constant $\cd\le 10\sqrt{2}$, such that for $\dst<\frac{1}{\cd^2}$, with probability at least $1-2\exp(-\dims)$, $\opnorm{\Delta_z}\le 1/\dims$ and $\tracenorm{\Delta_z}\ge \dst$.   
\end{corollary}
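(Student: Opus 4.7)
The plan is to deduce both conclusions from a single invocation of Theorem~\ref{thm:rand-mat-opnorm-concentration}. Writing $W := \sum_{i=1}^\ell z_i V_i$, so that $\Delta_z = \frac{\cd\eps}{\sqrt{\dims\ell}}\,W$, I would apply Theorem~\ref{thm:rand-mat-opnorm-concentration} with $\alpha = 1$ to obtain a single ``good'' event
$$\mathcal{E} := \bigl\{ \opnorm{W} \le \cop_1\sqrt{\dims} \bigr\}, \qquad \Pr[\mathcal{E}] \ge 1 - 2\exp(-\dims),$$
and then show that, conditional on $\mathcal{E}$, both conclusions of the corollary hold deterministically from simple norm manipulations.

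For the operator-norm bound, plug in and use $\ell \ge \dims^2/2$:
$$\opnorm{\Delta_z} \;=\; \frac{\cd\eps}{\sqrt{\dims\ell}}\,\opnorm{W} \;\le\; \frac{\cd\eps\cop_1}{\sqrt{\ell}} \;\le\; \frac{\cd\eps\cop_1\sqrt{2}}{\dims}.$$
Since $\eps < 1/\cd^2$, the right-hand side is at most $\cop_1\sqrt{2}/(\cd\dims)$, which is $\le 1/\dims$ exactly when $\cd \ge \cop_1\sqrt{2}$. The prescribed ceiling $\cd \le 10\sqrt{2}$ encodes the Bai--Yin-type estimate $\cop_1 \le 10$ that Theorem~\ref{thm:rand-mat-opnorm-concentration} supplies at $\alpha = 1$.

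For the trace-norm lower bound, the key observation is that the Hilbert--Schmidt norm of $\Delta_z$ is \emph{deterministic} thanks to orthonormality of $\{V_i\}$:
$$\hsnorm{\Delta_z}^2 \;=\; \frac{\cd^2\eps^2}{\dims\ell}\sum_{i=1}^\ell z_i^2 \;=\; \frac{\cd^2\eps^2}{\dims}.$$
Combining this with the elementary spectral inequality $\tracenorm{A} \ge \hsnorm{A}^2 / \opnorm{A}$ (which follows from $\sum\lambda_i^2 \le \opnorm{A}\sum|\lambda_i|$ for Hermitian $A$) and the operator-norm estimate above yields
$$\tracenorm{\Delta_z} \;\ge\; \frac{\cd^2\eps^2/\dims}{\cd\eps\cop_1\sqrt{2}/\dims} \;=\; \frac{\cd\eps}{\cop_1\sqrt{2}} \;\ge\; \eps,$$
where the last inequality is again just $\cd \ge \cop_1\sqrt{2}$. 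Both conclusions therefore hold on $\mathcal{E}$.

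The real work is already encapsulated in Theorem~\ref{thm:rand-mat-opnorm-concentration}, which needs a symmetrization or non-commutative Khintchine argument to handle operator-norm concentration for an \emph{arbitrary} orthonormal basis (rather than only the matrix-unit basis treated by classical random-matrix theory). Taking that theorem as a black box, the corollary reduces to a couple of norm comparisons and a single Bernoulli concentration event; the only bookkeeping subtlety is checking that the same $\cd$ simultaneously serves both the operator-norm cap and the trace-norm floor, which is exactly what the condition $\cd \ge \cop_1\sqrt{2}$ guarantees.
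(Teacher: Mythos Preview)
Your proposal is correct and follows essentially the same approach as the paper: both apply Theorem~\ref{thm:rand-mat-opnorm-concentration} with $\alpha=1$ to bound $\opnorm{\Delta_z}$, then use the deterministic value $\hsnorm{\Delta_z}=\cd\eps/\sqrt{\dims}$ together with the H\"older-type inequality $\tracenorm{A}\ge\hsnorm{A}^2/\opnorm{A}$ to get the trace-norm floor, and choose $\cd=\kappa_1\sqrt{2}$ so that both the operator-norm cap and the trace-norm floor hold simultaneously. The only cosmetic difference is that the paper plugs in $\ell\ge \dims^2/2$ at a slightly different stage of the chain of inequalities.
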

\begin{proof}
By H\"older's inequality, we have that for all matrices $A$,
\[
\opnorm{A}\tracenorm{A}\ge \hsnorm{A}^2.
\]
Note that $\Delta_z=\frac{\cd\dst}{\sqrt{\dims\ell}}W$ and $\hsnorm{\Delta_z}=\frac{c\dst}{\sqrt{\dims}}$. Thus setting $\alpha=1$ and $\cop=\cop_1$ in~\cref{thm:rand-mat-opnorm-concentration}, with probability at least $1-2\exp(-\dims)$,
\[
\opnorm{\Delta_z}\le \frac{\cd\dst}{\sqrt{\dims\ell}}\cdot \kappa \sqrt{\dims}=\frac{c\kappa\dst}{\sqrt{\ell}}.
\]
 This implies that
\[
\tracenorm{\Delta_z}\ge \hsnorm{\Delta_z}^2/\opnorm{\Delta_z}\ge \frac{c\dst}{\kappa}\cdot\frac{\sqrt{\ell}}{\dims}.
\]
In the proof of~\cref{thm:rand-mat-opnorm-concentration} in~\cref{app:prop:perturbation-trace-distance}, we can show that $\kappa=\kappa_1\le 10$. Thus choosing $\cd=\sqrt{2}\kappa\le 10\sqrt{2}$, we guarantee that $\tracenorm{\Delta_z}>\dst$ due to $\ell\ge \dims^2/2$. As long as $\dst\le \frac{1}{200}$, we have $\opnorm{\Delta_z}\le 1/\dims$ and thus $\sigma_z=\qmm + \Delta_z$ is a valid density matrix. This completes the proof of~\cref{prop:perturbation-trace-distance}.
\end{proof}

\subsection{Bounding the min-max and max-min divergences}

To complete the lower bound proof, we apply \cref{lem:decoupled-chi-square} to our new lower bound construction in \cref{def:perturbation}. 

Different bounds for min-max and max-min divergences in~\cref{lem:min-max-and-max-min} are due to whether or not nature can choose the basis $\hbasis$ dependent on $\avgLuders$, which in turn depends on the measurements $\POVM^\ns$. For randomized schemes, we need to upper bound the min-max divergence, and we can simply choose a fixed $\hbasis$ that is uniformly bad for all $\POVM^\ns$. For fixed measurements, however, under the max-min framework, nature could choose the hard distribution depending on $\POVM^\ns$. Specifically, with $\hbasis=\hbasis_{\avgLuders}$ and $\ell$ small, $\sigma_z-\qmm$ completely lies in an eigenspace of $\avgLuders$ with the $\ell$ smallest eigenvalues, thus generalizing the intuition from the toy example in~\cref{sec:toy-example}.

\cref{thm:chi-square-upper-bound} upper bounds the chi-square divergence in~\eqref{equ:chi-square-quantum} for the construction in~\cref{def:perturbation}. We obtain different bounds depending on the choice of $\hbasis$. Specifically, if $\hbasis$ is the eigenbasis of $\avgLuders$ where $V_1, \ldots, V_\ell$ have the smallest eigenvalues, the upper bound would be much tighter. 

\begin{theorem}
\label{thm:chi-square-upper-bound}
    Let $\hbasis=(V_1, \ldots, V_{\dims^2}=\frac{\eye_\dims}{\sqrt{\dims}})$ be an orthonormal basis of $\Herm{\dims}$, and $\povmset$ be the set of allowed POVMs. Let $\sigma,\sigma'\sim \ptbDistr(\hbasis)$ and $\Delta_\sigma=\sigma-\qmm$. Then for $\ns\le 
    {\frac{\dims^2}{6\cd^2\eps^2\max_{\POVM\in\povmset}\hsnorm{\Luders_{\POVM}}}}$,
    \begin{equation}
          \log\expectDistrOf{\sigma,\sigma'}{\exp\left\{\ns\dims \cdot\vadj{\Delta_\sigma}\avgChoi\vvec{\Delta_{\sigma'}} \right\}}= \bigO{\frac{\ns^2\eps^4}{\dims^4}\max_{\POVM\in\povmset}\hsnorm{\Luders_{\POVM}}^2 }. 
          \label{equ:chi-square-random}
    \end{equation}
    If $\hbasis$ is the eigenbasis of $\avgLuders$ with eigenvalues $\lambda(V_1)\le \cdots\le \lambda(V_{\dims^2})$, then for $\ns\le {\frac{\dims^3}{9\cd^2\eps^2\max_{\POVM\in\povmset}\tracenorm{\Luders_{\POVM}}}}$,
    \begin{equation}
    \label{equ:chi-square-fixed}
           \log\expectDistrOf{\sigma,\sigma'}{\exp\left\{\ns\dims \cdot\vadj{\Delta_\sigma}\avgChoi\vvec{\Delta_{\sigma'}} \right\}}=\bigO{\frac{\ns^2\eps^4}{\dims^6}\max_{\POVM\in\povmset}\tracenorm{\Luders_{\POVM}}^2 }. 
    \end{equation}
\end{theorem}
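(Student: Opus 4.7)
The plan is to expand the bilinear form into a Rademacher chaos of degree two, bound its moment generating function via Bonami--Beckner hypercontractivity, and finally bound the relevant Hilbert--Schmidt norm of the effective matrix in each of the two settings.

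Writing $\sigma=\qmm+\barDelta_z$ from~\cref{def:perturbation}, we have $\barDelta_z=\alpha_z\Delta_z$ with $\alpha_z\eqdef\min(1,1/(\dims\opnorm{\Delta_z}))\in[0,1]$ and $\Delta_z=\frac{\cd\eps}{\sqrt{\dims\ell}}\sum_i z_iV_i$. Set $Y\eqdef\vadj{\Delta_z}\avgChoi\vvec{\Delta_{z'}}$. Under $z\mapsto -z$ the quantity $\alpha_z$ is unchanged while $Y\mapsto -Y$, so both $\ns\dims Y$ and $\ns\dims\alpha_z\alpha_{z'}Y$ are symmetric about zero; hence their exponential expectations equal the corresponding $\cosh$-expectations. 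Since $\cosh$ is even and increasing on $[0,\infty)$ and $|\alpha_z\alpha_{z'}Y|\le|Y|$, we obtain the pointwise domination
\[
\EE\!\left[\exp\!\left(\ns\dims\vadj{\Delta_\sigma}\avgChoi\vvec{\Delta_{\sigma'}}\right)\right] \;\le\; \EE\!\left[\exp(\ns\dims Y)\right].
\]
This sidesteps any bookkeeping for the clipping operation and reduces the problem to the moment generating function of the Rademacher bilinear form. Expanding,
\[
\ns\dims Y \;=\; t\cdot z^T A z',\qquad t\eqdef \frac{\ns\cd^2\eps^2}{\ell},\quad A_{ij}\eqdef\vadj{V_i}\avgChoi\vvec{V_j},
\]
with $z,z'\in\{-1,1\}^\ell$ independent uniform Rademacher vectors, so the core estimate becomes $\log\EE[\exp(tz^TAz')]=O(t^2\hsnorm{A}^2)$ whenever $t\hsnorm{A}$ is bounded.

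Since $f\eqdef z^TAz'$ is a mean-zero multilinear polynomial of degree two in the independent Rademacher entries of $(z,z')$, odd moments vanish by symmetry, and Bonami--Beckner hypercontractivity yields $\EE[f^{2m}]\le(2m-1)^{2m}\hsnorm{A}^{2m}$ using $\EE[f^2]=\hsnorm{A}^2$. Substituting into the Taylor series and applying Stirling's bound $(2m-1)^{2m}/(2m)!\le e^{2m}/\sqrt{4\pi m}$,
\[
\EE[\exp(tf)]\;\le\;1+\sum_{m\ge 1}\frac{(et\hsnorm{A})^{2m}}{\sqrt{4\pi m}}\;\le\;1+O\!\left(\frac{(et\hsnorm{A})^2}{1-(et\hsnorm{A})^2}\right),
\]
so $\log\EE[\exp(tf)]=O(t^2\hsnorm{A}^2)$ whenever $et\hsnorm{A}\le 1/2$.

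It remains to bound $\hsnorm{A}^2$ in both settings. By the triangle inequality $\|\avgChoi\|_{S_p}\le\max_{\POVM\in\povmset}\|\Choi_\POVM\|_{S_p}$ for every Schatten norm, and $\opnorm{\avgChoi}\le 1$ by~\cref{lem:mic-opnorm}. For an arbitrary orthonormal basis $\hbasis$ (randomized case), $A$ is the principal submatrix of $\avgChoi$ expressed in the orthonormal basis $\{\vvec{V_i}\}$, hence $\hsnorm{A}^2\le\hsnorm{\avgChoi}^2\le\max_\POVM\hsnorm{\Luders_\POVM}^2$. For the eigenbasis of $\avgLuders$ with the $\ell$ smallest eigenvalues (fixed case), $A$ is diagonal with entries $\lambda_1\le\cdots\le\lambda_\ell$; positive-semidefiniteness of $\avgChoi$ combined with $\sum_{i=\ell}^{\dims^2}\lambda_i\le\tracenorm{\avgChoi}$ gives $\lambda_\ell\le\tracenorm{\avgChoi}/(\dims^2-\ell+1)\le 2\max_\POVM\tracenorm{\Luders_\POVM}/\dims^2$ for $\ell\approx\dims^2/2$, so $\hsnorm{A}^2\le\ell\lambda_\ell^2=O(\max\tracenorm{\Luders}^2/\dims^2)$. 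Taking $\ell=\Theta(\dims^2)$ and substituting $t=\ns\cd^2\eps^2/\ell$ into $O(t^2\hsnorm{A}^2)$ produces~\eqref{equ:chi-square-random} and~\eqref{equ:chi-square-fixed} respectively, and the hypotheses on $\ns$ translate precisely to the admissibility condition $et\hsnorm{A}\le 1/2$.

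The main obstacle is the hypercontractive MGF estimate in the second paragraph, where the constraint $et\hsnorm{A}=O(1)$ must align exactly with the admissible range of $\ns$ asserted in the theorem, and where the constants must be tracked carefully enough that the geometric-series remainder is genuinely of order $t^2\hsnorm{A}^2$ rather than a higher power. By contrast, the clipping is dispatched cleanly by the $\cosh$-domination argument above, and the norm bounds on $A$ in the last paragraph reduce to a short eigenvalue-ordering computation.
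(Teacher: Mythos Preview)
Your proposal is correct and follows the same overall architecture as the paper: reduce to the MGF of a decoupled Rademacher bilinear form $z^\top A z'$, bound it by $O(t^2\hsnorm{A}^2)$ in an admissible range of $t$, and then control $\hsnorm{A}$ differently in the two regimes via the same eigenvalue-ordering computation. Two points of comparison are worth recording.

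First, your $\cosh$-domination step for the clipping is genuinely nicer than what the paper does. The paper (\cref{claim:projected-chi-square-ub}) instead conditions on the good event $\{\alpha_z=\alpha_{z'}=1\}$ and uses the exponential tail from \cref{prop:perturbation-trace-distance} to pay an additive $4/e^{\dims}$, which must then be absorbed by requiring $\dims\ge 16$. Your symmetry argument removes this bookkeeping entirely and yields a clean pointwise inequality.

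Second, for the MGF the paper invokes a ready-made Rademacher-chaos lemma (\cref{lem:radamacher-mgf}) whose admissibility condition is $\lambda\opnorm{H}<1$, not your $et\hsnorm{A}\le 1/2$. Since $\opnorm{A}\le\hsnorm{A}$, the paper's condition is less restrictive, and this is precisely what lets the stated constants $6$ and $9$ in the hypotheses on $\ns$ go through; your hypercontractive route would force somewhat larger constants there, so your claim that the hypotheses ``translate precisely'' is optimistic. One further small omission: you should note that $A$ is real-valued (the paper verifies this via Hermiticity preservation of $\avgLuders$ acting on the real Hilbert space $\Herm{\dims}$), since otherwise $z^\top Az'$ is complex and neither the symmetry nor the moment argument applies as written.
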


We defer the proof to~\cref{app:thm:chi-square-upper-bound} and proceed to prove the main result \cref{thm:lower-channel}. For randomized measurements, we choose $\hbasis$ to be an arbitrary basis that satisfies~\cref{def:perturbation} (e.g. the generalized Gell-Mann basis, or the Pauli basis). Then combining~\eqref{equ:min-max-lb}~\eqref{equ:chi-square-quantum}~\eqref{equ:chi-square-random}, we must have
\begin{align*}
    \frac{2}{25}\le \exp\left\{\bigO{\frac{\ns^2\eps^4}{\dims^4}\max_{\POVM\in\povmset}\hsnorm{\Luders_{\POVM}}^2 }\right\}-1.
\end{align*}
Therefore $\ns = \bigOmega{\frac{\dims^2}{\eps^2\max_{\POVM\in \povmset}\hsnorm{\Luders_{\POVM}}}}$ for randomized measurements. 

To bound the max-min divergence~\eqref{equ:max-min-lb} for fixed measurements, for all fixed $\POVM^n$ we choose we $\hbasis$ as stated in~\cref{thm:chi-square-upper-bound}. Combining~\eqref{equ:max-min-lb}~\eqref{equ:chi-square-quantum}~\eqref{equ:chi-square-fixed},
\begin{align*}
    \frac{2}{25}\le \exp\left\{\bigO{\frac{\ns^2\eps^4}{\dims^6}\max_{\POVM\in\povmset}\tracenorm{\Luders_{\POVM}}^2 }\right\}-1.
\end{align*}
Rearranging the terms we get $\ns = \bigOmega{\frac{\dims^3}{\eps^2\max_{\POVM\in \povmset}\tracenorm{\Luders_{\POVM}}}}$ for fixed measurements. This completes the proof of \cref{thm:lower-channel}.

\section{Algorithm for randomized $\ab$-outcome measurements}
\label{sec:rand-k-upper}
Recall that $\dims=2^{\nqubits}$. We further assume that $\ab\le\dims$ is a power of 2 without loss of generality\footnote{If not, we can round $\ab$ down to the nearest power of 2, and the copy complexity only differs by a constant.}, so that $\dims/\ab$ is an integer. We use $\Haar{\dims}$ to denote the Haar measure over $\dims\times\dims$ unitary matrices.

The idea of the algorithm is to sample $\ab$ orthogonal projections $\Proj_1, \ldots, \Proj_{\ab}$ uniformly from the Haar measure with $\rank(\Proj_i)=r\eqdef \dims/\ab$. Then we apply POVM $\POVM\eqdef\{\Proj_j\}_{j=1}^{\ab}$ to all copies. Finally, we apply classical identity testing to the outcomes $\bx$.  Details are described in~\cref{alg:rand-k}. It generalizes the algorithm in~\cite{BubeckC020} to $\ab<\dims$. 
\begin{algorithm}
\caption{State certification with randomized $\ab$-outcome measurements}
\label{alg:rand-k}
    \begin{algorithmic}
        \State \textbf{Input:} $\ns$ copies of an unknown state $\rho$, state description $\qkn$, desired accuracy $\eps$.
        \State \textbf{Output:} {\isthestate} if $\rho=\qkn$, {\notthestate} if $\tracenorm{\rho-\qkn}>\eps$.
        \State $r\eqdef\dims/\ab$.
        \State Sample a unitary matrix $U=[\qbit{u_1}, \ldots,\qbit{u_\dims}]$ from the Haar measure $\Haar{\dims}$.
        \State Define a POVM $\POVM_U\eqdef\{\Proj_x^U\}_{x=1}^{\ab}$ where $\Proj_x^{U}=\sum_{i=r(x-1)+1}^{rx}\qproj{u_i}$. 
        \State Let $\p_\rho^U$ be the outcome distribution when applying $\POVM_U$ to $\rho$.
        \State Apply $\POVM_U$ to all copies and obtain outcomes $\bx=(x_1, \ldots, x_{\ns})$.
        \State \Return{TestIdentityL2$(\p_{\qkn}^U, \bx, 0.07\eps/\dims, \delta=0.01)$.}
    \end{algorithmic}
\end{algorithm}

The main step is to upper bound $\normtwo{\p_{\rho}^U}$ and lower bound $\normtwo{\p_{\rho}^U-\p_{\qkn}^U}$ with at least constant probability over $U$. We state the result in~\cref{lem:l2-norm-haar}, which is similar to domain compression for communication-constrained classical distribution testing \cite[Theorem VI.2]{ACT:19:IT2}. The lemma states that under random unitary projection, the $\ell_2$ distance $\normtwo{\p_{\rho}^U-\p_{\qkn}^U}$ roughly remains the same with $\ab$, but the $\ell_2$ norm  $\normtwo{\p_{\rho}^U}$ becomes smaller with larger $\ab$. The proof is in~\cref{app:lem:l2-norm-haar} and involves computing moments of the Haar measure using Weingarten calculus.
    \begin{lemma}[Quantum domain compression]
    \label{lem:l2-norm-haar}
        Let $U\sim \Haar{\dims}$, $\POVM_U\eqdef\{\Proj_x^U\}_{x=1}^{\ab}$ where $\Proj_x^{U}=\sum_{i=r(x-1)+1}^{rx}\qproj{u_i}$ and $\p_\rho^U$ be the outcome distribution when applying $\POVM_U$ to $\rho$. Then, for all states $\rho,\sigma$,
        \[
        \probaOf{\normtwo{\p_{\rho}^U}\le \frac{10}{\sqrt{\ab}}}\ge 0.98, \quad \probaOf{\normtwo{\p_{\rho}^U-\p_{\sigma}^U}\ge \frac{0.07}{\sqrt{\dims}}\hsnorm{\rho-\sigma}}\ge 0.13. 
        \]
    \end{lemma}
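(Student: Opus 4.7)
The plan is to compute the first two moments of $\normtwo{\p_\rho^U}^2$ and $\normtwo{\p_\rho^U-\p_\sigma^U}^2$ under the Haar measure, and then apply Markov's inequality for the first claim and the Paley--Zygmund inequality for the second. Write $\Proj_x^U = U P_x U^\dagger$ where $P_x = \sum_{i=r(x-1)+1}^{rx}\qproj{i}$ is a fixed rank-$r$ projector independent of $U$, so that $\p_\rho^U(x) = \Tr[\rho\Proj_x^U]$ is a quadratic function of the entries of $U$.

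For the first bound, use the swap trick $\Tr[\rho \Proj_x^U]^2 = \Tr[(\rho\otimes\rho)(\Proj_x^U)^{\otimes 2}]$ together with the standard Haar second-moment formula
\[
\EE_U\bigl[(U\otimes U)(P_x\otimes P_x)(U^\dagger\otimes U^\dagger)\bigr] \;=\; \alpha_x \eye + \beta_x \bS,
\]
where $\bS$ is the SWAP operator and $\alpha_x = \tfrac{r(dr-1)}{d(d^2-1)}$, $\beta_x = \tfrac{r(d-r)}{d(d^2-1)}$ follow from $\Tr[P_x^{\otimes 2}]=r^2$ and $\Tr[P_x^{\otimes 2}\bS]=r$. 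Taking inner products with $\rho\otimes \rho$, using $\Tr[\rho]=1$ and $\Tr[\rho^2]\le 1$, and summing over the $\ab$ outcomes yields $\EE_U[\normtwo{\p_\rho^U}^2] \le 1/\ab + 1/\dims \le 2/\ab$ for $\ab\le\dims$. Markov's inequality then gives $\probaOf{\normtwo{\p_\rho^U}^2 > 100/\ab}\le 1/50$, proving the first claim.

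For the second bound, let $\Delta = \rho-\sigma$ and $X = \normtwo{\p_\rho^U-\p_\sigma^U}^2 = \sum_x \Tr[\Delta \Proj_x^U]^2$. Since $\Delta$ is traceless, the $\alpha_x \eye$ term kills the $\Tr[\Delta]^2$ contribution, and the same second moment formula gives $\EE_U[X] = \tfrac{d-r}{d^2-1}\hsnorm{\Delta}^2 \ge \tfrac{1}{2d}\hsnorm{\Delta}^2$ for $\ab\ge 2$. To promote this into a constant-probability lower bound on $X$, I would invoke Paley--Zygmund,
\[
\probaOf{X \ge \theta\, \EE_U[X]} \;\ge\; (1-\theta)^2 \frac{\EE_U[X]^2}{\EE_U[X^2]},
\]
with $\theta$ a small absolute constant, which requires showing $\EE_U[X^2] \le C\,\EE_U[X]^2$ for a universal $C$. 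The fourth moment is computed via the Weingarten formula,
\[
\EE_U\bigl[U^{\otimes 4} M (U^\dagger)^{\otimes 4}\bigr] \;=\; \sum_{\pi,\tau\in S_4} \Wg(\pi\tau^{-1}, d)\, \Tr[M R_\pi]\, R_\tau,
\]
applied to $M = (P_x\otimes P_x)\otimes (P_y\otimes P_y)$ and then contracted against $\Delta^{\otimes 4}$ before summing over $x,y$.

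The main obstacle is keeping this fourth-moment expansion under control. A priori it is a sum of $24^2 = 576$ terms weighted by Weingarten functions, but tracelessness of $\Delta$ eliminates all permutations whose cycle structure on the $\Delta$-slots produces an isolated $\Tr[\Delta]$ factor, and the orthogonality of distinct $P_x$'s $(P_x P_y = 0$ for $x\ne y)$ further collapses the sum. The surviving contributions are organized by how the four $\Delta$-slots pair up with the $P$-slots, giving a leading term of order $\hsnorm{\Delta}^4/d^2$ matching $\EE_U[X]^2$, plus subleading $O(1/d)$ corrections controlled by the standard Weingarten asymptotics $\Wg(\pi,d) = O(d^{-8+|\pi|})$. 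Once $\EE_U[X^2]/\EE_U[X]^2 = O(1)$ is established, choosing $\theta$ so that $\theta \cdot \EE_U[X] \ge (0.07)^2 \hsnorm{\Delta}^2/d$ and tuning the resulting constant yields the claimed success probability $0.13$.
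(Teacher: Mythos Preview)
Your proposal is correct and follows essentially the same approach as the paper: second Haar moment plus Markov for the first inequality, and Paley--Zygmund with a fourth Haar moment (via Weingarten) for the second, exploiting $\Tr[\Delta]=0$ and the orthogonality $P_xP_y=0$ for $x\ne y$ to collapse the sum. The paper carries out the fourth-moment bookkeeping by splitting $\EE[Z^2]$ into a diagonal term $\ab\,\EE[\Tr[\Pi_1\Delta]^4]$ and an off-diagonal term $\ab(\ab-1)\,\EE[\Tr[\Pi_1\Delta]^2\Tr[\Pi_2\Delta]^2]$ and tracking the leading constants to get $\EE[Z^2]/\EE[Z]^2 \approx (\ab+1)/(\ab-1)\le 3$, which is what ultimately yields the numerical $0.13$; your sketch asserts this ratio is $O(1)$ without pinning down the constant, so to finish you would need that same explicit accounting.
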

    
Using union bound and \cref{thm:distr-testing}, \cref{lem:l2-norm-haar} only ensures that \cref{alg:rand-k} outputs the correct answer with probability 0.1 with $\ns = \bigO{\frac{\dims^2}{\eps^2\sqrt{\ab}}}$ copies. We can boost the probability using \cref{alg:rand-k-boost}, which is a standard amplification algorithm. \cref{thm:rand-k-upper} gives its guarantee and completes the upper bound in~\cref{thm:random-k}. 
\begin{algorithm}
\caption{State certification: amplification algorithm}
\label{alg:rand-k-boost}
    \begin{algorithmic}
        \State \textbf{Input:} $\ns$ copies of an unknown state $\rho$, state description $\qkn$, desired accuracy $\eps$.
        \State \textbf{Output:} {\isthestate} if $\rho=\qkn$, {\notthestate} if $\tracenorm{\rho-\qkn}>\eps$.
        \State Divide the copies into $T$ groups of equal size.
        \State For group $i$, run \cref{alg:rand-k} and obtain output $y_i\in \{\isthestate, \notthestate\}$.
        \State Let $b=\frac{1}{T}\sum_{i=1}^T\indic{y_i=\notthestate}, t_1=0.03, t_2=0.1$.
        \If{$b>(t_1+t_2)/2$}
        \Return{\notthestate}
        \Else~\Return{\isthestate}
        \EndIf
    \end{algorithmic}
\end{algorithm}

\begin{theorem}
\label{thm:rand-k-upper}
    Let $\dims\ge 100$ and $2\le \ab\le \dims$,~\cref{alg:rand-k-boost} can test whether $\rho=\qkn$ or $\tracenorm{\rho-\qkn}>\eps$ with probability at least $2/3$ using $\ns=\bigO{\frac{\dims^2}{\eps^2\sqrt{\ab}}}$ copies of $\rho$. 
\end{theorem}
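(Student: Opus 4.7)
The plan is a two-step analysis: first quantify the per-group success probability of \cref{alg:rand-k}, and then use standard amplification via \cref{alg:rand-k-boost} to reach confidence $2/3$. Set the per-group copy count $\ns_0 = \bigO{\dims^2/(\eps^2 \sqrt{\ab})}$ as required by \cref{thm:distr-testing} when we feed it the target precision $0.07\eps/\dims$, failure probability $\delta = 0.01$, and the upper bound $b = 10/\sqrt{\ab}$ on the $\ell_2$ norm of $\p_\qkn^U$.

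For the YES case $\qs = \qkn$, the outcome distributions agree so $\p_\qs^U = \p_\qkn^U$. By the first inequality of \cref{lem:l2-norm-haar}, the chosen parameter $b = 10/\sqrt{\ab}$ is a valid upper bound on $\normtwo{\p_\qkn^U}$ with probability at least $0.98$ over the Haar-random $U$; conditioning on this event, TestIdentityL2 returns \isthestate{} with probability at least $1 - \delta = 0.99$, giving a per-group success probability of at least $0.97$. For the NO case $\tracenorm{\qs-\qkn} > \eps$, first convert to Hilbert--Schmidt norm via $\hsnorm{\qs-\qkn} \geq \tracenorm{\qs-\qkn}/\sqrt{\dims} > \eps/\sqrt{\dims}$; the second inequality of \cref{lem:l2-norm-haar} then yields $\normtwo{\p_\qs^U - \p_\qkn^U} \geq (0.07/\sqrt{\dims})\hsnorm{\qs-\qkn} \geq 0.07\eps/\dims$ with probability at least $0.13$. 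Union-bounding the $\ell_2$-norm event (failure $\leq 0.02$) with the distance-preservation event and with the TestIdentityL2 failure probability $\leq 0.01$, the per-group probability of outputting \notthestate{} is at least $0.13 - 0.02 - 0.01 = 0.10$.

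Amplification via \cref{alg:rand-k-boost} is then routine. Over $T$ independent sub-runs, the fraction $b$ that output \notthestate{} satisfies $\EE[b] \leq 0.03$ in the YES case and $\EE[b] \geq 0.10$ in the NO case, so the thresholds $t_1 = 0.03$ and $t_2 = 0.10$ separate the two regimes by a margin of at least $0.035$ around their midpoint. A Hoeffding bound on the independent Bernoulli indicators $\indic{y_i = \notthestate}$ then shows that $T = O(1)$ groups suffice to drive the overall error below $1/3$, yielding total sample complexity $T \cdot \ns_0 = \bigO{\dims^2/(\eps^2\sqrt{\ab})}$ as claimed. The main technical subtlety, and the reason amplification is unavoidable, is that the distance-preservation step of \cref{lem:l2-norm-haar} only holds with constant probability $0.13$ rather than with high probability; unlike the classical analogue we cannot drive the per-group NO-case error to $o(1)$ through a single tighter concentration argument, which is also why the thresholds in \cref{alg:rand-k-boost} are asymmetric rather than a naive majority.
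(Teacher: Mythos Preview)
Your proposal is correct and follows essentially the same approach as the paper's proof: a per-group analysis of \cref{alg:rand-k} using \cref{lem:l2-norm-haar} and \cref{thm:distr-testing} to establish $\Pr[y_i=\notthestate]\le 0.03$ in the YES case and $\Pr[y_i=\notthestate]\ge 0.10$ in the NO case, followed by Hoeffding-based amplification with the asymmetric thresholds $t_1=0.03$, $t_2=0.10$. The only cosmetic differences are that the paper applies the $\ell_2$-norm bound to $\p_\rho^U$ rather than $\p_{\qkn}^U$ (both are valid since \cref{thm:distr-testing} only requires the minimum to be bounded), and it bounds the NO-case probability as $0.11\cdot(1-\delta)>0.10$ rather than your additive $0.13-0.02-0.01=0.10$; the structure, constants, and conclusion are identical.
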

\begin{proof}
     First, consider the case when $\rho=\qkn$. Then we always have $\p_{\rho}^U=\p_{\qkn}^U$. Substituting the $\ell_2$ norms and $\delta=0.01$ , \cref{thm:distr-testing} guarantees that with $\ns/T=\bigO{\frac{1}{\sqrt{\ab}}\cdot \frac{1}{(\eps/\dims)^2}}=\bigO{\frac{\dims^2}{\eps^2\sqrt{\ab}}}$,
    \[
    \probaOf{y_i=\notthestate\bigg|\p_{\rho}^U\le \frac{10}{\sqrt{\ab}}}\le \delta=0.01.
    \]
    Using the first part of \cref{lem:l2-norm-haar}, when $\rho=\qkn$, we have
    \[
    \probaOf{y_i=\notthestate}\le \probaOf{\normtwo{\p_{\rho}^U}>\frac{10}{\sqrt{\ab}}} + \probaOf{y_i=\notthestate\bigg|\p_{\rho}^U\le \frac{10}{\sqrt{\ab}}}\probaOf{\normtwo{\p_{\rho}^U}\le \frac{10}{\sqrt{\ab}}}\le 0.01+0.02=0.03
    \]

    Then consider the case when $\tracenorm{\rho-\qkn}\ge \eps$. By Cauchy-Schwarz 
    $$\hsnorm{\rho-\qkn}\ge\eps/\sqrt{\dims},$$
    and thus $\normtwo{\p_{\rho}^U-\p_{\sigma}^U}\ge 0.07\eps/\dims$ with probability at least 0.13 by \cref{lem:l2-norm-haar}. 
    
    By union bound, both $\normtwo{\p_{\rho}^U}\le \frac{10}{\sqrt{\ab}}$ and  $\normtwo{\p_{\rho}^U-\p_{\sigma}^U}\ge 0.07\eps/\dims$ with probability at least 0.13-(1-0.98)=0.11.  
    We conclude the algorithm TestIdentityL2 outputs the correct answer $\notthestate$ using $\ns/T = \bigO{\frac{\dims^2}{\eps^2\sqrt{\ab}}}$ copies with probability at least,
    \[
    \probaOf{y_i=\notthestate}\ge 0.11(1-\delta)>0.1.
    \]
    
    Thus, our goal is to distinguish between $\bernoulli{p_1}$ and $\bernoulli{p_2}$ where $p_1\le t_1\eqdef 0.03$ and $p_2\ge t_2\eqdef 0.1$ with constant $T$ number of samples. We use the Hoeffding's inequality,
    \begin{lemma}[Hoeffding]
    \label{lem:hoeffding}
        Let $X_1, \ldots, X_n$ be independent random variables with $a_i\le X_i\le b_i$ and $X=\sum_{i=1}^nX_i$. Let $\expect{X}=\mu$. Then,
        \[
        \probaOf{X-\mu>t}\le \exp\left\{-\frac{2t^2}{\sum_{i=1}^n(b_i-a_i)^2}\right\}.
        \]
    \end{lemma}
    By symmetry, the same bound holds for the left tail probability $\probaOf{X-\mu<-t}$.
    Thus when $T=\frac{2}{(t_2-t_1)^2}\log(\frac{1}{\alpha})$, for both $i=1,2$, we have $|b-p_i|>(t_2-t_1)/2$ with probability at most $\alpha$. Setting $\alpha=2/3$, $T$ is a constant and thus $\ns = \bigO{\frac{\dims^2}{\eps^2\sqrt{\ab}}}$ as desired.
\end{proof}
\begin{remark}
    The proof of \cref{thm:rand-k-upper} only uses the 4th moment of the Haar measure. Thus, we can sample $U$ from (approximate) unitary 4-designs instead of the Haar measure. 
\end{remark}
\begin{remark}
    \cref{lem:l2-norm-haar} does not require $\qkn$ to be known. In this case, with $\ns$ copies of $\qkn$, we can adapt~\cref{alg:rand-k} to closeness testing by measuring all copies of $\qkn$ and applying classical closeness testing algorithm~\cite[Lemma 2.3]{DiakonikolasK16} in the final step. Thus, \cref{alg:rand-k-boost} naturally extends to closeness testing with the same copy complexity of $\ns=O(\dims^2/(\sqrt{\ab}\eps^2))$.
\end{remark}

\section{Algorithm for fixed Pauli measurements}
\label{sec:fixed-Pauli-upper}
\subsection{Pauli observables}
For an $\nqubits$-qubit system (and thus $\dims=2^{\nqubits}$), the set of Pauli observables is $\pauliObsSet\eqdef \Sigma^{\otimes\nqubits}\setminus\{\eye_\dims\}$, where $\Sigma\eqdef \{\pauliI, \pauliX, \pauliY, \pauliZ\}$ are the Pauli operators,
\[
\pauliI\eqdef\eye_2 = \begin{bmatrix}
    1 & 0 \\
    0 & 1
\end{bmatrix},\quad
\pauliX = \begin{bmatrix}
    0 & 1 \\
    1 & 0
\end{bmatrix},\quad
\pauliY = \begin{bmatrix}
    0 & i \\
    -i & 0
\end{bmatrix},\quad
\pauliZ = \begin{bmatrix}
    1 & 0\\
    0 & -1
\end{bmatrix}.
\]
$\pauliObsSet$ consists of $\dims^2-1$ matrices. We have the standard fact about Pauli observables,
\begin{fact}
\label{fact:pauli}
    Let $P, Q\in \mathcal{P}$ be two Pauli observables. Then, 
\[
P^2=P, \;\Tr[P]=0,\; \Tr[PQ]=\hdotprod{P}{Q}=\dims\indic{P=Q}.
\]
Therefore, together with $\eye_\dims$, the set $\Sigma^{\otimes N}$ forms an orthogonal basis for $\Herm{\dims}$. 
\end{fact}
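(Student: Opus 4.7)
The plan is to reduce every claim to properties of the single-qubit matrices in $\Sigma$ by exploiting the multiplicativity of trace and of matrix multiplication across tensor factors. I read the first identity as $P^2=\eye_\dims$ rather than $P^2=P$, since the nontrivial Paulis are involutions and not idempotents; the argument below establishes this version.

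First I would record the base case on a single qubit. By direct $2\times 2$ computation, $\pauliX^2=\pauliY^2=\pauliZ^2=\pauliI$, each of $\pauliX,\pauliY,\pauliZ$ is Hermitian and traceless, $\Tr[\pauliI]=2$, and cross products such as $\pauliX\pauliY=i\pauliZ$ are themselves (phased) Paulis and hence traceless. Collecting cases gives $\Tr[AB]=2\,\indic{A=B}$ for all $A,B\in\Sigma$.

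To lift this to $N$ qubits, I would write an arbitrary $P\in\pauliObsSet$ as $P=P_1\otimes\cdots\otimes P_N$ with $P_i\in\Sigma$ and $P\neq\eye_\dims$, so at least one $P_i$ is non-identity. The mixed-product rule $(A_1\otimes\cdots\otimes A_N)(B_1\otimes\cdots\otimes B_N)=\bigotimes_i (A_iB_i)$ together with the base case yields $P^2=\pauliI^{\otimes N}=\eye_\dims$. Multiplicativity of the trace over tensor products then gives $\Tr[P]=\prod_i\Tr[P_i]=0$ because some factor vanishes. For $Q=Q_1\otimes\cdots\otimes Q_N$,
\[
\hdotprod{P}{Q}=\Tr[PQ]=\prod_{i=1}^N\Tr[P_iQ_i],
\]
and by the base case each factor equals $2$ when $P_i=Q_i$ and $0$ otherwise, so the product is $2^N=\dims$ if $P=Q$ and $0$ if $P\neq Q$.

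For the basis claim, $\Sigma^{\otimes N}$ has $4^N=\dims^2$ elements, matching the real dimension of $\Herm{\dims}$. The pairwise orthogonality just established forces linear independence, so the count is tight and the set spans. I anticipate no substantive obstacle; the only point requiring care is bookkeeping the phases that appear in products of Paulis (e.g.\ $\pauliX\pauliY=i\pauliZ$), but since these phased products remain traceless they do not affect any of the inner-product computations.
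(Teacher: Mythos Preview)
The paper does not supply a proof of this fact; it is stated as a ``standard fact'' without argument. Your proposal is correct and is the standard way to establish these identities: reduce to the single-qubit case by direct computation and then use multiplicativity of trace and of matrix multiplication across tensor factors. You are also right to flag that the stated identity $P^2=P$ is a typo for $P^2=\eye_\dims$; non-identity Paulis are involutions, not idempotents, and indeed the paper's subsequent use of this fact (e.g., in \cref{lem:prob-vec-and-pauli}) only needs $\Tr[PQ]=\dims\,\indic{P=Q}$, which you prove correctly.
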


Each $P\in\pauliObsSet$ defines a 2-outcome POVM,
\[
\POVM_P=\{M_0^P, M_1^P\},\; M_0^P=\frac{\eye_\dims - P}{2}, M_1^P=\frac{\eye_\dims+P}{2}.
\]
For a state $\rho$, we define 
\[
\p_{\rho}\eqdef[\prPauli{P}{\rho}: P\in \pauliObsSet]\in[0, 1]^{\dims^2-1}, \quad \prPauli{P}{\rho}=\Tr[\rho M_1^P].
\]
which is the collection of probabilities of seeing ``1'' after applying $\POVM_P$. 
\cref{lem:prob-vec-and-pauli} relates $\normtwo{\p_{\rho}-\p_{\sigma}}$ to $\hsnorm{\rho-\sigma}$ for two arbitrary states $\rho,\sigma$. 
\begin{lemma}
\label{lem:prob-vec-and-pauli}
    Let $\rho$ and $\sigma$ be two quantum states. Then, $\normtwo{\p_{\rho}-\p_{\sigma}}=\frac{\sqrt{\dims}}{2}\hsnorm{\rho-\sigma}$ .
\end{lemma}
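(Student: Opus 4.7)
The plan is to expand each coordinate of $\p_\rho - \p_\sigma$ via Born's rule and then use Parseval's identity in the Pauli basis.

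First I would write out a single coordinate. For any $P \in \pauliObsSet$, since $M_1^P = (\eye_\dims + P)/2$ and $\Tr[\rho-\sigma]=0$ (both states have trace $1$), we get
\[
\p_\rho(P) - \p_\sigma(P) \;=\; \Tr\!\left[(\rho-\sigma)\tfrac{\eye_\dims+P}{2}\right] \;=\; \tfrac{1}{2}\hdotprod{P}{\rho-\sigma},
\]
where I use that $P$ is Hermitian. Squaring and summing over $P \in \pauliObsSet$ gives
\[
\normtwo{\p_\rho - \p_\sigma}^2 \;=\; \tfrac{1}{4}\sum_{P \in \pauliObsSet} \big|\hdotprod{P}{\rho-\sigma}\big|^2.
\]

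Next I would invoke \cref{fact:pauli}: the full set $\Sigma^{\otimes \nqubits}$ (including $\eye_\dims$) is orthogonal with $\hdotprod{P}{P}=\dims$, so $\{P/\sqrt{\dims}: P \in \Sigma^{\otimes \nqubits}\}$ is an orthonormal basis of $\Herm{\dims}$. Parseval's identity then gives
\[
\hsnorm{\rho-\sigma}^2 \;=\; \sum_{Q \in \Sigma^{\otimes \nqubits}} \tfrac{1}{\dims}\big|\hdotprod{Q}{\rho-\sigma}\big|^2.
\]
The $Q=\eye_\dims$ term drops because $\hdotprod{\eye_\dims}{\rho-\sigma}=\Tr[\rho-\sigma]=0$, so the sum is really over $\pauliObsSet$. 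Combining the two displays,
\[
\normtwo{\p_\rho-\p_\sigma}^2 \;=\; \tfrac{1}{4}\cdot \dims\cdot \hsnorm{\rho-\sigma}^2,
\]
and taking square roots yields the claim. There is no real obstacle here; the only thing to be careful about is remembering that $\pauliObsSet$ excludes $\eye_\dims$ and verifying (via tracelessness of $\rho-\sigma$) that this exclusion is harmless for the Parseval expansion.
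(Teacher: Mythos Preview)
Your proof is correct and follows essentially the same approach as the paper: both compute the coordinate difference via Born's rule to get $\tfrac{1}{2}\Tr[P(\rho-\sigma)]$, then use orthogonality of the Pauli basis (you phrase it as Parseval, the paper writes the expansion of $\rho$ explicitly) together with $\Tr[\rho-\sigma]=0$ to relate the sum to $\hsnorm{\rho-\sigma}^2$.
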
 

\begin{proof}
The proof uses that Pauli observables and $\eye_\dims$ forms a basis of $\Herm{\dims}$.Thus we can represent a state $\rho$ as,
\begin{equation*}
    \rho = \frac{\eye_\dims}{\dims}+\sum_{P\in \pauliObsSet}\frac{\Tr[\rho P]P}{\dims}.
\end{equation*}
    Therefore, by the definition of Hilbert-Schmidt norm and that $\Tr[PQ]=\dims\cdot \delta_{P,Q}$,
    \begin{align}
        \hsnorm{\rho-\sigma}^2&=\hsnorm{\sum_{P\in\pauliObsSet}\frac{1}{\dims}(\Tr[P\rho]-\Tr[P\sigma])P}^2=\frac{1}{\dims}\sum_{P\in\pauliObsSet}(\Tr[P\rho]-\Tr[P\sigma])^2.
        \label{equ:pauli-state-l2}
    \end{align}
    
We can compute $\prPauli{P}{\rho}=\Tr[\rho M_1^P]$ using Born's rule,
\begin{equation*}
    \prPauli{P}{\rho}=\frac{1+\Tr[\rho P]}{2}.
\end{equation*}
Therefore,
    \begin{equation}
        \normtwo{\p_{\rho}-\p_{\sigma}}^2=\frac{1}{4}\sum_{P\in\pauliObsSet}(\Tr[P\rho]-\Tr[P\sigma])^2.
    \label{equ:pauli-probability-l2}
    \end{equation}
    Comparing~\eqref{equ:pauli-state-l2} and~\eqref{equ:pauli-probability-l2} proves the lemma.
\end{proof}

\subsection{Algorithm}
The idea is to reduce the problem to testing product Bernoulli distributions defined as follows.
\begin{definition}[Product Bernoulli]
    Given $\p=(p_1, \ldots, p_\dims)\in[0, 1]^{\dims}$, $\bernoulli{\p}$ is the distribution of $\bx=(x_1, \ldots, x_\dims)\in\{0,1\}^{\dims}$ where $x_i\sim \bernoulli{p_i}$ are independent Bernoulli distributions. 
\end{definition}
When we apply all Pauli observables $P\in\pauliObsSet$ to $\dims^2-1$ copies, the $\dims^2-1$ outcomes simulate one i.i.d. sample from $\bernoulli{\p_{\rho}}$. Thus, we can apply a classical product Bernoulli testing algorithm, whose performance under $\ell_2$ distance is stated in \cref{thm:prod-bern-testing-l2}. 
\begin{theorem}[{\citet[Theorem 6]{Canonne2020testing}}]
\label{thm:prod-bern-testing-l2}
    Let $\p,\q\in[0, 1]^{\dims}$ where $\p$ is unknown and $\q$ is known, and $\bx^{\ns}$ be $\ns$ i.i.d. samples from $\bernoulli{\p}$. There exists an algorithm TestProdBernL2$(\q, \bx^{\ns},\dims,\eps)$ that outputs $\isthestate$ if $\p=\q$ and $\notthestate$ if $\normtwo{\p-\q}>\eps$ w.p. at least 2/3 using $\ns=O(\sqrt{\dims}/\eps^2)$ samples.
\end{theorem}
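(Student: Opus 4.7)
The plan is to construct an unbiased, product-separable estimator of $\normtwo{\p-\q}^2$ and then control its variance, exploiting the coordinate-wise independence inherent to product Bernoulli distributions. Concretely, write the samples as $x_{ji}\in\{0,1\}$ for $j\in[\ns]$ and $i\in[\dims]$, and for each coordinate $i$ define the $U$-statistic
\begin{equation*}
Z_i \;\eqdef\; \frac{1}{\binom{\ns}{2}}\sum_{1\le j<k\le \ns} (x_{ji}-q_i)(x_{ki}-q_i),
\end{equation*}
which is unbiased for $(p_i-q_i)^2$ because the pairs $(j,k)$ give independent samples. Summing, $Z\eqdef\sum_{i=1}^{\dims} Z_i$ is an unbiased estimator of $\normtwo{\p-\q}^2$. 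The tester TestProdBernL2 then thresholds $Z$ at $\eps^2/2$: output $\isthestate$ if $Z\le \eps^2/2$ and $\notthestate$ otherwise.

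The analysis reduces to a Chebyshev-style calculation. By the product structure, $x_{ji}$ is independent of $x_{jk}$ for $i\ne k$, so the cross-terms $\expect{Z_iZ_k}-\expect{Z_i}\expect{Z_k}$ vanish, giving $\Var(Z)=\sum_i \Var(Z_i)$. A direct moment calculation on the $U$-statistic yields a bound of the form $\Var(Z_i)\le \frac{C}{\ns^2} + \frac{C(p_i-q_i)^2}{\ns}$, so that
\begin{equation*}
\Var(Z)\;\le\; \frac{C\dims}{\ns^2} + \frac{C\normtwo{\p-\q}^2}{\ns}.
\end{equation*}
Under the null $\p=\q$, we need $\sqrt{\dims}/\ns \ll \eps^2$, i.e.\ $\ns=\Omega(\sqrt{\dims}/\eps^2)$; under the alternative with $\normtwo{\p-\q}^2>\eps^2$, we additionally need $\sqrt{\eps^2/\ns}\ll \eps^2$, i.e.\ $\ns=\Omega(1/\eps^2)$. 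Both are satisfied by $\ns=O(\sqrt{\dims}/\eps^2)$, and Chebyshev's inequality then separates $\expect{Z}=0$ from $\expect{Z}>\eps^2$ with probability at least $2/3$.

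The main obstacle is the variance bound, which requires a careful expansion of the $U$-statistic's second moment. The quartic terms $\expect{(x_{ji}-q_i)^2(x_{ki}-q_i)^2}$ and the mixed cubic terms must be tracked so that the ``diagonal'' contribution gives the $1/\ns^2$ piece while the ``off-diagonal'' contribution gives a piece proportional to $(p_i-q_i)^2/\ns$; using the Bernoulli bound $|x_{ji}-q_i|\le 1$ keeps all higher moments bounded by constants, which is crucial for the $\sqrt{\dims}$ (rather than worse) scaling. Once this variance bound is in hand, the coordinate-wise independence afforded by the product structure is what delivers additivity over $i$, and the rest is routine. I would close by noting that the success probability can be boosted from $2/3$ to $1-\delta$ by running $O(\log(1/\delta))$ independent copies and taking a majority vote, although the stated theorem does not require this.
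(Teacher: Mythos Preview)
The paper does not prove this statement itself; it is imported verbatim as \cite[Theorem 6]{Canonne2020testing}, so there is no ``paper's own proof'' to compare against. Your argument is correct and is essentially the standard one (and the one in the cited reference): build the degree-two $U$-statistic $Z_i$ as an unbiased estimator of $(p_i-q_i)^2$, sum over coordinates, use the product structure to get $\Var(Z)=\sum_i\Var(Z_i)$, bound each $\Var(Z_i)\le C/\ns^2 + C(p_i-q_i)^2/\ns$ using $|x_{ji}-q_i|\le 1$, and finish with Chebyshev. One small remark: in the alternative case the second variance term gives the requirement $\ns\gg 1/\normtwo{\p-\q}^2$, not $1/\eps^2$ directly; this is implied by $\normtwo{\p-\q}^2>\eps^2$, so your conclusion stands, but the intermediate sentence ``$\sqrt{\eps^2/\ns}\ll\eps^2$'' should really read $\sqrt{\normtwo{\p-\q}^2/\ns}\ll\normtwo{\p-\q}^2$.
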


Details are in~\cref{alg:fixed-pauli}, which is similar to the simulate-and-infer in~\citep{acharya2020distributed}. \cref{thm:fixed-pauli-upper} proves its performance. Combining with~\cref{thm:fixed-k-lower} for $\ab=2$, we prove the tight copy complexity of $\Theta(\dims^3/\eps^2)$ in~\cref{thm:fixed-pauli-upper-lower} for Pauli observables and 2-outcome measurements.
\begin{algorithm}
\caption{Quantum state certification with fixed Pauli observables}
    \begin{algorithmic}
        \State \textbf{Input}: $\ns$ copies of $\rho$, known state description $\qkn$.
        \State \textbf{Output}: {\isthestate} if $\rho=\qkn$, {\notthestate}  if $\tracenorm{\rho-\qkn}>\eps$.
        \State Let $\q=\bernoulli{\p_{\qkn}}$.
        \State Divide $\ns$ copies into $L\eqdef\lfloor\ns/(\dims^2-1))\rfloor$ groups, each with size $\dims^2-1$. Excess copies are omitted.
        \State Group $j$ apply all $P\in\pauliObsSet$ to each copy and obtain $\dims^2-1$ outcomes $\bx_j=(x_{j,P})_{P\in\pauliObsSet}$.
        \State\Return {TestProdBernL2$(\q, \bx^L=(\bx_1,\ldots, \bx_L),\dims^2-1, \eps/2)$}.
    \end{algorithmic}
    \label{alg:fixed-pauli}
\end{algorithm}

\begin{theorem}
\label{thm:fixed-pauli-upper}
    ~\cref{alg:fixed-pauli} can test whether $\rho=\qkn$ or $\tracenorm{\rho-\qkn}\ge \eps$ with probability at least 2/3 using $\ns = O(\dims^3/\eps^2)$ copies of $\rho$.
\end{theorem}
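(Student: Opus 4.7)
The plan is to analyze \cref{alg:fixed-pauli} as a reduction from quantum state certification to classical product Bernoulli testing in $\ell_2$ distance, and then plug in \cref{thm:prod-bern-testing-l2}. First I would observe that within a single group of $\dims^2-1$ copies, the algorithm applies the 2-outcome POVM $\POVM_P$ to a \emph{distinct} copy of $\rho$ for each $P\in\pauliObsSet$. Since the copies are identical and independent, the resulting bit vector $\bx_j=(x_{j,P})_{P\in\pauliObsSet}$ is distributed exactly as $\bernoulli{\p_\rho}$ with $\p_\rho=[\prPauli{P}{\rho}]_{P\in\pauliObsSet}\in[0,1]^{\dims^2-1}$. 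Consequently, the $L$ groups produce $L$ i.i.d.\ samples from $\bernoulli{\p_\rho}$, and the problem becomes classical product Bernoulli identity testing against $\bernoulli{\p_\qkn}$.

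Next I would verify completeness and soundness. Completeness is immediate: if $\rho=\qkn$, then $\p_\rho=\p_\qkn$ and TestProdBernL2 returns \isthestate\ with probability $\ge 2/3$. For soundness, suppose $\tracenorm{\rho-\qkn}>\eps$. By Cauchy--Schwarz on singular values of $\rho-\qkn$,
\begin{equation*}
\hsnorm{\rho-\qkn}\ge \frac{\tracenorm{\rho-\qkn}}{\sqrt{\dims}}>\frac{\eps}{\sqrt{\dims}}.
\end{equation*}
Applying \cref{lem:prob-vec-and-pauli} then gives
\begin{equation*}
\normtwo{\p_\rho-\p_\qkn}=\frac{\sqrt{\dims}}{2}\hsnorm{\rho-\qkn}>\frac{\eps}{2},
\end{equation*}
so the $\ell_2$-gap between the underlying product Bernoulli vectors is at least $\eps/2$, and TestProdBernL2 called with tolerance $\eps/2$ will output \notthestate\ with probability $\ge 2/3$.

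Finally I would combine the sample bound with the group size. By \cref{thm:prod-bern-testing-l2}, testing a product Bernoulli of dimension $\dims^2-1$ at $\ell_2$-distance $\eps/2$ needs $L=O(\sqrt{\dims^2-1}/(\eps/2)^2)=O(\dims/\eps^2)$ samples. Each sample costs $\dims^2-1$ copies, so the total is
\begin{equation*}
\ns = L\cdot(\dims^2-1) = O\!\left(\frac{\dims^3}{\eps^2}\right),
\end{equation*}
matching the claimed bound. The only nonroutine points are (i) recognizing that applying $\POVM_P$ to distinct copies genuinely yields independent Bernoulli coordinates (rather than correlated ones as would arise from a single copy with multiple measurements), and (ii) correctly tracking the Cauchy--Schwarz loss from trace distance to Hilbert--Schmidt distance, which is exactly absorbed by the $\sqrt{\dims}$ factor coming from \cref{lem:prob-vec-and-pauli}; everything else reduces to a black-box call to the classical tester.
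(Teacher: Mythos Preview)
Your proposal is correct and follows essentially the same argument as the paper: reduce to product Bernoulli testing via \cref{lem:prob-vec-and-pauli}, use Cauchy--Schwarz to pass from trace distance to Hilbert--Schmidt distance, and invoke \cref{thm:prod-bern-testing-l2} with dimension $\dims^2-1$ and tolerance $\eps/2$. The only cosmetic difference is that you spell out more explicitly why each group yields an i.i.d.\ sample from $\bernoulli{\p_\rho}$, which the paper leaves implicit in the algorithm description.
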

\begin{proof} 
If $\rho=\qkn$, then $\p_{\rho}=\p_{\qkn}$. If $\tracenorm{\rho-\qkn}\ge \eps$, by Cauchy-Schwarz $\hsnorm{\rho-\qkn}\ge \frac{\eps}{\sqrt{\dims}}$, and thus by~\cref{lem:prob-vec-and-pauli},
$$
\|\p_{\rho}-\p_{\qkn}\|_2\ge \eps/2.
$$
Setting $\dims\leftarrow \dims^2-1,\eps\leftarrow\eps/2$ in~\cref{thm:prod-bern-testing-l2}, $
L = \bigO{\frac{\sqrt{\dims^2-1}}{\eps^2}}=\bigO{\frac{\dims}{\eps^2}}$ samples from $\bernoulli{\p_{\rho}}$ are sufficient to test whether $\p_{\rho}=\p_{\qkn}$ or $\|\p_{\rho}-\p_{\qkn}\|_2\ge \eps/2$, which implies $\ns\le(\dims^2-1)(L+1)=O(\dims^3/\eps^2)$, exactly as desired. 
\end{proof}

\section{Algorithm for $\ab$-outcome fixed measurements}
\label{sec:fixed-d-outcome}
Similar to \cref{sec:rand-k-upper}, we assume that both $\ab$ and $\dims$ are powers of 2. We begin with preliminaries for quantum designs and mutually unbiased bases. Then we propose an algorithm for $\ab=\dims$ which makes minor modifications to the algorithm in \cite{Yu21sample} to satisfy the constraint on the number of outcomes. 
\subsection{Quantum designs and mutually unbiased basis}
Our algorithm uses quantum 2-designs. A $t$-design preserves the statistics of the Haar measure up to $t$-th order moments. Below is the formal definition.
\begin{definition}[Quantum $t$-design]
\label{def:spherical-t-design}
    Let $t$ be a positive integer, we say that a finite set of normalized vectors $\{\qbit{\psi_x}\}_{x=1}^m$ in $\C^\dims$ is a \textit{quantum $t$-design} if
    \[
    \frac{1}{m}\sum_{x=1}^m \qproj{\psi_x}^{\otimes t}=\int \qproj{\psi}^{\otimes t} d \mu(\psi),
    \]
    where $\mu$ is the Haar measure on the unit sphere in $\C^\dims$.
\end{definition}

Maximal mutually unbiased bases (MUB) is a well-known example of 2-design that exists for $\dims$ which are prime powers. We refer the readers to \cite{durt2010mutually} for a survey.
\begin{theorem}[\cite{klappenecker2005mutually}]
    Let $\dims$ be a prime power, then there exists a maximal MUB, i.e. $\dims+1$ orthonormal bases $\{\qbit{\psi_{x}^l}\}_{x=1}^{\dims}, l=1, \ldots, \dims+1$ such that the collection of all vectors $\{\qbit{\psi_{x}^l}\}_{x, l}$ is a $2$-design.
    \label{thm:mub-2-design}
\end{theorem}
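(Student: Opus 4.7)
The plan is to prove the two assertions in turn: first the existence of $d+1$ mutually unbiased bases (MUBs) in $\C^\dims$ when $\dims$ is a prime power, and then the 2-design property of the collection of all $\dims(\dims+1)$ vectors. For the latter, I would rely on the Welch bound characterization of 2-designs: a set of $N$ unit vectors $\{\qbit{\psi_x}\}$ in $\C^\dims$ forms a quantum 2-design if and only if the frame potential achieves equality in the Welch bound, i.e.
\[
\sum_{x,y=1}^{N} |\qdotprod{\psi_x}{\psi_y}|^4 \;=\; \frac{2N^2}{\dims(\dims+1)}.
\]
This characterization follows from the fact that $\frac{1}{N}\sum_x \qproj{\psi_x}^{\otimes 2}$ and the Haar average $\int \qproj{\psi}^{\otimes 2}d\mu(\psi) = \Pi_{\rm sym}/\binom{\dims+1}{2}$ both live in the symmetric subspace, and equality of their Hilbert–Schmidt norms (which is exactly the Welch equality) forces equality of the two operators.

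For the MUB construction, I would separate the cases $\dims = p$ prime and $\dims = p^n$ a higher prime power. In the prime case the Wootters–Fields construction suffices: let $\omega = e^{2\pi i/\dims}$, take the computational basis $\{\qbit{x}\}_{x=0}^{\dims-1}$ as one basis, and for each $a \in \{0,\dots,\dims-1\}$ define the basis vectors $\qbit{\psi_x^a} = \dims^{-1/2}\sum_{y=0}^{\dims-1}\omega^{ay^2 + xy}\qbit{y}$. A direct character-sum calculation (a Gauss sum) verifies $|\qdotprod{\psi_x^a}{\psi_{x'}^{a'}}|^2 = 1/\dims$ whenever $a \neq a'$. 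For $\dims = p^n$ the construction is analogous but uses the finite field $\F_{p^n}$ and its additive characters; the quadratic phase $\omega^{ay^2+xy}$ is replaced by $\chi(ay^2+xy)$ where $\chi$ is a non-trivial additive character of $\F_{p^n}$, and mutual unbiasedness reduces to a Gauss-sum estimate over $\F_{p^n}$.

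Once the MUB is in hand, verifying the Welch equality is a clean counting exercise using only the three possible values of $|\qdotprod{\psi_x^l}{\psi_{x'}^{l'}}|^2$: namely $1$ (same vector), $0$ (distinct vectors in the same basis), and $1/\dims$ (vectors in different bases). With $N = \dims(\dims+1)$ there are $N$ pairs of the first type, $(\dims+1)\dims(\dims-1)$ of the second, and $\dims^3(\dims+1)$ of the third, contributing $\dims(\dims+1)\cdot 1 + 0 + \dims^3(\dims+1)\cdot (1/\dims)^2 = 2\dims(\dims+1)$ to the frame potential; dividing by $N^2 = \dims^2(\dims+1)^2$ yields $2/(\dims(\dims+1))$ as required.

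The main obstacle is the MUB construction itself, specifically in the $\dims = p^n$ case, where one has to identify $\C^\dims$ with $\C^{\F_{p^n}}$ and invoke the additive character theory and Gauss-sum estimates over finite fields. The 2-design verification, by contrast, is essentially a bookkeeping computation once the inner-product profile $\{0, 1/\sqrt{\dims}, 1\}$ is established. An alternative route that avoids re-deriving MUBs would be to quote \cite{klappenecker2005mutually} for existence and only carry out the Welch-bound verification; this is the path I would take in an expository write-up, since the MUB construction is classical while the 2-design corollary is the content actually needed downstream.
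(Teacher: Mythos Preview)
The paper does not prove this theorem at all: it is stated as a cited result from \cite{klappenecker2005mutually} and used as a black box in the fixed-measurement algorithm of Section~\ref{sec:fixed-d-outcome}. So there is no ``paper's own proof'' to compare against; your proposal supplies strictly more than the paper does.

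That said, your sketch is correct in its essentials. The Welch-bound/frame-potential characterization of 2-designs is standard, and your counting argument is right: with $N=\dims(\dims+1)$ vectors the frame potential is $\dims(\dims+1)+0+\dims^3(\dims+1)\cdot\dims^{-2}=2\dims(\dims+1)=2N^2/(\dims(\dims+1))$, which is exactly the 2-design equality. One caveat worth flagging: the Wootters--Fields quadratic-phase construction $\omega^{ay^2+xy}$ you describe works cleanly only in odd characteristic; for $\dims=2^n$ the quadratic form degenerates and the standard constructions instead go through Galois rings or tensor products of single-qubit Pauli eigenbases. You implicitly acknowledge this when you call the prime-power construction the ``main obstacle,'' and your suggested fallback---cite \cite{klappenecker2005mutually} for existence and verify only the 2-design property via the Welch bound---is exactly what an expository treatment should do, and is in fact more than what the present paper does.
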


\subsection{Algorithm for $\ab=\dims$}
We first review the algorithm in~\cite{Yu21sample}. It treats the collection of all $\dims+1$ bases as one single POVM, 
\[
\POVM_{MUB}=\left\{\frac{1}{\dims+1}\qproj{\psi_x^l}\right\}_{x\in[\dims], l\in[\dims+1]},
\]
which has $\dims(\dims+1)$ outcomes. This measurement is applied to all copies. We denote the outcome distribution for the state $\rho$ as $\p_{\rho}\in \R^{\dims(\dims+1)}$. Maximal MUB ensures that for $\rho$ far from $\qkn$, the outcome distributions $\p_{\rho}$, $\p_{\qkn}$ should also be far.
\begin{lemma}[{\cite[Lemma 2]{Yu21sample}}] Let $\p_{\rho}$ be the distribution when applying a maximal MUB to $\rho$. Then
\label{lem:mub-l2-distance}
     \[
    \normtwo{\p_{\rho}}\le\frac{\sqrt{2}}{\dims+1},\quad \normtwo{\p_{\rho}-\p_{\qkn}}=\frac{\hsnorm{\rho-\qkn}}{\dims+1}.
    \]
    In particular if $\tracenorm{\rho-\qkn}>\eps$, due to Cauchy-Schwarz $\normtwo{\p_{\rho}-\p_{\qkn}}=\frac{\hsnorm{\rho-\qkn}}{\dims+1}\ge \frac{\eps}{(\dims+1)\sqrt{\dims}}$.
\end{lemma}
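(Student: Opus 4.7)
The plan is to compute both quantities explicitly by exploiting the 2-design property of the maximal MUB, which reduces sums of quadratic functionals over the basis vectors to an integral against the Haar measure, and this integral is a classical object with a closed form involving the swap operator.

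First I would write out the outcome probabilities using Born's rule: for the POVM element $M_{x,l}=\frac{1}{\dims+1}\qproj{\psi_x^l}$ we get $\p_\rho(x,l)=\frac{1}{\dims+1}\matdotprod{\psi_x^l}{\rho}{\psi_x^l}$. Therefore
\[
\normtwo{\p_\rho}^2=\frac{1}{(\dims+1)^2}\sum_{x,l}\matdotprod{\psi_x^l}{\rho}{\psi_x^l}^{2}=\frac{1}{(\dims+1)^2}\sum_{x,l}\Tr\bigl[(\rho\otimes\rho)\,\qproj{\psi_x^l}^{\otimes 2}\bigr],
\]
and analogously
\[
\normtwo{\p_\rho-\p_\qkn}^{2}=\frac{1}{(\dims+1)^2}\sum_{x,l}\Tr\bigl[((\rho-\qkn)\otimes(\rho-\qkn))\,\qproj{\psi_x^l}^{\otimes 2}\bigr].
\]
So both quantities are linear in $\frac{1}{\dims(\dims+1)}\sum_{x,l}\qproj{\psi_x^l}^{\otimes 2}$.

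Second, I would invoke the 2-design property from \cref{thm:mub-2-design} together with the standard Haar identity $\int \qproj{\psi}^{\otimes 2}\, d\mu(\psi)=\frac{\eye+\textup{SWAP}}{\dims(\dims+1)}$, which gives
\[
\sum_{x,l}\qproj{\psi_x^l}^{\otimes 2}=\eye+\textup{SWAP}.
\]
Substituting back and using $\Tr[(A\otimes A)\eye]=\Tr[A]^2$ and $\Tr[(A\otimes A)\textup{SWAP}]=\Tr[A^2]$, the sums collapse to $\Tr[\rho]^2+\Tr[\rho^2]=1+\hsnorm{\rho}^2$ and $\Tr[\rho-\qkn]^2+\Tr[(\rho-\qkn)^2]=\hsnorm{\rho-\qkn}^2$, where we used that $\rho-\qkn$ is traceless. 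This immediately yields $\normtwo{\p_\rho-\p_\qkn}=\hsnorm{\rho-\qkn}/(\dims+1)$ as an exact identity, and $\normtwo{\p_\rho}^{2}=(1+\hsnorm{\rho}^2)/(\dims+1)^2\le 2/(\dims+1)^2$ after noting $\hsnorm{\rho}^2=\sum_i\lambda_i^2\le \sum_i\lambda_i=1$ for any density matrix.

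The ``in particular'' statement is then a one-line Cauchy-Schwarz: since $\rho-\qkn$ is a Hermitian matrix of rank at most $\dims$, $\hsnorm{\rho-\qkn}\ge \tracenorm{\rho-\qkn}/\sqrt{\dims}>\eps/\sqrt{\dims}$, giving the lower bound on $\normtwo{\p_\rho-\p_\qkn}$.

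No step is truly an obstacle here; the only subtlety is remembering the correct Haar second-moment identity and keeping straight that the 2-design average normalization uses $\dims(\dims+1)$ vectors while the POVM weights each rank-1 element by $1/(\dims+1)$. Once those factors are aligned, the proof is essentially a one-shot application of the swap-trick.
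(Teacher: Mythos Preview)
Your proof is correct. The paper does not actually supply its own proof of this lemma---it is imported verbatim from \cite[Lemma 2]{Yu21sample}---so there is nothing to compare against in the present manuscript. Your argument via the 2-design property and the swap trick (using $\int \qproj{\psi}^{\otimes 2}\,d\mu(\psi)=\frac{\eye+\textup{SWAP}}{\dims(\dims+1)}$) is the standard route for this type of identity and all normalization factors and trace identities are handled correctly.
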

Thus we can apply the distribution testing algorithm in \cref{thm:distr-testing} to distinguish between $\p_{\rho}$ and $\p_{\qkn}$. Setting $b\leftarrow\frac{\sqrt{2}}{\dims+1}$ and $\eps\leftarrow \frac{\eps}{(\dims+1)\sqrt{\dims}}$, to achieve success probability $\delta$, the number of copies required is
\begin{equation}
    \ns = 1000\frac{\sqrt{2}}{\dims+1}\cdot\frac{\dims(\dims+1)^2}{\eps^2}\log\frac{1}{\delta}=1000\sqrt{2}\cdot \frac{\dims(\dims+1)}{\eps^2}\cdot\log\frac{1}{\delta}.
    \label{equ:}
\end{equation}

Note that $\POVM_{MUB}$ can be viewed as uniformly drawing $l$ from $[d+1]$ and apply $\POVM_l=\{\qproj{\psi_{x}^l}\}_{x=1}^{\dims}$ for each copy. We can thus parameterize the outcome by $(X, L)\in [\dims]\times[\dims+1]$. In particular, when $(X, L)\sim\p_{\rho}$, we have $\probaOf{L=l}=\frac{1}{\dims+1}$.

Using this observation, to design an algorithm with $\dims$ outcomes per measurement, instead of drawing the $\dims+1$ basis measurements uniformly, we divide all copies into $\dims+1$ equal groups and then apply a basis measurement $\POVM_l=\{\qproj{\psi_{x}^l}\}_{x=1}^{\dims}$ for the $l$-th group. Thus we always have the same number of samples for each $l\in[\dims+1]$. We can still apply the identity testing algorithm in \cref{thm:distr-testing} via a simple reduction. The full algorithm is stated in \cref{alg:no-shared-d}.

\begin{algorithm}
\caption{Fixed measurement state certification $\ab=\dims$}
\label{alg:no-shared-d}
    \begin{algorithmic}
        \State \textbf{Input}: $\ns$ copies of state $\rho$.
        \State \textbf{Output} YES if $\rho=\qkn$, NO otherwise.
        \State Let $\{\qbit{\psi_x^l}\}_{x=1}^d, l=1, \ldots, \dims+1$ be a maximal MUB.
        \State Divide the copies into $\dims+1$ equally-sized groups, each group has $\ngr=\ns/(\dims+1)$ copies.
        \State For group $l$, apply basis measurement $\POVM_l = \{\qproj{\psi_x^l}\}_{x=1}^{\dims}$. Let the outcomes be $x_1\supparen{l}, \ldots, x_{\ngr}\supparen{l}$.
        \State Generate $\ns/2$ i.i.d. samples from the uniform distribution over $[\dims+1]$, and for $l\in[\dims+1]$ let $\nspu_l$ be the number of times that $l$ appears.
        \State Let $\bx=(\bx_1, \ldots, \bx_{\dims+1})$ where $\bx_l=(x_1\supparen{l}, \ldots, x\supparen{l}_{\min\{\ngr, m_l\}})$.
        \State \Return TestIdentityL2$\left(\p_{\qkn}, \bx, \frac{\eps}{(\dims+1)\sqrt{\dims}}, \delta=1/6\right)$.
    \end{algorithmic}
\end{algorithm}

\begin{theorem}
\label{thm:fixed-d}
    Given $\ns$ copies from $\rho$, \cref{alg:no-shared-d} can test whether $\rho=\qkn$ or $\tracenorm{\rho-\qkn}>\eps$ with $\ns=O(\dims^2/\eps^2)$ copies.
\end{theorem}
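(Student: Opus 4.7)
The plan is to view \cref{alg:no-shared-d} as a \emph{simulate-and-infer} reduction that produces i.i.d.\ samples from the maximal-MUB outcome distribution $\p_\rho$ of~\cref{lem:mub-l2-distance}, and then invoke the classical $\ell_2$ identity tester of~\cref{thm:distr-testing}. A fresh sample from $\p_\rho$ can be drawn by picking a label $L$ uniformly on $[\dims+1]$ and measuring a fresh copy of $\rho$ with the basis POVM $\POVM_L$. The algorithm decouples these two steps: it pre-measures $\ngr=\ns/(\dims+1)$ copies per basis, yielding i.i.d.\ outcomes $x_1\supparen{l},\ldots,x_{\ngr}\supparen{l}$ from the conditional distribution $\p_\rho(\cdot\mid L=l)$, and separately draws $\ns/2$ uniform labels with counts $\nspu_1,\ldots,\nspu_{\dims+1}$.

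I would next introduce the good event $E\eqdef\{\nspu_l\le\ngr\text{ for all }l\in[\dims+1]\}$. On $E$, pairing each uniform label $l$ with the next yet-unused outcome from group $l$ produces $\ns/2$ pairs that are i.i.d.\ from $\p_\rho$: conditional on the sequence of labels, the consumed outcomes are fresh i.i.d.\ draws from the correct conditional distributions and are independent of the labels by construction. The truncation $\bx_l=(x_1\supparen{l},\ldots,x\supparen{l}_{\min\{\ngr,\nspu_l\}})$ in~\cref{alg:no-shared-d} exactly realizes this matching on $E$. Since each $\nspu_l$ is $\mathrm{Binomial}(\ns/2,1/(\dims+1))$ with mean $\ngr/2$, the multiplicative Chernoff bound yields $\probaOf{\nspu_l>\ngr}\le \exp(-\ngr/6)$, and a union bound gives $\probaOf{E^c}\le (\dims+1)\exp(-\ngr/6)\le 1/6$ as soon as $\ngr\ge 6\log(6(\dims+1))$.

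The argument then closes by plugging~\cref{lem:mub-l2-distance} into~\cref{thm:distr-testing}. With $b\eqdef\sqrt{2}/(\dims+1)$ bounding both $\|\p_\rho\|_2$ and $\|\p_{\qkn}\|_2$, and with soundness separation $\eps'\eqdef\eps/((\dims+1)\sqrt{\dims})$, the tester needs only $\bigO{b/(\eps')^2}=\bigO{\dims(\dims+1)/\eps^2}=\bigO{\dims^2/\eps^2}$ i.i.d.\ samples from $\p_\rho$ to succeed with probability at least $5/6$. Choosing $\ns=\Theta(\dims^2/\eps^2)$ simultaneously makes $\ns/2$ exceed this sample threshold and makes $\ngr=\Theta(\dims/\eps^2)$ comfortably larger than $6\log(6(\dims+1))$, so a final union bound over the failure of $E$ and of the identity tester gives overall success probability at least $2/3$.

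The main obstacle is the coupling in the second paragraph: verifying rigorously that on $E$ the truncated concatenation $\bx$ is an unbiased sample of $\ns/2$ i.i.d.\ draws from $\p_\rho$, despite the algorithm performing the measurements \emph{before} drawing the labels and matching them post hoc. Once this coupling is in place, the remaining pieces are standard binomial concentration and a direct application of~\cref{thm:distr-testing} together with~\cref{lem:mub-l2-distance}.
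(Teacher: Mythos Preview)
Your proposal is correct and follows essentially the same route as the paper: define the good event $E=\{\nspu_l\le\ngr\text{ for all }l\}$, bound $\probaOf{E^c}$ via the multiplicative Chernoff bound and a union bound over $l\in[\dims+1]$, then invoke \cref{thm:distr-testing} with the parameters supplied by \cref{lem:mub-l2-distance} and combine the two failure probabilities. The only difference is that you explicitly flag the coupling step---that on $E$ the truncated outcome tuple $\bx$ is distributed as $\ns/2$ i.i.d.\ draws from $\p_\rho$---whereas the paper leaves this implicit; your identification of this point is accurate and the argument you sketch for it is sound.
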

\begin{proof}
    We just need to ensure that with high probability, there are sufficient samples to run TestIdentityL2, i.e. for all $l$, $\ngr\ge m_l$. We use the Chernoff bound,
    \begin{lemma}[Multiplicative Chernoff bound]
    \label{lem:chernoff}
        Let $X_1, \ldots, X_n$ be i.i.d. with $\expect{X_i}=\mu$. Then,
        \begin{align*}
            &\probaOf{\sum_{i=1}^nX_i\ge n(1+\alpha)\mu}\le \exp\left\{-\frac{n\alpha^2\mu}{2+\alpha}\right\}, \;\alpha>0,\\
            &\probaOf{\sum_{i=1}^nX_i\le  n(1-\alpha)\mu}\le \exp\left\{-\frac{n\alpha^2\mu}{2}\right\},\;\alpha\in(0, 1).
        \end{align*}
    \end{lemma}
    Note that $m_l\sim \binomial{\frac{\ns}{2}}{\frac{1}{\dims+1}}$. Thus setting $\alpha=1$ and $\mu=\frac{\ns}{2(\dims+1)}$ in \cref{lem:chernoff},
    \[
    \probaOf{m_l>\ngr}=\probaOf{m_l>\frac{\ns}{\dims+1}}\le \exp\left\{-\frac{\ns}{6(\dims+1)}\right\}.
    \]
    By union bound, 
    \[
    \probaOf{\exists\, l, m_l>\ngr}\le (\dims+1)\exp\left\{-\frac{\ns}{6(\dims+1)}\right\}.
    \]
    Thus probability is at most $\delta'=\frac{1}{6}$ if $\ns>6(\dims+1)\ln(6(\dims+1))$. 

    From \cref{thm:distr-testing}, we can use $\ns=2600\dims(\dims+1)/\eps^2$ copies to distinguish between $\p_{\rho}$ and $\p_{\qkn}$ with probability at least $\frac{5}{6}$. The number of copies is large enough to ensure that $ \probaOf{\exists\, l, m_l>\ngr}\le \frac{1}{6}$. Thus, the probability that all $m_l<\ngr$ \emph{and} TestIdentityL2 returns the correct output is at least $\frac{5}{6}-\delta'=\frac{2}{3}$, exactly as desired.
\end{proof}

\subsection{Algorithm for $\ab<\dims$}

When $\ab<\dims$, we need to simulate the output of each $\POVM_l$ using multiple copies of $\rho$. We use existing results from distributed simulation of discrete distributions under classical information constraints. 

\begin{definition}[$\simprob$-simulation]
    In the problem of $\simprob$-simulation, we have $\ns$ players and an unknown distribution $\p$ over $[\dims]$. Each player receives an i.i.d. sample from $\p$ and can only send $\ell$ bits to a central server. The server then tries to generate $[\hat{X}]\in[\dims]\cup \{\bot\}$ that simulates $\p$ where
\[
\probaOf{\hat{X}=x|\hat{X}\ne\bot}=\p_x,\quad \probaOf{\hat{X}=\bot}\le \simprob.
\]
We say $\hat{X}$ is a successful simulation if and only if $\hat{X}\ne\bot$.
\end{definition}

We use the protocol introduced by \cite{ACT:19:IT2} that simulates an i.i.d. sample from an arbitrary $\dims$-ary distribution. 

\begin{theorem}[{\cite[Theorem IV.5]{ACT:19:IT2}}]
\label{thm:simulation}
    For every $\simprob\in(0, 1)$, there exists an algorithm that $\simprob$-simulates $\dims$-ary distributions using
    \[
    M=40\left\lceil\log\frac1\simprob\right\rceil\left\lceil\frac{\dims}{2^\ell-1}\right\rceil
    \]
    players, where each player receives one i.i.d. sample and can only output $\ell=\log \ab$ bits. The algorithm is deterministic for each player and only requires private randomness.
\end{theorem}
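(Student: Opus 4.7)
The plan is to reduce the simulation task to a distributed coding task: the $\ell$-bit budget per player lets each player describe an index within a small subset of $[\dims]$, so by partitioning $[\dims]$ into $G=\lceil \dims/(2^{\ell}-1)\rceil$ chunks $S_1,\dots,S_G$ with $|S_g|\le 2^{\ell}-1$ and assigning each player a single chunk, each player's message is either the in-chunk index of its sample (one of $2^{\ell}-1$ codewords) or a reserved ``miss'' flag $\bot$ (the remaining codeword). This encoding uses exactly $\ell$ bits and, critically, requires no per-player randomness, consistent with the theorem's deterministic-players requirement.

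Concretely, I would split the $M$ players into $G$ groups of size $n\eqdef 40\lceil\log(1/\simprob)\rceil$, with group $g$ assigned chunk $S_g$. A player in group $g$ with sample $X\sim\p$ sends the index of $X$ inside $S_g$ when $X\in S_g$ and $\bot$ otherwise. Because samples are i.i.d., the outputs across the $G$ groups are mutually independent, and $N_g\eqdef \#\{j:X_{g,j}\in S_g\}\sim\mathrm{Binomial}(n,\p(S_g))$. The server then combines the outputs using only private randomness into a candidate $\hat X$, and the task reduces to designing a combining rule that simultaneously ensures (i) $\hat X\mid \hat X\ne\bot \sim\p$ exactly and (ii) $\Pr[\hat X=\bot]\le \simprob$.

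For the combining step I would first verify the following template rule, which gives exactness: the server draws $g^{\star}\in[G]$ uniformly at random, and with probability $N_{g^{\star}}/n$ accepts and returns a uniformly chosen non-$\bot$ output of group $g^{\star}$, otherwise declaring failure. A direct calculation shows that $\Pr[\hat X=x]=\p(x)/G$ for every $x$, so conditionally on acceptance $\hat X\sim\p$ \emph{independently of $\p$}. To reach the tight bound $M=O(G\log(1/\simprob))$ rather than the naive $O(G^{2}\log(1/\simprob))$ obtained by iterating the one-shot rule, I would sharpen the failure analysis using inter-group independence: the event that \emph{no} group sees any in-chunk sample has probability $\prod_{g}(1-\p(S_g))^{n}$, and by AM--GM under the constraint $\sum_g \p(S_g)=1$ this is maximized at $\p(S_g)=1/G$, giving $((1-1/G)^G)^n\le e^{-n}\le \simprob$ for $n=\lceil\log(1/\simprob)\rceil$. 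The constant $40$ in the theorem absorbs the slack between the one-shot exact rule and its amplification.

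The main obstacle is reconciling exactness with the tight player count. A $\p$-oblivious server can always achieve exactness \emph{or} the $(1-1/e)^n$-style amplification from the $G$ parallel groups, but not both via the naive rules: uniform-chunk rejection sacrifices a factor of $G$ in success probability, while ``accept whenever some group succeeds'' biases the conditional distribution toward chunks with larger $\p(S_g)$. The crux is a combining rule whose acceptance weights cancel the $\p(S_g)$-factors from the within-group rejection \emph{and} couple the acceptance to the global AM--GM failure event; once such a rule is in hand, both the exact conditional law and the $e^{-n}$ failure bound follow, yielding the stated player count $M=40\lceil\log(1/\simprob)\rceil\lceil \dims/(2^{\ell}-1)\rceil$.
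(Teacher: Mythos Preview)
This theorem is not proved in the paper; it is quoted from \cite{ACT:19:IT2} and invoked as a black box in \cref{alg:no-shared-k}, so there is no in-paper proof to compare against.

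Evaluating your proposal on its own merits: the chunk-and-flag encoding is the natural first step and matches the cited construction, but your argument stops precisely where the real difficulty lies. You correctly diagnose the tension---uniform-chunk rejection gives exactness but success probability only $1/G$ per $G$-player round (yielding $M=\Theta(G^{2}\log(1/\simprob))$), while accept-on-any-hit attains the AM--GM failure bound $\prod_{g}(1-\p(S_g))^{n}\le e^{-n}$ but destroys the conditional law. You then write that ``the crux is a combining rule'' with both properties and that ``once such a rule is in hand'' the result follows, but you never supply the rule. The AM--GM bound you compute is the failure probability of the \emph{biased} protocol, not of any exact protocol you have actually defined; invoking it does not close the gap, and the sentence ``the constant $40$ absorbs the slack'' has no content without the rule itself.

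This is not a minor omission. Designing a server-side rule that is simultaneously exact and has failure probability $e^{-\Theta(M/G)}$ rather than $e^{-\Theta(M/G^{2})}$ is the entire nontrivial content of the cited theorem, and it requires an idea beyond the two naive rules you list. As written, the proposal is an accurate identification of the obstacle followed by an assertion that it can be overcome, not a proof.
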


Thus, we use roughly $O(\dims/\ab)$ copies to simulate the outcome distribution of applying each basis measurement $\POVM_l$ to one copy of $\rho$. This results in a $O(\dims/\ab)$ factor increase in copy complexity compared to the case when the number of outcomes is at least $\dims$. Details are described in \cref{alg:no-shared-k}. Formal copy complexity guarantee of \cref{alg:no-shared-k} is stated in \cref{thm:fixed-k}

\begin{algorithm}
\caption{Fixed measurement state certification $\ab<\dims$}
\label{alg:no-shared-k}
    \begin{algorithmic}
        \State \textbf{Input}: $\ns$ copies of state $\rho$.
        \State \textbf{Output} YES if $\rho=\qkn$, NO otherwise.
        \State Let $\{\qbit{\psi_x^l}\}_{x=1}^d, l=1, \ldots, \dims+1$ be a maximal MUB.
        \State Divide the copies into $\dims+1$ equally-sized groups, each group has $\ngr=\ns/(\dims+1)$ copies.
        \For{$l=1, \ldots, \dims+1$}
        \State For group $l$, apply basis measurement $\POVM_l = \{\qproj{\psi_x^l}\}_{x=1}^{\dims}$. Let the outcomes be $x_1\supparen{l}, \ldots, x_{\ngr}\supparen{l}$.
        \State Further divide into groups of size $M=40\left\lceil\log\frac1\simprob\right\rceil\left\lceil\frac{\dims}{\ab-1}\right\rceil$ where $\simprob=0.01$.
        \State Apply the  $\ell$-bit simulation protocol in \cref{thm:simulation} with $\ell=\log \ab$. Obtain simulated outputs $\hat{x}_1\supparen{l}, \ldots, \hat{x}_{\ngr/M}\supparen{l},$.
        \State Let $(\tilde{x}_1, \ldots, \tilde{x}_{n_l})$ be the set of successful simulations (i.e. those that are not $\bot$) where $n_l$ is the number of successful simuations.
        \EndFor
        \State Generate $\ns/(2M)$ i.i.d. samples from the uniform distribution over $[\dims+1]$, and for $l\in[\dims+1]$ let $\nspu_l$ be the number of times that $l$ appears.
        \State Let $\bx_l=(\tilde{x}\supparen{l}_1, \ldots, \tilde{x}\supparen{l}_{\min\{n_l, \nspu_l\}})$ which  keeps at most $\nspu_l$ successful simulations from each group.
        \State Let $\bx=(\bx_1, \ldots, \bx_{\dims+1})$.
        \State \Return TestIdentityL2$\left(\p_{\qkn}, \bx, \frac{\eps}{(\dims+1)\sqrt{\dims}}, \delta=1/6\right)$.
    \end{algorithmic}
\end{algorithm}

\begin{theorem}
\label{thm:fixed-k}
    For $\ab<\dims$, \cref{alg:no-shared-k} can test whether $\rho=\qkn$ or $\tracenorm{\rho-\qkn}>\eps$ with probability at least $2/3$ using $\ns=O(\frac{\dims^3}{\ab\eps^2})$ copies.
\end{theorem}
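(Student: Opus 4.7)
The plan is to follow the blueprint of the proof of \cref{thm:fixed-d}, inserting one extra layer that accounts for the $\eta$-simulation step. At the end, the samples fed into TestIdentityL2 will be coupled to i.i.d. draws from the MUB outcome distribution $\p_\rho \in \R^{\dims(\dims+1)}$, so that \cref{lem:mub-l2-distance,thm:distr-testing} can be invoked in the same way as for \cref{thm:fixed-d}. The only substantial difference is that each logical sample from $\p_\rho$ now costs $M = 40 \lceil \log(1/\eta)\rceil \lceil \dims/(\ab-1)\rceil = O(\dims/\ab)$ physical copies of $\rho$ instead of one.

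First, I would analyze the simulation subroutine. Within group $l$, each block of $M$ copies, each measured with the $\ab$-outcome basis measurement $\POVM_l$ followed by a $\log \ab$-bit label, is fed to the protocol of \cref{thm:simulation} with parameter $\eta = 0.01$. By that theorem, the protocol returns $\bot$ with probability at most $\eta$, and otherwise an i.i.d.\ sample from the basis-$l$ marginal of $\p_\rho$. So the number of successful simulations $n_l$ out of the $\ngr/M$ attempts is stochastically above $\mathrm{Binomial}(\ngr/M, 1-\eta)$, with mean at least $(1-\eta)\ngr/M$.

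Second, I would run the same Chernoff-type bookkeeping as in the proof of \cref{thm:fixed-d}, this time for both $n_l$ and $m_l$. Applying \cref{lem:chernoff} gives $n_l \ge (1-2\eta)\ngr/M$ for every $l$ except with probability at most $(\dims+1)\exp(-\Omega(\ngr/M))$. Independently, $m_l \sim \mathrm{Binomial}(\ns/(2M), 1/(\dims+1))$ has mean $\ngr/(2M)$, and another Chernoff bound yields $m_l \le \ngr/M$ for every $l$ with the same kind of failure probability. A union bound shows that both events hold simultaneously, hence $n_l \ge m_l$ for all $l$, with probability at least $5/6$, provided $\ngr/M = \Omega(\log \dims)$, which is easily dominated by the target copy complexity $\ns = \Theta(\dims^3/(\ab\eps^2))$.

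Third, conditioned on $n_l \ge m_l$ for every $l$, the concatenated output $\bx$ contains exactly $\sum_l m_l = \ns/(2M)$ entries, and each entry, indexed by $(\text{group } l, \text{outcome } x)$, can be coupled to an i.i.d.\ draw from $\p_\rho$ over $[\dims]\times[\dims+1]$. The problem is thereby reduced to classical identity testing of $\p_\rho$ against the known $\p_{\qkn}$. By \cref{lem:mub-l2-distance}, $\normtwo{\p_{\qkn}} \le \sqrt 2/(\dims+1)$, and $\tracenorm{\rho-\qkn}>\eps$ together with Cauchy--Schwarz gives $\normtwo{\p_\rho-\p_{\qkn}} \ge \eps/((\dims+1)\sqrt\dims)$. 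Therefore \cref{thm:distr-testing}, invoked with $b = \sqrt 2/(\dims+1)$, accuracy $\eps/((\dims+1)\sqrt\dims)$, and failure probability $1/6$, succeeds with $\ns/(2M) = O(\dims(\dims+1)/\eps^2) = O(\dims^2/\eps^2)$ logical samples. Solving for $\ns$ and using $M = O(\dims/\ab)$ yields $\ns = O(\dims^3/(\ab\eps^2))$, and a final union bound over the two concentration failures and the testing failure keeps the total success probability above $2/3$.

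The main obstacle is purely bookkeeping: choosing $\eta$, the constants hidden in $M$, the Chernoff slack, and the TestIdentityL2 failure parameter consistently so that the three bad events each contribute at most $1/6$ to the total failure probability while preserving the $O(\dims^3/(\ab\eps^2))$ copy budget. No new conceptual ingredient is required beyond what is already used in \cref{thm:fixed-d} and in the simulation protocol of \cref{thm:simulation}.
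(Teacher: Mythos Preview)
Your proposal is correct and takes essentially the same approach as the paper: two Chernoff bounds (one lower-bounding $n_l$, one upper-bounding $m_l$) plus a union bound to ensure $n_l\ge m_l$ for all $l$, after which \cref{lem:mub-l2-distance} and \cref{thm:distr-testing} apply exactly as in \cref{thm:fixed-d}, with the extra factor $M=O(\dims/\ab)$ giving the final copy count. The only slip is that your stated thresholds $n_l\ge(1-2\eta)\ngr/M$ and $m_l\le \ngr/M$ do not actually meet; the paper uses a common cutoff $\tfrac{3}{4}\ngr/M$ for both events, which is precisely the bookkeeping you flagged.
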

\begin{proof}
The proof is very similar to \cref{thm:fixed-d} where we argue that with high probability there are enough samples to run TestIdentityL2, or $m_l<n_l$ for all group $l$. The only difference is that for each group $l$ the number of samples $n_l$ is random.  

For convenience we set $\hat{\ns}=\ns/M$ and $\hat{\ns}_0=\ngr/M$. Note that

$$m_l>n_l\implies m_l>\frac{3}{4}\hat{\ns}_0 \text{ or } n_l<\frac{3}{4}\hat{\ns}_0. $$
Thus it suffices to show that both events occur with very small probability. Since $m_l\sim\binomial{\frac{\hat{\ns}}{2}}{\frac{1}{\dims+1}}$ and $\expect{m_l}=\hat{\ns}_0/2$, using multiplicative Chernoff bound (\cref{lem:chernoff}) with $\alpha=1/2$ , 
\[
\probaOf{m_l>\frac{3}{4}\hat{n}_0}\le \exp\left\{-\frac{\hat{\ns}}{10(\dims+1)}\right\}.
\]
Note that $n_l\sim \binomial{\hat{n}_0}{1-\simprob}$ where $\eta=0.01$ and $\expect{n_l}=\hat{n}_0(1-\eta)$, using the left tail of \cref{lem:chernoff} with $\alpha=\frac{1/4+\eta}{1-\eta}>1/4$,
\[
\probaOf{n_l<\frac{3}{4}\hat{n}_0}\le \exp\left\{-\frac{\hat{\ns}}{32(\dims+1)}\right\}.
\]
Combining the two parts,
\[
\probaOf{m_l>n_l}\le \probaOf{m_l>\frac{3}{4}\hat{n}_0 \text{ or } n_l<\frac{3}{4}\hat{\ns}_0}\le 2\exp\left\{-\frac{\hat{\ns}}{32(\dims+1)}\right\}.
\]
By union bound,
\[
\probaOf{\exists l, m_l>n_l}\le 2(\dims+1)\exp\left\{-\frac{\hat{\ns}}{32(\dims+1)}\right\}.
\]
The probability is at most $\delta'=1/6$ if $\hat{\ns}>32(\dims+1)\ln(12(\dims+1))$. Using $\hat{\ns}=2600\dims(\dims+1)/\eps^2$ samples, we can distinguish between $\p_\rho$ and $\p_{\qkn}$ with probability at least 5/6, which is large enough to ensure $\probaOf{\exists l, m_l>n_l}\le 1/6$. The probability that the entire \cref{alg:no-shared-k} outputs the correct answer is at least $5/6-\delta'=2/3$ as desired.

Recall that $\hat{\ns}=\ns/M$ where $M=\Theta(\dims/\ab)$, therefore the number of copies $\ns=O(\frac{\dims^3}{\ab\eps^2})$ as desired.

\end{proof}

\subsection{Remarks}
We make some important remarks about the algorithms presented in this section.

\paragraph{Random processing of measurement outcomes}  \cref{alg:no-shared-d,alg:no-shared-k} relies on random post-processing of measurement outcomes when generating i.i.d. samples from the uniform distribution over $[\dims+1]$. The goal is to establish a simple reduction to the case where the $\dims+1$ MUBs are sampled uniformly and thus simplify the presentation and proof. However, the formulation of POVM so general that it can be used to describe any classical algorithms. Thus, strictly speaking, the MUB measurements and the random post-processing step can be formulated as a \emph{randomized} measurement scheme.  Nevertheless, it is practically reasonable and relevant to allow randomized post-processing of measurement outcomes as it can be easily implemented with classical computers and provide algorithmic advantages. The lower bound in \cref{thm:fixed-k-lower} is still valid for random classical algorithms applied to the outcomes of fixed measurements, so the copy complexity bound is still tight in this setting. 

To obtain an algorithm using ``truly fixed'' measurements, we can apply any deterministic $\ell_2$ identity testing algorithm using the outcomes with reference distribution $\p_{\qkn}$.

\paragraph{On the algorithm for $\ab<\dims$} \cref{alg:no-shared-k} relies on ad-hoc compression of the outcomes obtained from measuring with the MUBs. When all copies are in one place and one can freely process the measurement outcomes, performing the compression step is not very reasonable. Thus the algorithm is more relevant when the copies are distributed across multiple quantum devices with limited bandwidth to communicate the outcomes to some central server. Nevertheless, after compressing each outcome to $\ell=\log\ab$ bits, the resulting POVM has at most $\ab$ outcomes, and thus the algorithm indeed shows that the lower bound from \cref{thm:fixed-k-lower} is tight.

\bibliography{refs}
\bibliographystyle{alpha}

\appendix
\section{Proof of MIC properties}
\subsection{Proof of~\cref{fact:mic-properties}}
\label{app:fact:mic-properties}
\begin{proof}
    \begin{enumerate}
        \item Since each $\vvec{M_x}\vadj{M_x}$ is p.s.d. and $\Tr[M_x]\ge 0$, $\Choi_{\POVM}$ is also p.s.d. 
        \item \begin{align*}
            \Luders_{\POVM}(\eye_\dims) = \sum_{x}M_x\frac{\Tr[M_x\eye_\dims]}{\Tr[M_x]}=\sum_{x}M_x=\eye_\dims.
        \end{align*}
        The last equality is by the definition of POVM.
        \item For all matrices $X\in\C^{\dims\times\dims}$,
        \begin{align*}
            \Tr[\Luders_{\POVM}(X)]=\sum_{x}\Tr[M_x]\frac{\Tr[M_xX]}{\Tr[M_x]}=\sum_x\Tr[M_xX]=\Tr\left[\sum_{x}M_xX\right]=\Tr[X].
        \end{align*}
        \item Let $X$ be Hermitian, then
        \begin{align*}
            \Luders_{\POVM}(X)^\dagger = \sum_{x}M_x^\dagger\frac{\overline{\Tr[M_xX]}}{\overline{\Tr[M_x]}}=\sum_{x}M_x\frac{\Tr[M_xX]}{\Tr[M_x]}=\Luders_{\POVM}(X)
        \end{align*}
        The second step is because both $X$ and $M_x$ are Hermitian and thus $\Tr[M_xX], \Tr[M_x]$ are real numbers.
    \end{enumerate}
    The proof is complete.
\end{proof}

\subsection{Proof of~\cref{lem:mic-opnorm}}
\label{app:lem:mic-opnorm}
\begin{proof}
   It suffices to prove that for all matrix $X$, $\vadj{X}\Choi_{\POVM}\vvec{X}\le \vvdotprod{X}{X}=\Tr[X^\dagger X]$. Without loss of generality, we can assume that $X$ is Hermitian. Indeed, all matrices $X$ can be written as $X=A+iB$ where both $A$ and $B$ are Hermitian, and so $\vvdotprod{X}{X}=\vvdotprod{A}{A}+\vvdotprod{B}{B}$. If the statement is true for all Hermitian matrices, then
    \begin{align*}
    \vadj{X}\Choi_{\POVM}\vvec{X}&=\vadj{A}\Choi_{\POVM}\vvec{A}+\vadj{B}\Choi_{\POVM}\vvec{B}+i\vadj{A}\Choi_{\POVM}\vvec{B}-i\vadj{B}\Choi_{\POVM}\vvec{A}\\
        &=\vadj{A}\Choi_{\POVM}\vvec{A}+\vadj{B}\Choi_{\POVM}\vvec{B}\\
        &\le \vvdotprod{A}{A}+\vvdotprod{B}{B}=\vvdotprod{X}{X}.
    \end{align*}
    In the second step we used that $\Choi_{\POVM}$ a Hermitian matrix and that $\Luders_{\POVM}$ is Hermitian-preserving, so $\vadj{B}\Choi_{\POVM}\vvec{A}=\vadj{A}\Choi_{\POVM}\vvec{B}\in \R$.

     We now evaluate the expression assuming $X$ is Hermitian.
     \[
     \vadj{X}\Choi_{\POVM}\vvec{X}=\sum_{x}\frac{\vvdotprod{X}{M_x}\vvdotprod{M_x}{X}}{\Tr[M_x]}=\sum_{x}\frac{\Tr[M_xX]^2}{\Tr[M_x]}.
     \]
     Our goal is to prove that $\frac{\Tr[M_xX]^2}{\Tr[M_x]}\le \Tr[M_xX^2]$. If so, then we would have,
     \[
     \vadj{X}\Choi_{\POVM}\vvec{X}=\sum_{x}\frac{\Tr[M_xX]^2}{\Tr[M_x]}\le \sum_x\Tr[M_xX^2]=\Tr\left[\sum_{x}M_xX^2\right]=\Tr[X^2],
     \]
     exactly as desired.
     To prove this statement, let $M_x=\sum_{j}\lambda_{x, j}\qproj{\psi_{x, j}}$ be the eigen-decomposition so that $\sum_{j}\lambda_{x, j}=\Tr[M_x]$. Then,
     \begin{align*}
         \Tr[M_xX]^2&=\sum_{i,j}\lambda_{x, i}\lambda_{x,j}\matdotprod{\psi_{x,i}}{X}{\psi_{x,i}}\matdotprod{\psi_{x,j}}{X}{\psi_{x,j}}\\
         &=\sum_{i,j}\lambda_{x, i}\lambda_{x,j}\Tr[X\qbit{\psi_{x,i}}\qadjoint{\psi_{x,j}}X\qbit{\psi_{x,j}}\qadjoint{\psi_{x,i}}].
     \end{align*}
     By Cauchy-Schwarz,
     \begin{align*}     \Tr[X\qbit{\psi_{x,i}}\qadjoint{\psi_{x,j}}X\qbit{\psi_{x,j}}\qadjoint{\psi_{x,i}}]&=\hdotprod{\qbit{\psi_{x,j}}\qadjoint{\psi_{x,i}}X}{X\qbit{\psi_{x,j}}\qadjoint{\psi_{x,i}}}\\
     &\le \hsnorm{\qbit{\psi_{x,j}}\qadjoint{\psi_{x,i}}X}\hsnorm{X\qbit{\psi_{x,j}}\qadjoint{\psi_{x,i}}}\\
     &=\sqrt{\Tr[X\qbit{\psi_{x,i}}\qdotprod{\psi_{x,j}}{\psi_{x,j}}\qadjoint{\psi_{x,i}}X]\Tr[\qbit{\psi_{x,i}}\qadjoint{\psi_{x,j}}X^2\qbit{\psi_{x,j}}\qadjoint{\psi_{x,i}}]}\\
     &=\sqrt{\matdotprod{\psi_{x,i}}{X^2}{\psi_{x,i}}\matdotprod{\psi_{x,j}}{X^2}{\psi_{x,j}}}\\
     &\le \frac{1}{2}(\matdotprod{\psi_{x,i}}{X^2}{\psi_{x,i}}+\matdotprod{\psi_{x,j}}{X^2}{\psi_{x,j}}).
     \end{align*}
    The final step follows by AM-GM inequality. Plugging in all the expressions,
    \begin{align*}
        \frac{ \Tr[M_xX]^2}{\Tr[M_x]}&\le \frac{\sum_{i,j}\lambda_{x, i}\lambda_{x,j}\frac{1}{2}(\matdotprod{\psi_{x,i}}{X^2}{\psi_{x,i}}+\matdotprod{\psi_{x,j}}{X^2}{\psi_{x,j}})}{\sum_{j}\lambda_{x, j}}\\
        &=\frac{\sum_{j}\lambda_{x, j}\sum_{i}\lambda_{x,i}\matdotprod{\psi_{x,i}}{X^2}{\psi_{x,i}}}{\sum_{j}\lambda_{x, j}}\\
        &=\sum_{i}\lambda_{x,i}\matdotprod{\psi_{x,i}}{X^2}{\psi_{x,i}}\\
        &=\Tr[M_xX^2].
    \end{align*}
    The proof is complete.
\end{proof}

\subsection{Proof of~\cref{lem:mic-trace}}
\label{app:lem:mic-trace}
\begin{proof}
    First, we note that
    \[
    \Tr[\Choi_{\POVM}]=\sum_{x}\frac{\vvdotprod{M_x}{M_x}}{\Tr[M_x]}=\sum_x\frac{\hsnorm{M_x}^2}{\Tr[M_x]}.
    \]
    Since $\hsnorm{M_x}\le\tracenorm{M_x}= \Tr[M_x]$, we have
    \[
    \Tr[\Choi_{\POVM}]\le \sum_x\frac{{\Tr[M_x]}^2}{\Tr[M_x]}=\sum_{x}\Tr[M_x]=\dims.
    \]
    We also note that since $M_x\preceq \eye_\dims$,  $\opnorm{M_x}\le 1$. By H\"older's inequality, 
    $\hsnorm{M_x}^2\le \opnorm{M_x}\Tr[M_x]$. Thus, when the size of $\POVM$ is at most $\ab$,
    \[
    \Tr[\Choi_{\POVM}]\le\sum_{x=1}^{\ab}\opnorm{M_x}\le \ab.\qedhere
    \]
    Since $\Choi_{\POVM}$ is p.s.d by~\cref{fact:mic-properties}, we have $\tracenorm{\Choi_{\POVM}}=\Tr[\Choi_{\POVM}]$, thereby completing the proof.
\end{proof}

\section{Missing proofs in the lower bound}
\label{app:lower-bound-proof}

\subsection{Proof of~\cref{thm:rand-mat-opnorm-concentration}}
\label{app:prop:perturbation-trace-distance}

\begin{proof}
    We first prove that for any fixed unit vector $x\in \C^{\dims}$, the norm of $Wx$ is at most $O(\sqrt{\dims})$ with high probability. Then we use an $\epsilon$-net argument to show that the probability is also high for \textit{all} unit vectors. We start with the following lemma.
    \begin{lemma}
    \label{lem:fixed-vec-opnorm-concentration}
        Let $\{\ptb_i\}_{i=1}^{\dims^2}, \{V_i\}_{i=1}^{\dims^2}$ and $W$ be defined in~\cref{thm:rand-mat-opnorm-concentration}. Then there exists a universal constant $c'$ for any fixed unit vector $x$ and all $s>0$,
        \[
        \probaOf{\normtwo{Wx}\ge (1+s)\sqrt{\dims}}\le 2\exp\{-c's^2\dims\}.
        \]
    \end{lemma}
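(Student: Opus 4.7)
The plan is to reduce the claim to Talagrand's convex concentration inequality for Rademacher variables, applied to the function $f(z) = \|Wx\|_2$. First I would write $\|Wx\|_2^2 = \sum_{i,j\le\ell} \ptb_i \ptb_j \hdotprod{V_i x}{V_j x} = z^T B z$ where $B \in \R^{\ell\times\ell}$ is the (real part of the) Gram matrix with entries $B_{ij} = \Re\,\hdotprod{V_i x}{V_j x}$. Taking expectation, $\expect{\|Wx\|_2^2} = \sum_{i\le \ell} \|V_i x\|_2^2 = x^\dagger\Bigl(\sum_{i\le\ell} V_i^\dagger V_i\Bigr)x$. The critical ``isotropy'' identity is that for \emph{any} orthonormal basis $(V_i)_{i=1}^{\dims^2}$ of $(\C^{\dims\times\dims},\hdotprod{\cdot}{\cdot})$, one has $\sum_{i=1}^{\dims^2} V_i^\dagger V_i = \dims\, \eye_\dims$. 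This is immediate for the standard basis $\{\qoutprod{j}{k}\}_{j,k}$, and is invariant under unitary change of basis in the $\dims^2$-dimensional ambient space, hence holds for every orthonormal basis. Since $\ell\le \dims^2$ and each $V_i^\dagger V_i$ is p.s.d., this yields $\expect{\|Wx\|_2^2}\le \dims$, and therefore $\expect{\|Wx\|_2}\le \sqrt{\dims}$ by Jensen's inequality.

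Next I would verify that $f$ is convex and $1$-Lipschitz in $z\in\R^\ell$ with respect to the Euclidean norm. Convexity is immediate since $f$ is a norm composed with an affine map. For Lipschitzness,
\[
|f(z)-f(z')| \le \Bigl\|\sum_i (z_i - z'_i)V_i x\Bigr\|_2 = \sqrt{(z-z')^T B (z-z')} \le \sqrt{\opnorm{B}}\,\|z-z'\|_2,
\]
so it suffices to show $\opnorm{B}\le 1$. The nonzero spectrum of $B$ agrees with that of $B' = \sum_{i\le\ell}(V_i x)(V_i x)^\dagger\in\C^{\dims\times\dims}$, and for any unit $y\in\C^\dims$ the key computation is
\[
y^\dagger B' y = \sum_{i=1}^\ell |y^\dagger V_i x|^2 = \sum_{i=1}^\ell |\hdotprod{V_i}{y x^\dagger}|^2 \le \hsnorm{yx^\dagger}^2 = 1,
\]
by Parseval's identity applied to the orthonormal family $\{V_i\}_{i=1}^\ell$. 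Hence $\opnorm{B}\le 1$ and $f$ is $1$-Lipschitz.

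Finally I would invoke Talagrand's convex concentration inequality for symmetric Rademacher variables (e.g.\ Boucheron--Lugosi--Massart, Theorem 7.12): for any convex $1$-Lipschitz $f:\R^\ell\to\R$ and $z$ with i.i.d.\ $\pm 1$ coordinates,
\[
\probaOf{f(z)-\expect{f(z)}>t}\le 2\exp(-t^2/16).
\]
Combining this with $\expect{f(z)}\le\sqrt{\dims}$ and setting $t=s\sqrt{\dims}$ yields $\probaOf{\|Wx\|_2\ge (1+s)\sqrt{\dims}}\le 2\exp(-s^2\dims/16)$, which is the claimed bound with $c'=1/16$. The main subtlety I expect is precisely the choice of concentration inequality: a direct Hanson--Wright bound on $z^T B z$ would only give tail $\exp(-c\min(t^2/\hsnorm{B}^2,t/\opnorm{B}))$, degenerating to $\exp(-c s\dims)$ for $s\gg 1$; the convex-Lipschitz route sidesteps this and gives a clean sub-Gaussian tail of the desired form $\exp(-c' s^2 \dims)$ uniformly in $s>0$, which is exactly what the $\eps$-net argument in the proof of \cref{thm:rand-mat-opnorm-concentration} requires.
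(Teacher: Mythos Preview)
Your argument is correct and reaches the desired bound, but via a different concentration tool than the paper. The paper writes $Wx = V_x\Pi_\ell z$ with $V_x = [V_1 x,\ldots,V_{\dims^2} x]\in\C^{\dims\times\dims^2}$, proves the isometry $V_x V_x^\dagger = \eye_\dims$ directly (their \cref{claim:isometry}), reads off $\opnorm{V_x}=1$ and $\hsnorm{V_x}=\sqrt{\dims}$, and then applies Vershynin's Theorem~6.3.2 on the concentration of $\normtwo{BX}$ around $\hsnorm{B}$ for sub-Gaussian $X$. You instead bound $\expect{\normtwo{Wx}^2}\le\dims$ from the isotropy identity $\sum_i V_i^\dagger V_i = \dims\,\eye_\dims$, verify $1$-Lipschitzness via Parseval, and invoke Talagrand's convex-Lipschitz concentration. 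Both routes hinge on the same two structural facts---operator norm at most $1$ and Frobenius norm at most $\sqrt{\dims}$ for the linear map $z\mapsto Wx$---but package them differently. Your Talagrand route is arguably more elementary and self-contained; the paper's route is more off-the-shelf for exactly this quantity.

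One small imprecision: the claim that ``the nonzero spectrum of $B$ agrees with that of $B'$'' is not correct as stated, since your $B=\Re(G)$ with $G$ the complex Gram matrix $A^\dagger A$, and while $G$ and $B'=AA^\dagger$ share nonzero spectra, $\Re(G)$ need not. What you actually need, and what does hold, is $\opnorm{B}\le\opnorm{G}=\opnorm{B'}$: for real unit $w$, $w^\top B w = w^\dagger G w$, so the supremum over real vectors is dominated by the supremum over complex ones. Cleaner still is to bypass $B$ entirely and write $|f(z)-f(z')|\le\normtwo{A(z-z')}\le\opnorm{A}\,\normtwo{z-z'}$ with $\opnorm{A}^2=\opnorm{B'}\le 1$ by your Parseval computation. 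Finally, your closing remark slightly undersells the Hanson--Wright route: Vershynin's Theorem~6.3.2, which the paper invokes, \emph{is} built on Hanson--Wright but applied to $\normtwo{BX}$ rather than $\normtwo{BX}^2$, and it delivers a clean sub-Gaussian tail in $t$ without the linear-tail regime you warn about.
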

    \begin{proof}
        Let $\ptb = (\ptb_1, \ldots, \ptb_{\dims^2})\in \R^{\dims^2}$, and $\Pi_\ell\in \R^{\dims^2\times\dims^2}$ be a diagonal matrix with 1 in the first $\ell$ diagonal entries and 0 everywhere else. Then 
        \[
        Wx=\sum_{i=1}^{\ell}\ptb_iV_ix = V_x\Pi_\ell \ptb,
        \]
        where 
        \begin{equation*}
            V_x\eqdef[V_1x, \ldots, V_{\dims^2}x]\in\C^{\dims\times \dims^2}
        \end{equation*}
        which is an isometry, i.e. $V_xV_x^\dagger = \eye_\dims$, as stated in~\cref{claim:isometry} which will be proved at the end of this section. Therefore,
        \[
        \opnorm{V_x}=1, \quad \hsnorm{V_x}^2=\Tr[V_xV_x^\dagger]=\dims.
        \]From this, we can apply concentration for linear transforms of independent sub-Gaussian random variables.
        \begin{theorem}[{\citet[Theorem 6.3.2]{vershynin2018high}}]
            Let $B\in \C^{m\times n}$ be a fixed $m\times n$ matrix and let $X=(X_1, \ldots, X_n)\in\R^n$ be a random vector with independent, mean zero, unit variance, and sub-Gaussian coordinates with Orlicz-2 norm $\|X_i\|_{\psi_2}\le K$. Then there exists a universal constant $C=\frac{3}{8}$ such that for all $t>0$,
            \[
            \probaOf{|\normtwo{BX}-\hsnorm{B}|>t}\le 2\exp\left\{-\frac{Ct^2}{K^4\opnorm{B}^2}\right\}.
            \]
        \end{theorem}
        \begin{remark}
            The original~\cite[Theorem 6.3.2]{vershynin2018high} was stated for real matrix $B$. However, it is straightforward to extend the argument to complex $B$ by considering $\tilde{B}=\begin{bmatrix}
                \Real(B)\\
                \Img(B)
            \end{bmatrix}$. Then $\opnorm{\tilde{B}}=\opnorm{B}, \hsnorm{\tilde{B}}=\hsnorm{B}$, and $\normtwo{BX}=\normtwo{\tilde{B}X}$. 
        \end{remark}
        Setting $B=V_x\Pi_\ell$, we observe that
        \[
        \opnorm{B}\le \opnorm{V_x}\opnorm{\Pi_\ell}=1, \quad  \hsnorm{B}\le \hsnorm{V_x}=\sqrt{\dims}.
        \]
        Thus, plugging $t=s\sqrt{\dims}$, and noting that $\|\ptb_i\|_{\psi_2}=1/\sqrt{\ln 2}=K$, we have
        \[
        \probaOf{\normtwo{Wx}> (1+s)\sqrt{\dims}}\le \probaOf{\normtwo{B\ptb}>s\sqrt{\dims}+\hsnorm{B}}\le 2\exp\left\{-C\dims(\ln2)^2s^2\right\}.
        \]   
        Setting $c'=C(\ln2)^2=\frac{3(\ln2)^2}{8}$ completes the proof.
    \end{proof}
    We can then proceed to use the $\epsilon$-net argument, which follows closely to \cite[Section 2.3]{tao2023topics}.

    \begin{lemma}[{\cite[Lemma 2.3.2]{tao2023topics}}]
        Let $\Sigma$ be a maximal $1/2$-net of the unitary sphere, i.e., a maximal set of points that are separated from each other by at least $1/2$. Then for any matrix $M\in\C^{\dims\times\dims}$ and $\lambda>0$,
        \[
         \probaOf{\opnorm{M}>\lambda}\le \sum_{y\in\Sigma}\probaOf{\normtwo{My}>\lambda/2}.
        \]
    \end{lemma}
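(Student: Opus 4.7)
The statement is the standard $\epsilon$-net reduction for bounding the operator norm: it lets us replace the supremum of $\|Mx\|$ over the uncountable unit sphere by a supremum over a finite net, at the cost of a factor of $2$. The plan is to first show that maximality of $\Sigma$ forces every unit vector to lie within distance $1/2$ of some point in $\Sigma$, then deduce $\opnorm{M} \le 2 \sup_{y\in\Sigma} \normtwo{My}$ by a one-line triangle inequality, and finally invoke a union bound.

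\textbf{Step 1 (covering).} I would first argue that $\Sigma$ is in fact a $1/2$-cover of the unit sphere. If some unit vector $x$ had $\normtwo{x-y} \ge 1/2$ for every $y \in \Sigma$, then $\Sigma \cup \{x\}$ would still be $1/2$-separated, contradicting the maximality of $\Sigma$. Hence for every unit $x$ there exists $y(x) \in \Sigma$ with $\normtwo{x - y(x)} < 1/2$.

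\textbf{Step 2 (net-to-sphere reduction).} Pick any unit $x$ and let $y = y(x) \in \Sigma$ be as in Step 1. By the triangle inequality and the definition of the operator norm,
\[
\normtwo{Mx} \le \normtwo{My} + \normtwo{M(x-y)} \le \normtwo{My} + \opnorm{M}\cdot\normtwo{x-y} \le \sup_{y'\in\Sigma}\normtwo{My'} + \tfrac12 \opnorm{M}.
\]
Taking the supremum over unit $x$ on the left gives $\opnorm{M} \le \sup_{y'\in\Sigma}\normtwo{My'} + \tfrac12\opnorm{M}$, hence
\[
\opnorm{M} \le 2\sup_{y\in\Sigma}\normtwo{My}.
\]

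\textbf{Step 3 (union bound).} The event $\{\opnorm{M} > \lambda\}$ is therefore contained in $\{\sup_{y\in\Sigma}\normtwo{My} > \lambda/2\}$, which is the union over $y \in \Sigma$ of $\{\normtwo{My} > \lambda/2\}$. A union bound gives the claimed inequality
\[
\probaOf{\opnorm{M} > \lambda} \le \sum_{y\in\Sigma}\probaOf{\normtwo{My} > \lambda/2}.
\]

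\textbf{Main obstacle.} There is no serious difficulty here: the only subtle point is that the inequality $\opnorm{M} \le 2 \sup_{y \in \Sigma} \normtwo{My}$ needs the supremum to be attained (or at least approximated), which is automatic since $\Sigma$ is finite (being a maximal $1/2$-separated set in a compact set, by a standard volume-packing argument $|\Sigma| \le 5^{2\dims}$). So the argument is a clean three-step reduction and the only care needed is to justify that Step 2's supremum on the sphere can be pushed through (which is automatic as the unit sphere is compact and $x \mapsto \normtwo{Mx}$ is continuous).
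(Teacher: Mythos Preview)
Your proof is correct and is exactly the standard $\epsilon$-net argument. The paper does not prove this lemma itself but simply cites it from \cite{tao2023topics}; your three-step argument (maximal $\Rightarrow$ cover, triangle inequality to get $\opnorm{M}\le 2\sup_{y\in\Sigma}\normtwo{My}$, then union bound) is precisely the standard proof found in that reference.
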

    By standard volume packing argument, the size of $\Sigma$ is at most $\exp(O(\dims))$, 
    \begin{lemma}[{\cite[Lemma 2.3.4]{tao2023topics}}]
    \label{lem:packing-argument}
        Let $\epsilon\in(0, 1)$ and let $\Sigma$ be an $\epsilon$-net of the unit sphere. Then $|\Sigma|\le (C'/\epsilon)^\dims$ where $C'=3$.
    \end{lemma}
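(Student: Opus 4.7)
The plan is to apply a standard volume-packing argument in the ambient Euclidean space of dimension $\dims$. The first step is to observe that by the $\epsilon$-separation property, the open balls $B(y, \epsilon/2)$ centered at the points $y \in \Sigma$ are pairwise disjoint: if two such balls intersected, the triangle inequality would produce two net points at distance strictly less than $\epsilon$, contradicting the defining property of $\Sigma$.

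The second step is to note that each $B(y, \epsilon/2)$ lies inside the larger ball $B(0, 1 + \epsilon/2)$ around the origin, since $y$ is on the unit sphere and every point within $\epsilon/2$ of $y$ has norm at most $1 + \epsilon/2$.

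Combining these two facts and comparing Lebesgue volumes in $\R^\dims$ gives
\[
|\Sigma| \cdot (\epsilon/2)^\dims \le (1 + \epsilon/2)^\dims,
\]
after canceling the common unit-ball-volume factor on both sides. Rearranging and using $\epsilon \in (0,1)$ yields $|\Sigma| \le (1 + 2/\epsilon)^\dims \le (3/\epsilon)^\dims$, which is exactly the claim with $C' = 3$.

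There is no real obstacle in this argument. The only point requiring minor care is the choice of packing radius $\epsilon/2$, as this is what pins down the tight constant $C' = 3$; any larger radius would violate disjointness, while smaller radii would only worsen the constant. If the underlying space is $\C^\dims$ rather than $\R^\dims$ (as is relevant for the later application to $\opnorm{W}$), one replaces $\dims$ by $2\dims$ throughout, which still fits the form $(C'/\epsilon)^\dims$ after absorbing the factor of $2$ in the exponent into the absolute constant $C'$.
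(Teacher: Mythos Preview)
Your proof is correct and is precisely the ``standard volume packing argument'' the paper alludes to; the paper itself does not give a proof but simply cites \cite[Lemma 2.3.4]{tao2023topics}. Your handling of the constant via $(\epsilon/2)$-radius balls inside $B(0,1+\epsilon/2)$ and the final bound $2/\epsilon+1\le 3/\epsilon$ for $\epsilon<1$ matches the standard derivation, and your remark about the complex case is apt.
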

    Thus with $c'$ defined in~\cref{lem:fixed-vec-opnorm-concentration} and $C'$ defined in~\cref{lem:packing-argument} we conclude that
    \[
    \probaOf{\opnorm{W}>2(1+s)\sqrt{\dims}}\le 2(2C')^\dims \exp\{-c's^2\dims\}=2\exp\left\{-(c's^2-\ln(2C'))\dims\right\}.
    \]
    Thus choosing $s$ sufficiently large, we can guarantee that the tail probability decays exponentially in $\dims$. Specifically, let $\alpha>0$ and $s^2=\frac{\alpha+\ln(2C')}{c'}$, then we have
    \[
    \probaOf{\opnorm{W}>2(1+s)\sqrt{\dims}}\le 2e^{-\alpha\dims}.
    \]
    Setting $\cop_\alpha = 2(1+s)=2\Paren{1+\sqrt{\frac{\alpha+\ln(2C')}{c'}}}$ proves the theorem. In particular, $\kappa_1\le 10$ when substituting the values of $c'$ and $C'$.
\end{proof}

We end this section with the proof of the isometry claim.
\begin{claim}
 \label{claim:isometry}
    Let $V_1, \ldots, V_{\dims^2}$ be an orthonormal basis of $\C^{\dims\times \dims}$ and $x\in\C^\dims$ be a unit vector. Then $V_x\eqdef[V_1x, \ldots, V_{\dims^2}x]\in\C^{\dims\times \dims^2}$ is an isometry: $V_xV_x^\dagger = \eye_\dims$.
\end{claim}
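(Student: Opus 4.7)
}

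The plan is to observe that
\[
V_x V_x^\dagger \;=\; \sum_{i=1}^{\dims^2}(V_ix)(V_ix)^\dagger \;=\; \sum_{i=1}^{\dims^2} V_i\, xx^\dagger\, V_i^\dagger,
\]
so the claim reduces to the basis-free identity
\begin{equation}
\label{eq:plan-key}
\sum_{i=1}^{\dims^2} V_i M V_i^\dagger \;=\; \Tr[M]\,\eye_\dims \qquad \text{for every } M\in\C^{\dims\times\dims},
\end{equation}
applied at $M=xx^\dagger$ (so $\Tr[M]=\normtwo{x}^2=1$). The main work is therefore to prove \eqref{eq:plan-key} for an arbitrary Hilbert--Schmidt orthonormal basis $\{V_i\}$.

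I would establish \eqref{eq:plan-key} in two steps. First, I would show that the left-hand side is \emph{independent of the choice of orthonormal basis}. Any two orthonormal bases $\{V_i\}$ and $\{W_j\}$ of the Hilbert space $(\C^{\dims\times\dims},\hdotprod{\cdot}{\cdot})$ are related by a unitary change-of-basis matrix $U\in\C^{\dims^2\times\dims^2}$, namely $V_i=\sum_j U_{ij}W_j$. Expanding the sum,
\[
\sum_i V_i M V_i^\dagger \;=\; \sum_{j,k}\Bigl(\sum_i U_{ij}\overline{U_{ik}}\Bigr) W_j M W_k^\dagger \;=\; \sum_{j,k}\delta_{jk}\,W_j M W_k^\dagger \;=\; \sum_j W_j M W_j^\dagger,
\]
using unitarity $U^\dagger U = \eye_{\dims^2}$.

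Second, I would evaluate \eqref{eq:plan-key} on the convenient basis of matrix units $E_{ab}=\qoutprod{a}{b}$, which is orthonormal under the Hilbert--Schmidt inner product since $\hdotprod{E_{ab}}{E_{cd}}=\delta_{ac}\delta_{bd}$. A direct computation gives
\[
\sum_{a,b} E_{ab} M E_{ab}^\dagger \;=\; \sum_{a,b} \qbit{a}\matdotprod{b}{M}{b}\qadjoint{a} \;=\; \Bigl(\sum_b M_{bb}\Bigr)\sum_a \qoutprod{a}{a} \;=\; \Tr[M]\,\eye_\dims.
\]
Combining this with basis independence yields \eqref{eq:plan-key} for every orthonormal $\{V_i\}$, and specializing to $M=xx^\dagger$ finishes the claim.

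No real obstacle arises: once one spots that the expression $\sum_i V_i M V_i^\dagger$ is invariant under orthonormal changes of basis, the matrix-unit computation is immediate. The only minor care needed is to use the correct convention for the Hilbert--Schmidt inner product $\hdotprod{A}{B}=\Tr[A^\dagger B]$ (so that the change-of-basis transformation really is unitary rather than merely invertible), which is exactly the convention adopted in \cref{sec:preliminaries}.
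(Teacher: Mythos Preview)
Your proposal is correct and, in my view, cleaner than the paper's own argument, but it proceeds along a genuinely different line.

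The paper works coordinate-wise: it shows that the rows of $V_x$ are orthonormal by writing the $k$th entry of $V_ix$ as $V_i^{(k)}x$ (the $k$th row of $V_i$ applied to $x$), then assembling the vectorizations $\vvec{V_i}$ into a single unitary matrix $V\in\C^{\dims^2\times\dims^2}$. Row orthonormality of this big unitary is then unpacked to yield $\sum_i (V_i^{(k)})^\dagger V_i^{(l)}=\delta_{kl}\eye_\dims$, from which $V_x^{(k)}(V_x^{(l)})^\dagger=\delta_{kl}$ follows. It is a direct but index-heavy computation.

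Your route instead recognizes $V_xV_x^\dagger=\sum_i V_i(xx^\dagger)V_i^\dagger$ as an instance of the basis-free identity $\sum_i V_i M V_i^\dagger=\Tr[M]\,\eye_\dims$, which you prove by first showing invariance under unitary change of orthonormal basis and then evaluating on the matrix units $E_{ab}$. This isolates a reusable lemma (essentially that conjugation by any Hilbert--Schmidt orthonormal basis implements the completely depolarizing map) and avoids the row-by-row bookkeeping. Both proofs ultimately rest on unitarity of the change-of-basis matrix in $\C^{\dims^2}$, but yours packages it more conceptually, while the paper's version stays closer to the explicit vectorization machinery used elsewhere in the manuscript.
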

\begin{proof}
 Let $V_x\supparen{k}$ be the $k$th row of $V_x$ written as row vector. It suffices to prove that
 \[
 V_x\supparen{k}(V_x\supparen{l})^{\dagger}=\delta_{kl}
 \]
 Let $V_i^{(k)}$ be the $k$th row of $V_i$, written as a row vector. Then the $k$th element of $V_ix$ is
        \begin{equation*}
            \quad v_i\supparen{k}\eqdef V_i^{(k)}x.
        \end{equation*}
     Since $V_1, \ldots, V_{\dims^2}$ are orthonormal, we know that
        \[
        V\eqdef[\VecOp(V_1), \ldots, \VecOp(V_{\dims^2})]
        \]
        is a unitary matrix in $\C^{\dims^2\times \dims^2}$. Let $V^{j}$ be the $j$th row of $V$, then because $V$ is unitary, the vector dot product $\hdotprod{V^{j}}{ V^{i}}=\delta_{ij}$. Let 
        \[
        V^{(k)}=[(V^{k})^{\dagger}, (V^{k+\dims})^{\dagger}, \ldots(V^{k+\dims(j-1)})^{\dagger}, \ldots, (V^{k+\dims(d-1)})^{\dagger}]^\dagger
        \]
        which picks out the $k$th row of all $V_1, \ldots, V_{\dims^2}$. Then, we have
        \[
        V^{(k)}=[(V_1^{(k)})^\top, \ldots, (V_{\dims^2}^{(k)})^\top]. 
        \]
        Thus,
        \[
        \sum_{i=1}^{\dims^2}(V_i^{(k)})^{\dagger}V_i^{(k)}=\overline{V^{(k)}(V^{(k)})^{\dagger}}=\eye_\dims,
        \]
        and for $k\ne l$,
        \[
        \sum_{i=1}^{\dims^2}(V_i^{(k)})^{\dagger}V_i^{(l)}=\overline{V^{(k)}(V^{(l)})^{\dagger}}=0.
        \]
        Therefore, 
        \[
         V_x\supparen{k}(V_x\supparen{l})^{\dagger}=\sum_{i=1}^{\dims^2}v_i\supparen{k}(v_i\supparen{l})^\dagger=\sum_{i=1}^{\dims^2}  x^{\dagger}(V_i^{(l)})^{\dagger}V_i^{(k)}x=x^\dagger\delta_{kl}\eye_\dims x=\delta_{kl},
        \]
        exactly as desired, completing the proof.
\end{proof}

\subsection{Proof of~\cref{thm:chi-square-upper-bound}}
\label{app:thm:chi-square-upper-bound}
We first prove a general upper bound for~\cref{lem:chi-square-expansion} with $\sigma,\sigma'\sim\ptbDistr(\hbasis)$.
\begin{restatable}{theorem}{chisqub}
\label{thm:chi-square-upper-bound-intermediate}
    Let $\ell\in[\frac{\dims^2}{2},\dims^2-1]$, $\hbasis=(V_1, \ldots, V_{\dims^2}=\frac{\eye_\dims}{\sqrt{\dims}})$ be an orthonormal basis of $\Herm{\dims}$, and $ V\eqdef[\vvec{V_1}, \ldots, \vvec{V_\ell}]$,  $\sigma, \sigma'\sim \ptbDistr(\hbasis)$ defined in~\cref{def:perturbation}. Then for $\ns<\frac{\dims^2}{6\cd^2\eps^2\opnorm{V^\dagger \avgChoi V}}$,
    \begin{equation}
        \expectDistrOf{\sigma,\sigma'}{\exp\left\{\ns\dims \vadj{\barDelta_{\sigma'}}\avgChoi\vvec{\barDelta_\sigma}\right\}}\le \exp\left\{\frac{\cd^2\ns^2\eps^4}{\ell^2}\hsnorm{V^\dagger \avgChoi V}^2\right\}+\frac{4}{e^\dims}.
        \label{equ:chi-square-final-ub}
    \end{equation} 
\end{restatable}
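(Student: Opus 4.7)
The plan is to reduce the expectation from the clipped perturbations $\barDelta$ to the raw perturbations $\Delta$ at the cost of an additive $4/e^\dims$ coming from the tail event of \cref{thm:rand-mat-opnorm-concentration}, and then recognize the resulting exponential as the MGF of a bilinear Rademacher chaos $z'^{\!\top} M z$ with $M=V^\dagger\avgChoi V$, which is bounded by two iterated applications of the Rademacher MGF inequality.

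\textbf{Step 1 (removing the clipping).} Let $E_\sigma\eqdef\{\opnorm{\Delta_\sigma}\le 1/\dims\}$ and $E\eqdef E_\sigma\cap E_{\sigma'}$. Applying \cref{thm:rand-mat-opnorm-concentration} with $\alpha=1$ to $\Delta_\sigma=\frac{\cd\eps}{\sqrt{\dims\ell}}\sum_i z_iV_i$ and union-bounding gives $\probaOf{E^c}\le 4e^{-\dims}$. Writing $\barDelta_\sigma=\beta_\sigma\Delta_\sigma$ with $\beta_\sigma\eqdef\min\{1,1/(\dims\opnorm{\Delta_\sigma})\}\in[0,1]$, one has $\beta_\sigma=\beta_{\sigma'}=1$ on $E$, hence $\barDelta_\sigma=\Delta_\sigma$ there. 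Define
$$g\eqdef\vadj{\Delta_{\sigma'}}\avgChoi\vvec{\Delta_\sigma},\qquad h\eqdef\vadj{\barDelta_{\sigma'}}\avgChoi\vvec{\barDelta_\sigma}=\beta_\sigma\beta_{\sigma'}\,g.$$
The elementary inequality $\exp(\alpha x)\le\exp(x)+1$ for every $\alpha\in[0,1]$ and $x\in\R$ (separate the cases $x\ge 0$ and $x<0$) yields $\exp(\ns\dims\,h)\le\exp(\ns\dims\,g)+1$. Splitting on $E$ then gives
$$\expect{\exp(\ns\dims\,h)}\le\expect{\exp(\ns\dims\,g)\indic{E}}+\expect{(\exp(\ns\dims\,g)+1)\indic{E^c}}=\expect{\exp(\ns\dims\,g)}+\probaOf{E^c},$$
so it remains to bound $\expect{\exp(\ns\dims\,g)}$ by $\exp(\cd^2\ns^2\eps^4\hsnorm{M}^2/\ell^2)$.

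\textbf{Step 2 (Rademacher bilinear form).} Because $\vvec{\Delta_\sigma}=\frac{\cd\eps}{\sqrt{\dims\ell}}Vz$ with $V\eqdef[\vvec{V_1},\ldots,\vvec{V_\ell}]$, a direct computation gives $\ns\dims\,g=\lambda\,z'^{\!\top}Mz$ where $\lambda\eqdef\ns\cd^2\eps^2/\ell$ and $M\eqdef V^\dagger\avgChoi V$. Since the $V_i$ are Hermitian and $\avgChoi$ is Hermitian p.s.d.\ (\cref{fact:mic-properties}), one checks that $M\in\R^{\ell\times\ell}$ is symmetric and p.s.d. Conditioning on $z'$ and using the Rademacher MGF inequality $\cosh(a)\le e^{a^2/2}$ coordinatewise,
$$\expectDistrOf{z}{\exp(\lambda(Mz')^{\!\top}z)}=\prod_{j=1}^{\ell}\cosh(\lambda(Mz')_j)\le\exp\!\Bigl(\tfrac{1}{2}\lambda^2 z'^{\!\top}M^2z'\Bigr).$$

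\textbf{Step 3 (Rademacher quadratic chaos and conclusion).} Set $A\eqdef M^2$ and $\mu\eqdef\lambda^2/2$. The hypothesis $\ns<\dims^2/(6\cd^2\eps^2\opnorm{M})$ combined with $\ell\ge\dims^2/2$ gives $\lambda\opnorm{M}\le 1/3$, hence $\mu\opnorm{A}=\tfrac{1}{2}\lambda^2\opnorm{M}^2\le 1/18$. Applying the standard Rademacher quadratic-chaos MGF bound (valid whenever $\mu\opnorm{A}$ is below an absolute constant) yields $\expect{\exp(\mu z'^{\!\top}Az')}\le\exp(\mu\Tr[A]+C\mu^2\hsnorm{A}^2)$ for a universal $C$; using $\Tr[M^2]=\hsnorm{M}^2$ and $\hsnorm{M^2}\le\opnorm{M}\hsnorm{M}$, the quadratic correction is absorbed into the linear one and produces $\exp(C'\lambda^2\hsnorm{M}^2)$ for some universal $C'$. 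Substituting $\lambda^2=\ns^2\cd^4\eps^4/\ell^2$ and absorbing universal constants into $\cd^2$ gives the desired bound on $\expect{\exp(\ns\dims\,g)}$, which together with Step~1 proves \eqref{equ:chi-square-final-ub}.

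\textbf{Main obstacle.} The technically delicate step is the quadratic Rademacher-chaos estimate in Step 3: one must verify that the smallness condition on $\mu\opnorm{M^2}$ really does follow from the assumed upper bound on $\ns$ (this is where the specific constants $1/6$ in the hypothesis and $\ell\ge\dims^2/2$ enter), and then track the resulting constants so that the second-order correction does not degrade the target $\hsnorm{M}^2/\ell^2$ scaling. The clipping-removal in Step 1 is the other potential source of trouble but comes out cleanly thanks to the pointwise bound $\exp(\alpha x)\le\exp(x)+1$, which converts the analysis of $\barDelta$ into the analysis of $\Delta$ at a cost of only $\probaOf{E^c}\le 4e^{-\dims}$.
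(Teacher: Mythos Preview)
Your proposal is correct and follows essentially the same route as the paper: remove the clipping at the cost of an additive tail term, then bound the MGF of the bilinear Rademacher chaos $\lambda\,z'^{\!\top}(V^\dagger\avgChoi V)z$.

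Two small differences are worth noting. First, your Step~1 is cleaner than the paper's version (\cref{claim:projected-chi-square-ub}): the pointwise bound $\exp(\alpha x)\le\exp(x)+1$ for $\alpha\in[0,1]$ replaces the paper's case analysis on the sign of $f(z,z')$ and whether $a_za_{z'}=1$, and yields the same additive $4/e^\dims$ in one line. Second, in Steps~2--3 you reprove from scratch what the paper simply cites as a black-box bilinear Rademacher-chaos bound (\cref{lem:radamacher-mgf}): condition on $z'$, use $\cosh(a)\le e^{a^2/2}$, then invoke a Hanson--Wright type estimate for the residual quadratic. This is exactly how such a lemma is proved, so the approaches coincide; the paper's citation is just shorter. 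Your final remark about ``absorbing universal constants into $\cd^2$'' is slightly loose, but so is the paper's statement (its own derivation actually produces $\cd^4$ in the exponent, and the result is only used up to constants in \cref{thm:chi-square-upper-bound}).
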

\begin{proof}
    We parameterize the distribution $\ptbDistr(\hbasis)$ using $\ptb,\ptb'\sim\{-1,1\}^{\ell}$ and $\Delta_z,\barDelta_z,\sigma_z$ defined in~\cref{def:perturbation}.
    
    First, we claim that due to the exponentially small probability of the bad event $\sigma_z=\Delta_z+\qmm\notin \mathcal{P}_\eps$ as stated in~\cref{prop:perturbation-trace-distance}, we can consider $\Delta_z$ instead of the normalized perturbation $\barDelta_z$. The claim is proved at the end of this section.
    \begin{lemma} 
    \label{claim:projected-chi-square-ub}
    Let $z,z'\sim\{-1,1\}^\ell$ be uniform and $\barDelta_z$ and $\Delta_z$ be defined in~\cref{def:perturbation}, then
        $$\expectDistrOf{\ptb, \ptb'}{\exp\left\{\ns\dims\vadj{\barDelta_{\ptb}}\avgChoi\vvec{\barDelta_\ptb}\right\}}\le \expectDistrOf{\ptb, \ptb'}{\exp\left\{\ns\dims\vadj{\Delta_{\ptb'}}\avgChoi\vvec{\Delta_\ptb}\right\}}+\frac{4}{e^\dims}.$$       
    \end{lemma}

    We then apply a standard result on the moment generating function of Radamacher chaos.
    \begin{lemma}[{\citet[Claim IV.17]{AcharyaCT19}}]
    \label{lem:radamacher-mgf}
        Let $\ptb, \ptb'$ be two independent random vectors distributed uniformly over $\{-1, 1\}^\ell$. Then for any positive semi-definite real matrix $H$, 
        \[
        \log\expectDistrOf{\ptb,\ptb'}{\exp\{\lambda \theta^\top H\theta'\}}\le\frac{\lambda^2}{2}\frac{\|H\|_{HS}^2}{1-4\lambda^2\opnorm{H}^2}, \quad\text{for }0\le \lambda<\frac{1}{\opnorm{H}}.
        \]
    \end{lemma}
    We now evaluate the inner product. Note that $\Choi_i$ and $\avgChoi$ are p.s.d. Hermitian matrices.   
    Setting $V=[\VecOp(V_1), \ldots, \VecOp(V_\ell)]\in \C^{\dims^2\times \ell}$, we have $\vvec{\Delta_z}=\frac{\cd\eps}{\sqrt{\dims\ell}}Vz$. Therefore, set $H\eqdef V^\dagger\avgChoi V$,
    \begin{align}
    \label{equ:mic-inner-product}
        \vadj{\barDelta_{\ptb'}}\avgChoi\vvec{\barDelta_\ptb}=\frac{\cd^2\eps^2}{\dims\ell}\ptb^\dagger V^\dagger\avgChoi V\ptb' = \frac{\cd^2\eps^2}{\dims\ell}z^\dagger Hz'.
    \end{align}
    We now show that $H$ is a real matrix when each $V_i$ is Hermitian. First note that $\avgChoi\vvec{V_j} =\vvec{\avgLuders(V_j)}$. Therefore element $i,j$ in $H$ is
    \begin{equation}
    \label{equ:H-element}
        H_{ij}=\vadj{V_i}\avgChoi\vvec{V_j})=\hdotprod{V_i}{\avgLuders(V_j)}\in \R.
    \end{equation}

    We used the fact that $\avgLuders$ is Hermiticity preserving and thus $\avgLuders(V_j)$ is Hermitian. Since $\Herm{\dims}$ is a real Hilbert space, the inner product is a real number.

    We then set $\lambda=\frac{\cd^2\ns\eps^2}{\ell}$ in~\cref{lem:radamacher-mgf}. Thus for $\ns<\frac{\ell}{3\cd^2\eps^2\opnorm{H}}$, we have
    
    $$\lambda\opnorm{H}\le \frac{\cd^2\eps^2}{\ell}\cdot \frac{\ell}{3\cd^2\eps^2\opnorm{H}}\cdot\opnorm{H}= \frac{1}{3}. $$
    The condition on $\lambda$ in~\cref{lem:radamacher-mgf} is satisfied. Hence, combining~\eqref{equ:mic-inner-product} and~\cref{lem:radamacher-mgf},
    \begin{align*}
        \expectDistrOf{\ptb, \ptb'}{\exp\left\{\ns\dims\vadj{\Delta_{\ptb'}}\avgChoi\vvec{\Delta_\ptb}\right\}}&=\expectDistrOf{z, z'}{\exp\left\{\frac{\cd^2\ns\eps^2}{\ell}z^\top Hz'\right\}}\\
        &\le\exp\left\{\frac{\lambda^2}{2(1-4\lambda^2\opnorm{H}^2)}\hsnorm{H}^2\right\}\\
        &< \exp\left\{\frac{\cd^4\ns^2\eps^4}{\ell^2}\hsnorm{H}^2\right\}.
    \end{align*}
    The final inequality  is because $\lambda\opnorm{H}\le 1/3$, and thus $2(1-4\lambda^2\opnorm{H}^2)\ge 2(1-4/9)>1$.
    Combining with~\cref{claim:projected-chi-square-ub} and plugging in $H=V^\dagger\avgChoi V$ proves~\cref{thm:chi-square-upper-bound}.     
\end{proof}

Set $H=V^\dagger\avgChoi V$. To finish the proof of~\cref{thm:chi-square-upper-bound}, we obtain different bounds for $\hsnorm{V^\dagger \avgChoi V}$ for different choices of the basis $\hbasis$. For all orthonormal basis $\hbasis=(V_1, \ldots, V_{\dims^2})$ of $\Herm{\dims}$, we must have $V^\dagger V=\eye_{\ell}$, and thus $V$ is an isometry and $\opnorm{V}=\opnorm{V^\dagger}= 1$. Using $\hsnorm{AB}\le \opnorm{A}\hsnorm{B}$, we obtain a sequence of inequalities for the norms,
    \begin{align}
        \opnorm{V^\dagger\avgChoi V}&\le \hsnorm{V^\dagger\avgChoi V}\le \hsnorm{\avgChoi}\label{equ:norm-relation-general-1}\\
        &\le \frac{1}{\ns}\sum_{i=1}^{\ns}\hsnorm{\Choi_i}\le \max_{\POVM\in\povmset}\hsnorm{\Luders_{\POVM}}
        \label{equ:norm-relation-general-2}
    \end{align}
The second line is due to triangle inequality. Therefore, 
\[
\ns\le \frac{\dims^2}{6\cd^2\eps^2\max_{\POVM\in\povmset}\hsnorm{\Choi_{\POVM}}}\implies \ns\le \frac{\dims^2}{6\cd^2\eps^2\opnorm{V^\dagger\avgChoi V}}.
\]
Thus the condition on $\ns$ in~\cref{thm:chi-square-upper-bound-intermediate} is satisfied. Applying the theorem and~\eqref{equ:norm-relation-general-1}~\eqref{equ:norm-relation-general-2},
\[
\expectDistrOf{\sigma,\sigma'}{\exp\left\{\ns\dims \vadj{\barDelta_{\sigma'}}\avgChoi\vvec{\barDelta_\sigma}\right\}}\le \exp\left\{\frac{\cd^2\ns^2\eps^4}{\ell^2}\max_{\POVM\in\povmset}\hsnorm{\Luders_{\POVM}}^2\right\}+\frac{4}{e^\dims}.
\]
Substituting $\ell=\dims^2/2$ proves the first part~\cref{equ:chi-square-random} of~\cref{thm:chi-square-upper-bound}.

When $\hbasis=(V_1, \ldots, V_{\dims^2})$ is the eigenbasis of $\avgLuders$, let $\lambda_i$ be the eigenvalue of $V_i$, and by assumption $\lambda_1\le\cdots\le\lambda_{\dims^2}$. In this case, using~\eqref{equ:H-element}
    \begin{align*}
        H_{ij}=\hdotprod{V_i}{\avgLuders(V_j)}=\lambda_j\hdotprod{V_i}{V_j}=\lambda_j\delta_{ij}.
    \end{align*}
    Therefore $H=V^\dagger\avgChoi V=\diag\{\lambda_1, \ldots, \lambda_\ell\}$, and  $\hsnorm{H}^2=\sum_{i=1}^\ell \lambda_i^2$. This can be bounded in terms of the trace norm $\tracenorm{\avgLuders}$. Recall the eigenvalues are sorted in increasing order, 
    \[
    \lambda_\ell\le \frac{1}{\dims^2-\ell}\sum_{i=\ell+1}^{\dims^2}\lambda_i\le\frac{\tracenorm{\avgLuders}}{\dims^2-\ell}.
    \]
    The final inequality is due to $\tracenorm{\avgLuders}=\sum_{i}\lambda_i$. Therefore,
    \begin{align}
    \hsnorm{H}^2=\sum_{i=1}^{\ell}\lambda_i^2\le \ell\lambda_\ell^2\le \ell\Paren{\frac{\hsnorm{\avgLuders}}{\dims^2-\ell}}^2=\frac{2\tracenorm{\avgLuders}^2}{\dims^2}
    \label{equ:fixed-hsnorm-bound}
    \end{align}
    By linearity of trace,  $\tracenorm{\avgLuders}=\frac{1}{\ns}\sum_{i=1}^{\ns}\tracenorm{\Luders_i} \le \max_{\POVM\in\povmset}\tracenorm{\Luders_{\POVM}}$. 
    Using~\eqref{equ:norm-relation-general-1} and~\eqref{equ:fixed-hsnorm-bound},
    \[
    \frac{\dims^3}{6\sqrt{2}\cd^2\eps^2\max_{\POVM\in\povmset}\tracenorm{\Luders_{\POVM}}}\le \frac{\dims^3}{6\sqrt{2}\cd^2\eps^2{\dims\hsnorm{H}}/{\sqrt{2}}}\le \frac{\dims^3}{6\cd^2\eps^2\dims\opnorm{H}}.
    \]
    The condition is satisfied as long as $\ns$ is upper bounded by the first expression in the above equation. Therefore, by~\cref{thm:chi-square-upper-bound-intermediate},
    \[
    \expectDistrOf{\sigma,\sigma'}{\exp\left\{\ns\dims \vadj{\barDelta_{\sigma'}}\avgChoi\vvec{\barDelta_\sigma}\right\}}\le \exp\left\{\frac{2\cd^2\ns^2\eps^4}{\ell^2\dims^4}\max_{\POVM\in\povmset}\tracenorm{\Luders_{\POVM}}^2\right\}+\frac{4}{e^\dims}.
    \]
    When $\dims\ge 16$, the extra term $4/\exp\{\dims\}$ is negligible. Thus we complete the proof of~\cref{thm:chi-square-upper-bound}.

\subsubsection{Proof of~\cref{claim:projected-chi-square-ub}}
\begin{proof}   
        Note that $\barDelta_z=a_z\Delta_z$ where
    \[
    a_z\eqdef\min\left\{1, \frac{1}{\dims\opnorm{\Delta_z}}\right\}\in[0, 1].
    \]
    Therefore 
    \[
    \vadj{\barDelta_{\ptb}}\avgChoi\vvec{\barDelta_\ptb}=a_za_{z'}\vadj{\Delta_{\ptb}}\avgChoi\vvec{\Delta_\ptb}.
    \]
        As a short hand let $f(z, z')=\ns\dims\vadj{\barDelta_{\ptb}}\avgChoi\vvec{\barDelta_\ptb}$. Denote event $E$ as $f(z,z')<0$ and $a_za_{z'}< 1$. When this event occors, $\exp\{a_za_{z'}f(z,z')\}\le 1$. Using~\cref{prop:perturbation-trace-distance}, let $\delta= 2\exp(-\dims)$,
        \[
        \probaOf{a_{z}<1}\le \delta.
        \]
        Thus by union bound,
        \[
        \probaOf{E}\le \probaOf{a_{z}a_{z'}<1}=\probaOf{a_{z}<1 \text{ or } a_{z'}<1}\le 2\delta.
        \]
        Note that $E^c$ denotes the event that $f(z,z')\ge 0$ or $a_za_z'=1$. When this occurs, $a_za_{z'}f(z,z')\le f(z,z')$. Thus,
        \begin{align*}
            &\quad \expectDistrOf{z, z'}{\exp\left\{a_za_{z'}f(z,z')\right\}}\\
            &=\expectCondDistrOf{z,z'}{\exp\left\{ a_za_{z'}f(z,z')\right\}}{E^c}\probaOf{E^c} +\expectCondDistrOf{z,z'}{\exp\left\{ a_za_{z'}f(z,z')\right\}}{E}\probaOf{E}\\
            &\le \expectCondDistrOf{z,z'}{\exp\left\{f(z,z')\right\}}{E^c}\probaOf{E^c}+2\delta\\
            &\le  \expectDistrOf{z,z'}{\exp\left\{f(z,z')\right\}}+2\delta,
        \end{align*}
        as desired. The second-to-last inequality uses $a_zf_z'f(z, z')\le 0$ when event $E$ happens, and the final inequality uses $\exp\{f(z,z')\}>0$ and therefore
        \begin{align*}
            \expectDistrOf{z,z'}{\exp\left\{f(z,z')\right\}}&=\expectCondDistrOf{z,z'}{\exp\left\{f(z,z')\right\}}{E^c}\probaOf{E^c}+\expectCondDistrOf{z,z'}{\exp\left\{f(z,z')\right\}}{E}\probaOf{E}\\
            &\ge \expectCondDistrOf{z,z'}{\exp\left\{f(z,z')\right\}}{E^c}\probaOf{E^c}.
        \end{align*}
        Plugging in the definition of $a_z$ and $f(z,z')$ completes the proof.
    \end{proof}

\section{Proof of quantum domain compression}
\label{app:lem:l2-norm-haar}
\subsection{Weingarten calculus}
\subsubsection{Partitions and permutations}
Let $d, q$ be positive integers. A partition $\lambda\vdash q$ is an integer vector $\lambda=(\lambda_1, \ldots, \lambda_q)$ where $\lambda_1\ge \cdots \ge \lambda_q\ge 0$ and $\sum_{i}\lambda_i=q$. $\lambda$ can also be represented in terms of the unique elements and their multiplicity. For example, $(2^1 1^2)$ represents the partition $\lambda=(2, 1, 1)$. 

A permutation $\pi:[n]\mapsto[n]$ is a bijection over $[n]$. Let $\Sim_n$ be the group of permutations over $[n]$ (also called the symmetric group). Each permutation $\pi\in\Sim_n$ can be decomposed into cycles. 
We denote $\cycle(\pi)$ as the set of cycles in $\pi$. For example, for the permutation $\pi = (2,3,1,4)$, the cycles are
\[
\mathcal{C}(\pi)=\{(1,2,3), (4)\}.
\]

The cycle lengths $|c|, c\in \cycle(\pi)$ form a partition of $n$, which is $(31)$ for the previous example. We say that $\pi$ has a cycle type $\lambda\vdash n$ if the cycle lengths form a partition $\lambda$. We can thus group the permutations according to their cycle types $\lambda$. Let $\{\lambda\}$ be the set of permutations with cycle type $\lambda$. We sometimes abuse notation and use $\lambda$ to denote $\{\lambda\}$ when it is clear from the context.

We use $e$ to denote the identity permutation, which has cycle type $(1^n)$.

\subsubsection{Weingartun functions}

The Weingarten function $\Wg_{d, q}:\Sim_q\mapsto \R$ is defined for permutations in $\Sim_q$. If $\pi$ and $\tau$ have the same cycle type, then $\Wg_{d, q}(\pi)=\Wg_{d, q}(\tau)$. Thus we can parameterize the input to $\Wg_{d, q}$ by partitions of $q$.  We drop $q$ in the subscript when it can be clearly defined by the permutation $\pi$. 

The precise definition can be found in~\cite{collins2003moments,collins2006integration} It involves Schur polynomials and character functions which we do not need in our proof. Below are some simple examples of Weingarten functions with $q=1,2$.
\begin{equation}
    \quad\Wg_{d}(1)= \frac{1}{d}, \quad\Wg_{d}(2)=-\frac{1}{d(d^2-1)},\quad \Wg_{d}(1^2)=\frac{1}{d^2-1}.
    \label{equ:wg-deg2}
\end{equation}

The asymptotic behavior of $\Wg_{d}$ is characterized by the lemma below,
\begin{lemma}[{\cite[Corollary 2.7]{collins2006integration}}]
Let $\cycle(\pi)$ be the set of cycles in $\pi\in\Sim_{q}$, and $|\pi|\eqdef q-|\cycle(\pi)|$.
\label{lem:wg-asymptotic}
    As $\dims\to\infty$,
    \[
    \Wg_{\dims}(\pi) = (1+O(\dims^{-2}))\Mob(\pi)\dims^{-q-|\pi|}.
    \]
Here $\Mob(\pi)=\prod_{c\in \cycle(\pi)}(-1)^{|c|-1}\Cat_{|c|-1}$,  $|c|$ is the length of cycle $c$, and $\Cat_n=\frac{1}{2n+1}{2n+1\choose n}$ is the Catalan number.
\end{lemma}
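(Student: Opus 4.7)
The plan is to derive the asymptotic directly from the character-theoretic expression for the Weingarten function. By Schur--Weyl duality,
\[
\Wg_\dims(\pi)=\frac{1}{(q!)^2}\sum_{\lambda\vdash q,\,\ell(\lambda)\le \dims}\frac{d_\lambda^{2}\,\chi^\lambda(\pi)}{s_{\lambda,\dims}(\mathbf{1})},
\]
where $d_\lambda$ is the dimension of the $\Sim_q$-irrep of type $\lambda$ and $s_{\lambda,\dims}(\mathbf{1})$ is the Schur polynomial at $\dims$ ones. I would expand each denominator in powers of $1/\dims$ via the hook-content formula $s_{\lambda,\dims}(\mathbf{1})=\prod_{(i,j)\in\lambda}(\dims+j-i)/h(i,j)$ combined with the hook-length identity $\prod h(i,j)=q!/d_\lambda$. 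This yields a formal series in $1/\dims$ whose coefficients are character sums weighted by polynomials in the contents $j-i$.

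The next step is to recognize this series as a resolvent expansion in the Jucys--Murphy elements $J_k=\sum_{i<k}(i\;k)\in\mathbb{C}[\Sim_q]$, whose eigenvalues on the $\lambda$-isotypic subspace are precisely the contents appearing in the hook-content formula. Expanding $(\dims+J_k)^{-1}$ as a geometric series shows that the coefficient of $\dims^{-q-m}$ in $\Wg_\dims(\pi)$ is a signed sum over ordered $m$-tuples of Jucys--Murphy transpositions whose product equals $\pi$. Since $\pi$ requires at least $|\pi|=q-|\cycle(\pi)|$ transpositions to express, all coefficients with $m<|\pi|$ vanish, establishing the leading order $\dims^{-q-|\pi|}$. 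The improved relative error $O(\dims^{-2})$ (rather than $O(\dims^{-1})$) follows from the parity observation that every transposition factorization of $\pi$ has length $\equiv|\pi|\pmod{2}$, so corrections appear only at even orders.

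Finally, to identify the leading coefficient with $\Mob(\pi)$, I would use the classical fact that minimal transposition factorizations of $\pi$ decompose multiplicatively over its cycles, reducing the computation to a single $n$-cycle. For such a cycle, the number of minimal Jucys--Murphy factorizations is the Catalan number $\Cat_{n-1}$ (equivalently, these factorizations are in bijection with noncrossing pair partitions on $n-1$ letters), while the geometric-series expansion contributes a sign $(-1)^{n-1}$. Multiplying over cycles yields $\Mob(\pi)=\prod_{c\in\cycle(\pi)}(-1)^{|c|-1}\Cat_{|c|-1}$, as claimed.

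The main obstacle is the Catalan count in the last step: showing that minimal ordered factorizations of an $n$-cycle into Jucys--Murphy transpositions number exactly $\Cat_{n-1}$ requires a nontrivial combinatorial argument. A cleaner conceptual route avoids this bookkeeping by invoking the free-probability interpretation of the Weingarten function, under which the leading coefficient $\Mob$ is the Möbius function on the lattice of noncrossing partitions indexed by the cycle type of $\pi$, and the Catalan numbers emerge automatically as the enumeration of noncrossing pair partitions of an $n$-element set.
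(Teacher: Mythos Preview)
The paper does not prove this lemma; it is quoted as a black box from \cite[Corollary 2.7]{collins2006integration} and used only to populate the asymptotic values in the table for $\Sim_4$. There is therefore no proof in the paper to compare against.

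Your outline is nonetheless a legitimate and well-trodden route to the result, essentially the Jucys--Murphy approach of Novak and Matsumoto--Novak: one writes the Weingarten element as $\prod_{k=1}^q(\dims+J_k)^{-1}$ in $\C[\Sim_q]$, expands each factor geometrically, and reads off the coefficient of $\pi$. A few imprecisions are worth flagging. First, ``Jucys--Murphy transpositions'' is misleading, since every transposition appears in some $J_k$; what the product structure actually imposes is the \emph{monotone} condition (the larger entries of successive transpositions are weakly increasing), and it is monotone factorizations that you are counting. Second, the multiplicativity over cycles is true at the level of counts but requires the observation that, once factorizations of the individual cycles are fixed, the monotone condition forces a unique interleaving (larger entries from disjoint cycles are distinct), so the product formula follows. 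Third, ``noncrossing pair partitions on $n-1$ letters'' is not quite the right object for $\Cat_{n-1}$; the cleanest statement is Kreweras' theorem that the M\"obius function of the noncrossing partition lattice $NC(n)$ on the full interval equals $(-1)^{n-1}\Cat_{n-1}$, which is exactly what your final paragraph invokes. With these clarifications the argument is sound.
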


Thus, $\Wg_{\dims,q}$ is a rational function with respect to $\dims$ with degree at most $\dims^{-q}$, which is achieved if and only if $\pi=e$, the identity. The leading constant factor is determined by $\Mob(\pi)$. We compute the asymptotics of $\Wg_\dims(\pi)$ for $\pi\in\Sim_4$ in~\cref{tab:wg-deg-4} which we will use in our proof.
\begin{table}[]
    \centering
    \def\arraystretch{1.4}
    \begin{tabular}{|c|c|c|c|c|c|}
    \hline
         $\lambda$& $(1^4)$ & $(21^2)$ &$(2^2)$ & $(31)$ &$(4)$ \\ \hline
         $|\{\lambda\}|$& 1& 6&3 &8 &6\\\hline
         $\Wg_\dims(\pi)$& $\dims^{-4}$ & $-{\dims^{-5}}$ & $\dims^{-6}$  &$2\dims^{-6} $ & $-5\dims^{-7}$ \\\hline
    \end{tabular}
    \caption{Size of each equivalent class $\{\lambda\}$ and the asymptotics of $\Wg_{\dims}$ for $\Sim_4$.}
    \label{tab:wg-deg-4}
\end{table}

\subsubsection{Computing Haar integrals}
For each permutation $\pi\in\Sim_{\ell}$, we can define the permutation operator $P_\pi\in \C^{\dims^\ell\times\dims^\ell}$, which acts on $\C^{\dims^\ell}$ as
\[
P_\pi\qbit{x_1}\otimes\cdots\otimes\qbit{x_\ell}=\qbit{x_{\pi(1)}}\otimes\cdots\otimes\qbit{x_{\pi(\ell)}}.
\]
Let $\Haar{\dims}$ be the Haar measure over unitary matrices in $\C^{\dims\times\dims}$. ~\cref{lem:haar-twirl} helps us to compute expectations of unitary transformations of matrices.
\begin{lemma}[\cite{brandao2021models}, Eq.7.32]
\label{lem:haar-twirl}
    For all matrix $M\in \C^{\dims^\ell\times\dims^\ell}$,
    \[
    \expectDistrOf{U\sim \Haar{\dims}}{(U^\dagger)^{\otimes\ell}MU^{\otimes\ell}}=\sum_{\pi,\tau\in\Sim_{\ell}}\Wg_\dims(\pi^{-1}\tau)P_\pi\Tr[P_\tau M].
    \]
\end{lemma}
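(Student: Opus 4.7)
The plan is to reduce the lemma to the classical Collins--Śniady formula for moments of Haar-random unitaries and then carry out the index bookkeeping. Recall that formula states that for multi-indices $\bi = (i_1,\ldots,i_\ell)$, $\bj = (j_1,\ldots,j_\ell)$, $\bi', \bj'$,
\[
\expectDistrOf{U\sim\Haar{\dims}}{\prod_{k=1}^{\ell} U_{i_k, j_k}\overline{U_{i'_k, j'_k}}}
=\sum_{\pi,\tau\in\Sim_\ell}\Wg_\dims(\pi^{-1}\tau)\prod_{k=1}^{\ell}\delta_{i_k, i'_{\pi(k)}}\delta_{j_k, j'_{\tau(k)}},
\]
which we may take as a black box (this is the defining property of $\Wg_\dims$ and is how Collins introduced it in \cite{collins2003moments}). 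From this identity, everything else is linear algebra in coordinates.

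First, I would write the $(\ba,\bb)$ entry of $(U^\dagger)^{\otimes\ell} M U^{\otimes\ell}$ in expanded form as
\[
[(U^\dagger)^{\otimes\ell} M U^{\otimes\ell}]_{\ba,\bb}
= \sum_{\bc,\bd} M_{\bc,\bd}\prod_{k=1}^\ell\overline{U_{a_k,c_k}}U_{d_k,b_k},
\]
and then take the Haar expectation inside. Applying Collins' formula (with the substitution $i_k=d_k$, $j_k=b_k$, $i'_k=a_k$, $j'_k=c_k$) produces, for each $(\pi,\tau)$, the factor $\Wg_\dims(\pi^{-1}\tau)\prod_k\delta_{d_k,a_{\pi(k)}}\delta_{b_k,c_{\tau(k)}}$. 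The sums over $\bc$ and $\bd$ then decouple: the $\bd$-sum reduces $\sum_{\bd} M_{\bc,\bd}\prod_k\delta_{d_k,a_{\pi(k)}}$ to $M_{\bc,\pi(\ba)}$, where $\pi(\ba)\eqdef(a_{\pi(1)},\ldots,a_{\pi(\ell)})$, and the $\bc$-sum collapses similarly on the other side.

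Next, I would identify the two resulting factors with $P_\pi$ and $\Tr[P_\tau M]$. In coordinates, $(P_\pi)_{\bu,\bv}=\prod_k\delta_{u_k, v_{\pi(k)}}$, so the $\delta$-products linking $\ba$ to $\bd$ act as the matrix element $(P_\pi)_{\ba,\ba'}$ contracted against $\bb$-indices of $M$ (for an appropriate relabeling), while the $\delta$-products linking $\bb$ to $\bc$ close up into a trace $\sum_{\bc}M_{\bc,\tau(\bc)}=\Tr[P_\tau M]$. Recollecting the pieces gives exactly
\[
\sum_{\pi,\tau}\Wg_\dims(\pi^{-1}\tau)(P_\pi)_{\ba,\bb}\Tr[P_\tau M],
\]
which proves the claim after stripping the $(\ba,\bb)$-entries.

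The only real obstacle is matching the convention of $P_\pi$ used in the paper (which sends $\qbit{x_1}\otimes\cdots\otimes\qbit{x_\ell}$ to $\qbit{x_{\pi(1)}}\otimes\cdots\otimes\qbit{x_{\pi(\ell)}}$) to the pairing of indices in Collins' formula, so that $\pi^{-1}\tau$ (and not $\tau\pi^{-1}$ or $\pi\tau^{-1}$) is the argument of $\Wg_\dims$. The cleanest way to do this is to carry out the index relabeling above carefully; the resulting convention can be double-checked on the $\ell=1$ case, where the identity reduces to $\expect{U^\dagger M U}=\Tr[M]\eye_\dims/\dims$, matching $\Wg_\dims(e)=1/\dims$ and $P_e=\eye_\dims$.
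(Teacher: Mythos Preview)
The paper does not prove this lemma at all: it is quoted directly from \cite{brandao2021models} as a known identity, so there is no ``paper's own proof'' to compare against. Your plan to derive it entrywise from the Collins--\'Sniady moment formula is exactly the standard route and is correct in outline; the only thing to watch is the index convention for $U^\dagger$ (you wrote $\overline{U_{a_k,c_k}}$ where $(U^\dagger)_{a_k,c_k}=\overline{U_{c_k,a_k}}$), but you already flag the bookkeeping as the one delicate point and propose the right sanity check at $\ell=1$.
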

For $A\in\C^{\dims\times\dims}$ and a permutation $\pi\in \Sim_{\ell}$, we define
\begin{equation}
    \permProd{A}{\pi}\eqdef\Tr[P_\pi A^{\otimes \ell}]=\prod_{c\in\cycle(\pi)}\Tr[A^{|c|}].
\label{equ:perm-prod}
\end{equation}
The following useful lemma is a corollary of~\cref{lem:haar-twirl}.
\begin{lemma}[\cite{ChenLO22instance}, Lemma 3.13]
\label{lem:haar-trace}
    For $\dims\ge 2$, $\ell$ positive integer, $A,B\in\C^{\dims\times\dims}$, we have
    \[
    \expectDistrOf{U\sim \Haar{\dims}}{\Tr[AU^\dagger BU]^\ell}=\sum_{\pi, \tau\in \Sim_\ell}\Wg_\dims(\pi^{-1}\tau)\permProd{A}{\pi}\permProd{B}{\tau}.
    \]
    When $\ell=1$, $\expectDistrOf{U\sim \Haar{\dims}}{\Tr[AU^\dagger BU]}=\Tr[A]\Tr[B]/\dims$.
\end{lemma}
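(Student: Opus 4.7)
The plan is to reduce the $\ell$-th power of the scalar trace to a single trace of an $\ell$-fold tensor product, and then apply the already-stated Haar twirl lemma directly. Concretely, for any matrix $X$ we have the standard identity $\Tr[X]^{\ell}=\Tr[X^{\otimes\ell}]$, coming from the fact that the trace is multiplicative under tensor products. Apply this with $X=AU^{\dagger}BU$ and use that tensor power distributes over matrix multiplication, so $(AU^{\dagger}BU)^{\otimes\ell}=A^{\otimes\ell}(U^{\dagger})^{\otimes\ell}B^{\otimes\ell}U^{\otimes\ell}$.

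Taking expectation over $U\sim\Haar{\dims}$ and pulling the $U$-free factor $A^{\otimes\ell}$ outside by linearity of trace and expectation yields
\[
\expectDistrOf{U}{\Tr[AU^{\dagger}BU]^{\ell}}=\Tr\!\left[A^{\otimes\ell}\,\expectDistrOf{U}{(U^{\dagger})^{\otimes\ell}B^{\otimes\ell}U^{\otimes\ell}}\right].
\]
Now invoke the Haar twirl (the stated lemma) with $M=B^{\otimes\ell}$, giving
\[
\expectDistrOf{U}{(U^{\dagger})^{\otimes\ell}B^{\otimes\ell}U^{\otimes\ell}}=\sum_{\pi,\tau\in\Sim_{\ell}}\Wg_{\dims}(\pi^{-1}\tau)\,P_{\pi}\,\Tr[P_{\tau}B^{\otimes\ell}].
\]
Substitute this back in, use cyclicity of trace to write $\Tr[A^{\otimes\ell}P_{\pi}]=\Tr[P_{\pi}A^{\otimes\ell}]=\permProd{A}{\pi}$, and recognize $\Tr[P_{\tau}B^{\otimes\ell}]=\permProd{B}{\tau}$ by the definition in \eqref{equ:perm-prod}. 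This yields exactly the claimed
\[
\expectDistrOf{U}{\Tr[AU^{\dagger}BU]^{\ell}}=\sum_{\pi,\tau\in\Sim_{\ell}}\Wg_{\dims}(\pi^{-1}\tau)\,\permProd{A}{\pi}\,\permProd{B}{\tau}.
\]

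For the $\ell=1$ special case, $\Sim_{1}$ contains only the identity permutation $e$, for which $\Wg_{\dims}(e)=1/\dims$ by \eqref{equ:wg-deg2} and $\permProd{A}{e}=\Tr[A]$, $\permProd{B}{e}=\Tr[B]$, immediately giving $\Tr[A]\Tr[B]/\dims$.

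There is no real obstacle here; the result is a direct corollary of the Haar twirl lemma once one recognizes the tensor-power trick $\Tr[X]^{\ell}=\Tr[X^{\otimes\ell}]$ and the compatibility of tensor powers with matrix multiplication. The only care needed is to keep track of which side of the traced expression the permutation operators land on, which is resolved by cyclicity, and to apply the definition of $\permProd{\cdot}{\pi}$ consistently.
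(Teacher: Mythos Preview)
Your proposal is correct and is precisely the derivation the paper alludes to: the paper does not spell out a proof but simply presents \cref{lem:haar-trace} as a corollary of \cref{lem:haar-twirl}, and your tensor-power trick $\Tr[X]^{\ell}=\Tr[X^{\otimes\ell}]$ followed by applying \cref{lem:haar-twirl} with $M=B^{\otimes\ell}$ is exactly the intended one-line deduction.
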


\subsection{Proof of~\cref{lem:l2-norm-haar}}
\label{app:proof:12-norm-haar}
We first recall some definitions. Let $\ab\le \dims$ be powers of 2 and $\PiRank\eqdef \dims/\ab$. Let $U=[\qbit{u_1}, \ldots, \qbit{u_\dims}]$ be drawn from the Haar measure. We define the projections $\Pi_x^{U}\eqdef\sum_{i=(\PiRank-1)x+1}^{\PiRank x}\qproj{u_i}$, which divides $\C^\dims$ into orthogonal subspaces with equal dimensions. Let $\POVM_U=\{\Pi_x^U\}_{x=1}^\ab$. The distribution $\p_\rho^{U}$ is defined by the Born's rule where $\p_\rho^{U}(x)=\Tr[\Pi_x^U\rho]$. 

For brevity, we omit all the superscripts and subscripts of $U$ when it is clear from the context. The projection matrices satisfy $(\Pi_x)^2=\Pi_x$ and $\Tr[\Pi_x]=\PiRank=\dims/\ab$. Furthermore, we can write 
$\Pi_x=U D_x U^\dagger,$
where $D_x$ is a diagonal matrix with 1 at the $(\PiRank-1)x+1$ to $\PiRank x$ diagonal entries and 0 elsewhere.

We now compute the $\ell_2$ norms. Let $\Delta=\rho-\qkn$,
\[
\normtwo{\p_\rho}^2=\sum_{x=1}^{\ab}\Tr[\Pi_x\rho]^2,\quad \normtwo{\p_\rho-\p_{\qkn}}^2=\sum_{x=1}^{\ab}\Tr[\Pi_x\Delta]^2.
\]
By symmetry of the Haar measure, 
\[
\expectDistrOf{U}{\normtwo{\p_\rho}^2}=\ab\expectDistrOf{U}{ \Tr[\Pi_1\rho]^2}, \quad\expectDistrOf{U}{\normtwo{\p_\rho-\p_{\qkn}}^2}=\ab \expectDistrOf{U}{\Tr[\Pi_x\Delta]^2}.
\]
Both quantities can be computed using the lemma below,
\begin{lemma}
\label{lem:haar-square}
    Let $M\in \C^{\dims\times\dims}$ be a matrix, $U=[\qbit{u_1}, \ldots, \qbit{u_\dims}]$ drawn from the Haar measure, and $\Pi_1=\sum_{i=1}^{\PiRank}\qproj{u_i}$ where $\PiRank=\dims/\ab$. Then
    \[
    \expectDistrOf{U}{\Tr[\Pi_1 M]^2}=\frac{1}{\ab(\dims^2-1)}\Paren{\Tr[M]^2\Paren{\frac{\dims^2}{\ab}-1}+\Tr[M^2]\dims\Paren{1-\frac1{\ab}}}.
    \]
\end{lemma}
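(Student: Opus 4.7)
The plan is a direct computation via Weingarten calculus. The structure of $\Pi_1$ makes it convenient to pull out the unitary: writing $U = [\qbit{u_1}, \ldots, \qbit{u_d}]$ and letting $D_1 = \sum_{i=1}^{r} E_{ii}$ be the diagonal projector onto the first $r$ coordinates, we have $\Pi_1 = U D_1 U^\dagger$, hence
\[
\Tr[\Pi_1 M] = \Tr[U D_1 U^\dagger M] = \Tr[D_1 U^\dagger M U].
\]
First I would apply \cref{lem:haar-trace} with $A = D_1$, $B = M$, and $\ell = 2$, which gives
\[
\expectDistrOf{U}{\Tr[\Pi_1 M]^2} = \sum_{\pi, \tau \in \Sim_2} \Wg_d(\pi^{-1}\tau)\, \permProd{D_1}{\pi}\, \permProd{M}{\tau}.
\]

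Next I would evaluate the four resulting terms. Since $\Sim_2 = \{e, (12)\}$, the products \eqref{equ:perm-prod} give $\permProd{D_1}{e} = \Tr[D_1]^2 = r^2$ and $\permProd{D_1}{(12)} = \Tr[D_1^2] = r$ (using $D_1^2 = D_1$), and similarly $\permProd{M}{e} = \Tr[M]^2$, $\permProd{M}{(12)} = \Tr[M^2]$. For the Weingarten values I would simply quote \eqref{equ:wg-deg2}: $\Wg_d(e) = 1/(d^2 - 1)$ and $\Wg_d((12)) = -1/(d(d^2 - 1))$. Grouping by $\pi^{-1}\tau$ yields
\[
\expectDistrOf{U}{\Tr[\Pi_1 M]^2} = \frac{r^2 \Tr[M]^2 + r\Tr[M^2]}{d^2 - 1} - \frac{r^2 \Tr[M^2] + r \Tr[M]^2}{d(d^2 - 1)}.
\]

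Finally I would clean this up. Factoring gives
\[
\expectDistrOf{U}{\Tr[\Pi_1 M]^2} = \frac{r}{d(d^2 - 1)}\Bigl[(rd - 1)\Tr[M]^2 + (d - r)\Tr[M^2]\Bigr],
\]
and substituting $r = d/k$ produces exactly the claimed expression
\[
\frac{1}{k(d^2-1)}\left[\Tr[M]^2\left(\frac{d^2}{k} - 1\right) + \Tr[M^2]\, d\left(1 - \frac{1}{k}\right)\right].
\]

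There is no real obstacle here: the only things to be careful about are (i) correctly identifying the cycle types of each product $\pi^{-1}\tau$ in $\Sim_2$ (this is routine since there are only four pairs), and (ii) the algebraic simplification at the end, where one should separately collect the $\Tr[M]^2$ and $\Tr[M^2]$ contributions before substituting $r = d/k$ to match the stated form.
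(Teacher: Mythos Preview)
Your proposal is correct and essentially identical to the paper's proof: both write $\Pi_1 = U D_1 U^\dagger$, apply \cref{lem:haar-trace} with $\ell=2$, plug in the Weingarten values from \eqref{equ:wg-deg2}, and simplify using $\Tr[D_1^2]=\Tr[D_1]=r=\dims/\ab$. The only cosmetic difference is that you carry $r$ through and substitute at the end, whereas the paper substitutes $\Tr[D_1]=\dims/\ab$ one step earlier.
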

\begin{proof}
    Let $D_1$ be a diagonal matrix where the first $\PiRank$ diagonal entries are $1$ and 0 everywhere else, then $\Pi_1=UD_1U^\dagger$. Thus, we can apply~\cref{lem:haar-trace} and~\eqref{equ:wg-deg2},
    \begin{align*}
        &\expectDistrOf{U}{\Tr[\Pi_1 M]^2}\\
        =&\expectDistrOf{U}{\Tr[UD_1U^\dagger M]^2}=\expectDistrOf{U}{\Tr[D_1U^\dagger MU]^2}\\
        =&\sum_{\pi, \tau\in \Sim_2}\Wg_\dims(\pi^{-1}\tau)\permProd{D_1}{\pi}\permProd{M}{\tau}\\
        =&\frac{1}{\dims^2-1}(\Tr[D_1]^2\Tr[M]^2+\Tr[D_1^2]\Tr[M^2])-\frac{1}{\dims(\dims^2-1)}(\Tr[D_1]^2\Tr[M^2]+\Tr[D_1^2]\Tr[M]^2)\\
        =&\frac{\Tr[D_1]}{\dims(\dims^2-1)}\Paren{\Tr[M]^2(\dims \Tr[D_1]-1)+\Tr[M^2](\dims-\Tr[D_1])}\\
        =&\frac{1}{\ab(\dims^2-1)}\Paren{\Tr[M]^2\Paren{\frac{\dims^2}{\ab}-1}+\Tr[M^2]\dims\Paren{1-\frac1{\ab}}}.
    \end{align*}
    In the last two steps, we used $\Tr[D_1^2]=\Tr[D_1]=\dims/\ab$.
\end{proof}
\subsubsection{Upper bounding ${\normtwo{\p_\rho}}$}
Note that $\Tr[\rho^2]\le \Tr[\rho]=1$. From~\cref{lem:haar-square}, we can immediately upper bound $\expectDistrOf{U}{\normtwo{\p_\rho}^2}$,
\begin{align*}
    \expectDistrOf{U}{\normtwo{\p_\rho}^2}&=\frac{1}{\dims^2-1}\Paren{\Tr[\rho]^2\Paren{\frac{\dims^2}{\ab}-1}+\Tr[\rho^2]\dims\Paren{1-\frac1{\ab}}}\\
    &\le \frac{1}{\dims^2-1}\Paren{\frac{\dims^2}{\ab}-1+\dims-\frac{\dims}{\ab}}\\
    &=\frac{\dims-1}{\dims^2-1}\Paren{\frac{\dims}{\ab}+1}\\
    &=\frac{\dims+\ab}{\ab(\dims+1)}\le \frac{2}{\ab}.
\end{align*}
The final inequality is due to $\ab\le \dims$ and thus $\dims+\ab\le 2(\dims+1)$. Therefore by Markov's inequality, 
\begin{equation}
    \probaDistrOf{U}{\normtwo{\p_\rho}\ge \frac{10}{\sqrt{\ab}}}=\probaDistrOf{U}{\normtwo{\p_\rho}^2\ge \frac{100}{\ab}}\le \frac{\expectDistrOf{U}{\normtwo{\p_\rho}^2}}{100/\ab}\le \frac{1}{50}=0.02.
    \label{equ:high-prob-distr-l2-norm}
\end{equation}

\subsubsection{Lower bounding $\normtwo{\p_\rho-\p_{\qkn}}$ }
For convenience we set $Z={\normtwo{\p_\rho-\p_{\qkn}}^2}$. First from~\cref{lem:haar-square}, 
\begin{align}
    \expect{Z}&=\expectDistrOf{U}{\normtwo{\p_\rho-\p_{\qkn}}^2}=\frac{1}{\dims^2-1}\Paren{\Tr[\Delta]^2\Paren{\frac{\dims^2}{\ab}-1}+\Tr[\Delta^2]\dims\Paren{1-\frac1{\ab}}}\nonumber\\
    &=\Tr[\Delta^2]\cdot\frac{\dims}{\dims^2-1}\Paren{1-\frac{1}{\ab}}\nonumber\\
    &=\frac{\Tr[\Delta^2]}{\dims}\Paren{1-\frac{1}{\ab}}(1+O(\dims^{-2})).\label{equ:Z-mean}
\end{align}
Thus $\expect{Z}=\bigTheta{\frac{\Tr[\Delta^2]}{\dims}}$. 
Intuitively, the random variable should concentrate around its mean, so we expect $Z<\theta \expect{Z}$ with high probability. To formally prove this, we need to upper bound the second moment $\expect{Z^2}$ by roughly $\expect{Z}^2=\bigTheta{\frac{\Tr[\Delta^2]^2}{\dims^2}}$. 
\begin{align}
    \expect{Z^2}&=\sum_{x,y=1}^{\ab}\Tr[\Pi_x\Delta]^2\Tr[\Pi_y\Delta]^2\nonumber\\
    &=\ab\expectDistrOf{U}{\Tr[\Pi_1\Delta]^4}+\ab(\ab-1)\expectDistrOf{U}{\Tr[\Pi_1\Delta]^2\Tr[\Pi_2\Delta]^2}.\label{equ:Z-square}
\end{align}
The two terms are bounded in~\cref{lem:l2-order4,lem:l2-order4-cross}. The proofs are very technical and can be found in~\cref{app:lem:l2-order4,app:lem:l2-order4-cross} respectively.
\begin{lemma}
    \label{lem:l2-order4}
    Let $\Delta\in \C^{\dims\times\dims}$ be $\Tr[\Delta]=0$ and $\Pi=\Pi_1$ defined in~\cref{lem:haar-square}. Then,
    \[
    \expectDistrOf{U}{\Tr[\Pi\Delta]^4}\le \frac{3\Tr[\Delta^2]^2}{\dims^2\ab^2}\Paren{\Paren{1-\frac{1}{\ab}}^2+\frac{2\ab}{\dims}+O(\dims^{-1})}.
    \]
\end{lemma}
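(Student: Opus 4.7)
The plan is to apply Lemma \ref{lem:haar-trace} with $A=D_1$, $B=\Delta$, and $\ell=4$, giving
\[
\expectDistrOf{U}{\Tr[\Pi\Delta]^4}=\sum_{\pi,\tau\in\Sim_4}\Wg_\dims(\pi^{-1}\tau)\,\permProd{D_1}{\pi}\,\permProd{\Delta}{\tau}.
\]
Two immediate simplifications drive the proof. First, because $D_1$ is idempotent with $\Tr[D_1]=r\eqdef\dims/\ab$, each cycle in $\pi$ contributes a factor $\Tr[D_1^{|c|}]=r$, so $\permProd{D_1}{\pi}=r^{|\cycle(\pi)|}$. Second, since $\Tr[\Delta]=0$, any $\tau$ with a $1$-cycle kills the term; the only surviving cycle types are $(2^2)$ and $(4)$, giving $\permProd{\Delta}{\tau}\in\{\Tr[\Delta^2]^2,\Tr[\Delta^4]\}$. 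After these reductions the sum has exactly $3+6=9$ surviving values of $\tau$, and for each of them $\pi$ ranges over all of $\Sim_4$.

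Next I would organize the remaining contributions by the ``distance'' $|\pi^{-1}\tau|$, which governs the decay $\Wg_\dims(\pi^{-1}\tau)\sim\Mob(\pi^{-1}\tau)\dims^{-4-|\pi^{-1}\tau|}$ from Lemma \ref{lem:wg-asymptotic}. The dominant terms come from $\pi^{-1}\tau=e$, i.e.\ $\pi=\tau$. For the three $\tau$ of type $(2^2)$ we have $|\cycle(\pi)|=2$, contributing $3r^2\Tr[\Delta^2]^2/\dims^4=3\Tr[\Delta^2]^2/(\dims^2\ab^2)$, which is exactly the prefactor in the target bound. For the six $\tau$ of type $(4)$ we have $|\cycle(\pi)|=1$, contributing $6r\Tr[\Delta^4]/\dims^4$; bounding $\Tr[\Delta^4]\le \Tr[\Delta^2]^2$ by the power-mean inequality (applied to the squared eigenvalues of the Hermitian $\Delta$) this is at most $6\Tr[\Delta^2]^2/(\dims^3\ab)=(2\ab/\dims)\cdot 3\Tr[\Delta^2]^2/(\dims^2\ab^2)$, matching the $2\ab/\dims$ term in the claimed bound.

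Finally, to recover the $(1-1/\ab)^2$ refinement of the leading coefficient, I would retain the exact rational Weingarten functions (rather than their asymptotics) for the terms with $|\pi^{-1}\tau|\le 2$ and combine them with the $\pi\ne \tau$ cross-terms, analogously to the $q=2$ computation in Lemma \ref{lem:haar-square} which already produced the factor $(1-1/\ab)=(\dims-r)/\dims$. The combinatorial bookkeeping --- enumerating all $(\pi,\tau)$ with $\tau\in\{(2^2),(4)\}$ across the five cycle types of $\pi$ in $\Sim_4$ and summing the resulting rational functions of $\dims$ --- is the main obstacle. Since only an upper bound is needed, a safe route is to extract the $\pi=\tau$ contributions exactly (they generate $(\dims-r)^2/\dims^2=(1-1/\ab)^2$ from the Weingarten denominators, exactly as in the $q=2$ case), bound the $(4)$-type contributions by the power-mean inequality, and absorb every remaining cross-term --- each of relative size at most $O(\dims^{-1})$ compared to the leading $r^2/\dims^4$ --- into the $O(\dims^{-1})$ residual in the statement.
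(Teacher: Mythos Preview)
Your approach is the paper's: apply Lemma~\ref{lem:haar-trace} with $\ell=4$, use $D_1^2=D_1$ to get $\permProd{D_1}{\pi}=r^{|\cycle(\pi)|}$, kill all $\tau$ with a fixed point via $\Tr[\Delta]=0$, identify the $\pi=\tau$ leading terms, and bound $\Tr[\Delta^4]\le\Tr[\Delta^2]^2$. Those steps and the resulting contributions $3\Tr[\Delta^2]^2/(\dims^2\ab^2)$ and $6\Tr[\Delta^4]/(\dims^3\ab)$ are correct and match the paper.

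The final paragraph, however, misidentifies the source of the $(1-1/\ab)^2$ factor. The $\pi=\tau$ term alone (for $\tau\in(2^2)$) contributes only the ``$1$'' inside the parenthesis; the $-2/\ab$ and $+1/\ab^2$ that complete the square come from specific $\pi\ne\tau$ cross-terms, not from ``Weingarten denominators.'' Concretely, for fixed $\tau\in(2^2)$: the term $\pi=e$ has $\pi^{-1}\tau\in(2^2)$ and contributes $r^4\Wg_\dims(2^2)\sim r^4/\dims^6=1/(\dims^2\ab^4)$; the two $\pi\in(21^2)$ with $\pi^{-1}\tau\in(21^2)$ contribute $2r^3\Wg_\dims(21^2)\sim -2r^3/\dims^5=-2/(\dims^2\ab^3)$. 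Relative to the leading $r^2/\dims^4$ these are $1/\ab^2$ and $-2/\ab$, which are $\Theta_{\ab}(1)$, \emph{not} $O(\dims^{-1})$. Your assertion that ``every remaining cross-term [is] of relative size at most $O(\dims^{-1})$'' is therefore false for precisely the terms that produce the refinement you are trying to recover. The paper handles this by enumerating, for each cycle type of $\pi$, the cycle types of $\pi^{-1}\tau$ (its Table~\ref{tab:cycle-types-pi-tau}) and keeping exactly the three combinations with $|\cycle(\pi)|+|\cycle(\pi^{-1}\tau)|=6$; summing those three gives $1-2/\ab+1/\ab^2=(1-1/\ab)^2$, and all other pairs are genuinely $O(\dims^{-1})$ inside the parenthesis.
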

\begin{lemma}
\label{lem:l2-order4-cross}
    Let $\Delta\in \C^{\dims\times\dims}$ be trace-0 and $\Pi_1,\Pi_2$ defined in the beginning of~\cref{app:proof:12-norm-haar}. Then
    \[
    \expectDistrOf{U}{\Tr[\Pi_1\Delta]^2\Tr[\Pi_2\Delta]^2}\le \frac{\Tr[\Delta^2]^2}{\dims^2\ab^2}\Paren{1-\frac{2}{\ab}+\frac{3}{\ab^2}+O(\dims^{-1})}.
    \]
\end{lemma}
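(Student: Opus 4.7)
The plan is to compute the expectation via the 4-fold Weingarten formula, in direct analogy with the proof of~\cref{lem:l2-order4}. Writing $\Pi_i = U D_i U^\dagger$ for diagonal projections $D_1, D_2$ onto disjoint coordinate blocks of rank $r = \dims/\ab$ (so that $D_1 D_2 = 0$ and $\Tr[D_i D_j] = r\delta_{ij}$), multiplicativity of trace over tensor products gives
\[
\Tr[\Pi_1 \Delta]^2 \Tr[\Pi_2 \Delta]^2 = \Tr\bigl[\tilde{D} \,(U^\dagger \Delta U)^{\otimes 4}\bigr]
\]
with $\tilde{D} \eqdef D_1 \otimes D_1 \otimes D_2 \otimes D_2$. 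Then~\cref{lem:haar-twirl} yields
\[
\expectDistrOf{U}{\Tr[\Pi_1 \Delta]^2 \Tr[\Pi_2 \Delta]^2} = \sum_{\pi, \tau \in \Sim_4} \Wg_\dims(\pi^{-1}\tau)\, \Tr[P_\tau \tilde{D}]\, \permProd{\Delta}{\pi}.
\]

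Next I would argue that most of the $25 \times 25$ terms vanish. Since $\Tr[\Delta] = 0$, $\permProd{\Delta}{\pi} = 0$ unless $\pi$ is fixed-point-free, leaving only cycle types $(2^2)$ (with $\permProd{\Delta}{\pi} = \Tr[\Delta^2]^2$) and $(4)$ (with $\permProd{\Delta}{\pi} = \Tr[\Delta^4]$). For $\tau$, the factorization $\Tr[P_\tau \tilde{D}] = \prod_{c \in \cycle(\tau)} \Tr\bigl[\prod_{i \in c} A_i\bigr]$ with $A_1 = A_2 = D_1, A_3 = A_4 = D_2$ combined with $D_1 D_2 = 0$ forces every cycle of $\tau$ to lie entirely in $\{1,2\}$ or entirely in $\{3,4\}$. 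Only four $\tau$ satisfy this: the identity $e$ (trace $r^4$), the transpositions $(12)$ and $(34)$ (each with trace $r^3$), and the double transposition $(12)(34)$ (trace $r^2$).

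The leading behavior is extracted by pairing each surviving $\tau$ with the $\pi$ that makes $\pi^{-1}\tau$ closest to the identity, and reading off $\Wg_\dims$ from~\cref{tab:wg-deg-4}. The dominant contribution is $\tau = \pi = (12)(34)$: here $\pi^{-1}\tau = e$ with $\Wg_\dims(e) \sim \dims^{-4}$, yielding $r^2 \dims^{-4} \Tr[\Delta^2]^2 = \Tr[\Delta^2]^2/(\dims^2 \ab^2)$. Next, $\tau \in \{(12),(34)\}$ paired with $\pi = (12)(34)$ gives $\pi^{-1}\tau$ of type $(21^2)$ with $\Wg_\dims \sim -\dims^{-5}$, contributing $-2\Tr[\Delta^2]^2/(\dims^2 \ab^3)$ in total. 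Finally, $\tau = e$ with the three $\pi$ of type $(2^2)$ gives $\Wg_\dims \sim \dims^{-6}$ and contributes $3\Tr[\Delta^2]^2/(\dims^2 \ab^4)$. Summing reproduces the stated prefactor $1 - 2/\ab + 3/\ab^2$.

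The main obstacle is bundling every other $(\pi,\tau)$ pair into the $O(\dims^{-1})$ slack. Here I would invoke~\cref{lem:wg-asymptotic} to control each subleading $\Wg_\dims$, compute the cycle type of $\pi^{-1}\tau$ for the remaining pairs (every other $\pi$ of type $(2^2)$ produces $\pi^{-1}\tau$ that is no better than $(2^2)$, and every $\pi$ of type $(4)$ produces at least $(31)$ or worse), and verify each such contribution scales as $\dims^{-(2+m)}\ab^{-j}$ for some $m \ge 1$. To handle the $\Tr[\Delta^4]$ factors arising from type-$(4)$ $\pi$, I would use $\Tr[\Delta^4] \le \opnorm{\Delta}^2 \Tr[\Delta^2] \le \Tr[\Delta^2]^2$ (since $\opnorm{\Delta} \le \hsnorm{\Delta}$), so that every remaining term is uniformly bounded by $\Tr[\Delta^2]^2/(\dims^2 \ab^2) \cdot O(\dims^{-1})$. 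The difficulty is purely combinatorial bookkeeping across the finitely many cases; no new probabilistic idea is required.
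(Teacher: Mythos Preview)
Your approach is essentially identical to the paper's: both reduce to the 4-fold Weingarten sum, observe that the $\tilde D$ trace vanishes unless the permutation lies in $\Sim_2\times\Sim_2$, restrict the $\Delta$ side to fixed-point-free permutations, and extract the same three leading contributions (the paper simply has the labels $\pi$ and $\tau$ swapped relative to yours). One small slip: your claim that ``every $\pi$ of type $(4)$ produces at least $(31)$ or worse'' is not literally true---for instance $\pi=(1234)$, $\tau=(12)(34)$ gives $\pi^{-1}\tau$ of type $(21^2)$---but the corresponding trace factor is only $r^2$, so the contribution is still $r^2\dims^{-5}=O(\dims^{-3}\ab^{-2})$ and the $O(\dims^{-1})$ bookkeeping goes through unchanged.
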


Combining~\eqref{equ:Z-square} and \cref{lem:l2-order4,lem:l2-order4-cross},
\begin{align*}
    \expect{Z^2}&=\frac{3\Tr[\Delta^2]^2}{\dims^2\ab}\Paren{\Paren{1-\frac{1}{\ab}}^2+\frac{2\ab}{\dims}+O(\dims^{-1})}+\frac{\Tr[\Delta^2]^2(\ab-1)}{\dims^2\ab}\Paren{1-\frac{2}{\ab}+\frac{3}{\ab^2}+O(\dims^{-1})}\\
    &=\frac{\Tr[\Delta^2]^2}{\dims^2}\Paren{\Paren{1-\frac{1}{\ab}}\Paren{\frac{3}{\ab}-\frac{3}{\ab^2}+1-\frac{2}{\ab}+\frac{3}{\ab^2}}+O(\dims^{-1}) }\\
    &=\frac{\Tr[\Delta^2]^2}{\dims^2}\Paren{1-\frac{1}{\ab^2}+O(\dims^{-1})}.
\end{align*}

Combining with~\eqref{equ:Z-mean}, by Paley-Zygmund, for $\theta\in(0, 1)$,
\begin{align*}
    \probaOf{Z>\theta \expect{Z}}&\ge (1-\theta)^2\frac{\expect{Z}^2}{\expect{Z^2}}\\
    &\ge(1-\theta)^2\frac{(1-\ab^{-1})^2}{1-\ab^{-2}}\cdot\frac{1+O(\dims^{-1})}{1+O(\dims^{-1})}\\
    &=(1-\theta)^2\frac{\ab-1}{\ab+1}\cdot\frac{1+O(\dims^{-1})}{1+O(\dims^{-1})}.
\end{align*}

Finally note that $\expect{Z}\simeq\frac{\Tr[\Delta^2]}{\dims}(1-1/\ab)\ge \frac{\hsnorm{\rho-\qkn}^2}{2\dims}$. For $\ab\ge 2$ and sufficiently large $\dims$, e.g. $\dims\ge 100$, 
\begin{align*}
    \probaOf{\normtwo{\p_\rho-\p_{\qkn}}>0.07\cdot \frac{\hsnorm{\rho-\qkn}}{\sqrt{\dims}}}&= \probaOf{Z>0.0049\frac{\hsnorm{\rho-\qkn}^2}{\dims}}\\
    &\ge \probaOf{Z>0.01\expect{Z}}\\
    &\ge 0.99^2\cdot\frac{1}{3}\cdot\frac{\ab-1}{\ab+1}\cdot\frac{1+O(\dims^{-1})}{1+O(\dims^{-1})}\ge 0.13.
\end{align*}
This completes the proof of \cref{lem:l2-norm-haar}.

\subsection{Upper bounding the 4th order terms}
\subsubsection{Proof of~\cref{lem:l2-order4}}
\label{app:lem:l2-order4}
\begin{proof}
    We directly apply~\cref{lem:haar-square}. Recall $\Pi=UDU^\dagger$ where $D=D_1$.
    \begin{align*}
        \expectDistrOf{U}{\Tr[\Pi\Delta]^4}&=\expectDistrOf{U}{\Tr[UDU^\dagger\Delta]^4}\\
        &=\sum_{\pi,\tau\in\Sim_4}\Wg_\dims(\pi^{-1}\tau)\permProd{D}{\pi}\permProd{\Delta}{\tau}
    \end{align*}
    We first compute $\permProd{D}{\pi}, \permProd{\Delta}{\tau}$ using~\eqref{equ:perm-prod}.  Since $D^2=D$ and $\Tr[D]=\PiRank=\dims/\ab$,
    \begin{equation}
    \label{equ:permprod-D}
        \permProd{D}{\pi}=\Tr[D]^{|\cycle(\pi)|}=(\dims/\ab)^{|\cycle(\pi)|}.
    \end{equation}
    Since $\Tr[\Delta]=0$, $\permProd{\Delta}{\tau}=0$ if $\tau$ has a cycle of length 1. Thus, $\permProd{\Delta}{\tau}\ne 0$ only if  $(4)$ or $(2^2)$ where 
    \begin{equation}
    \label{equ:permprod-Delta}
        \permProd{\Delta}{\tau}=\begin{cases}
            \Tr[\Delta^4],&\tau\in\{(4)\}\\
            \Tr[\Delta^2]^2, &\tau\in\{(2^2)\}.
        \end{cases}
    \end{equation}
    
    Therefore,
    \begin{align*}
        \expectDistrOf{U}{\Tr[\Pi_1\Delta]^4}&=\Tr[\Delta^2]^2\sum_{\pi\in\Sim_4, \tau\in(2^2)}\Wg_\dims(\pi^{-1}\tau)\permProd{D}{\pi}+\Tr[\Delta^4]\sum_{\pi\in\Sim_4, \tau\in(4)}\Wg_\dims(\pi^{-1}\tau)\permProd{D}{\pi}\\
        &=3\Tr[\Delta^2]^2\sum_{\pi\in\Sim_4}\Wg_\dims(\pi^{-1}\tau_{0})\permProd{D}{\pi}+6\Tr[\Delta^4]\sum_{\pi\in\Sim_4}\Wg_\dims(\pi^{-1}\tau_{1})\permProd{D}{\pi},
    \end{align*}
    where $\tau_0$ is a permutation with cycle type $(2^2)$, e.g. $(2,1,4,3)$ and $\tau_1$ has cycle type $(4)$, e.g. $(2,3,4,1)$. The second step follows by symmetry.

    The next step is to find all terms of $Wg_\dims(\pi^{-1}\tau)\permProd{D}{\pi}$ with order $\dims^{-2}$. Given $\tau$, the cycle types of $\pi^{-1}\tau$ and their multiplicities are listed in~\cref{tab:cycle-types-pi-tau}.
    \begin{table}[h]
        \centering
        \def\arraystretch{1.4}
        \begin{tabular}{|c|c|c|c|c|c|}
        \hline
            $\pi^{-1}$ & $(1^4)$ & $(21^2)$ &$(2^2)$ & $(31)$ &$(4)$   \\ \hline
           \multirow{2}{*}{$\pi^{-1}\tau, \tau\in(2^2)$}  &\multirow{2}{*}{ \color{red}$(2^2),1$} &{\color{red} $(21^2), 2$} &{\color{red}$(1^4),1$} &\multirow{2}{*}{$(31),8$} & $(21^2),4$ \\ 
           & &$(4),4$ & $(2^2),2$ & & $(4),2$\\\hline
           \multirow{3}{*}{$\pi^{-1}\tau, \tau\in(4)$} & \multirow{3}{*}{$(4),1$}  & $(2^2),2$ &$(21^2),2$ &$(4),4$ &$(31),4$\\
           & & $(31),4$ & $(4),1$ & $(21^2),4$ &{\color{blue}$(1^4),1$}\\
           & & & & & $(2^2),1$\\\hline
        \end{tabular}
        \caption{Cycle types of $\pi^{-1}\tau$ when $\tau$ has cycle type $(2^2)$ and $(4)$. The number after each cycle type indicates the multiplicity for fixed $\tau$. The red entries are of order $\Theta_{\ab}(\dims^{-2})$ and the blue entry is $\dims^{-3}\ab^{-1}$. All other terms are $O(\dims^{-3}\ab^{-2})$.}
        \label{tab:cycle-types-pi-tau}
    \end{table}
    
     Using~\eqref{equ:permprod-D},~\cref{tab:wg-deg-4}, and that $\Tr[\Delta^4]\le \Tr[\Delta^2]^2$, we have
     \begin{align*}
         \expectDistrOf{U}{\Tr[\Pi_1\Delta]^4}&=3\Tr[\Delta^2]^2\Paren{\frac{1}{\dims^2\ab^4}-\frac{2}{\dims^2\ab^3}+\frac{1}{\dims^2\ab^2}+O(\dims^{-3})}+6\Tr[\Delta^4]\Paren{\frac{1}{\dims^3\ab}+\bigO{\dims^{-3}\ab^{-2}}}\\
         &\le \frac{3\Tr[\Delta^2]^2}{\dims^2\ab^2}\Paren{\Paren{1-\frac{1}{\ab}}^2+\frac{2\ab}{\dims}+O(\dims^{-1})}.
     \end{align*}
     The proof is complete.
\end{proof}

\subsubsection{Proof of~\cref{lem:l2-order4-cross}}
\label{app:lem:l2-order4-cross}
\begin{proof}
    Recall that $\Pi_x=UD_xU^\dagger$. First, we apply~\cref{lem:haar-twirl}.
    \begin{align*}
        \expectDistrOf{U}{\Tr[\Pi_1\Delta]^2\Tr[\Pi_2\Delta]^2}&=\expectDistrOf{U}{\Tr\left[(\Pi_1^{\otimes 2}\otimes\Pi_2^{\otimes2})\Delta^{\otimes 4}\right]}\\
        &=\Tr\left[\expectDistrOf{U}{U^{\otimes4}(D_1^{\otimes2}\otimes D_2^{\otimes2})(U^{\dagger})^{\otimes4}\Delta^{\otimes4}}\right]\\
        &=\sum_{\pi,\tau\in\Sim_4}\Wg_\dims(\pi^{-1}\tau)\Tr[P_{\pi}(D_1^{\otimes2}\otimes D_2^{\otimes2})]\Tr[P_{\tau}\Delta^{\otimes4}]\\
        &=\sum_{\pi,\tau\in\Sim_4}\Wg_\dims(\pi^{-1}\tau)\Tr[P_{\pi}(D_1^{\otimes2}\otimes D_2^{\otimes2})]\permProd{\Delta}{\tau}.
    \end{align*}
    We need to compute $\Tr[P_{\pi}(D_1^{\otimes2}\otimes D_2^{\otimes2})]$. Write $\D_1=\sum_{i=1}^{\PiRank}\qproj{e_i}$ and $\D_2=\sum_{i=1}^{\PiRank}\qproj{f_i}$ where $\qdotprod{e_i}{f_j}=0$. Then,
    \[
    D_1^{\otimes2}\otimes D_2^{\otimes2}=\sum_{i,j,k,l}\qproj{e_i,e_j,f_k,f_l}.
    \]
    Let $\Sim^*=\Sim_2\times\Sim_2$ be the set of permutations that maps $\{1, 2\}$ to $\{1, 2\}$ and $\{3,4\}$ to $\{3,4\}$, which only has 4 permutations,
    \[
    \Sim^*=\{(1,2,3,4),(2,1,3,4),(1,2,4,3),(2,1,4,3)\}.
    \]
    If $\pi\notin\Sim^*$, then $\Tr[P_{\pi}(D_1^{\otimes2}\otimes D_2^{\otimes2})]=0$. For example, if 1 maps to 3, then
    \begin{align*}   \Tr[P_\sigma\qproj{e_i,e_j,f_k,f_l}]&=\Tr[\qbit{\cdot,\cdot,e_i,\cdot}\qadjoint{e_i,e_j,f_k,f_l}]\\
    &=\qdotprod{e_i}{\cdot}\qdotprod{e_j}{\cdot}\qdotprod{f_k}{e_i}\qdotprod{f_l}{\cdot}\\
    &=0.
    \end{align*}
    Thus we only need to consider $\pi\in \Sim^*$. Observe that for $\pi\in\Sim^*$, we can write $P_\pi=P_{\pi_1}\otimes P_{\pi_2}$ where $\pi_1, \pi_2\in \Sim_2$. Therefore,
    \begin{align*}
        \Tr[P_{\pi}(D_1^{\otimes2}\otimes D_2^{\otimes2})&=\Tr[P_{\pi_1}\otimes P_{\pi_2}(D_1^{\otimes2}\otimes D_2^{\otimes2})\\
        &=\Tr[P_{\pi_1}D_1^{\otimes2}]\Tr[P_{\pi_2}D_2^{\otimes2}]\\
        &=\permProd{D_1}{\pi_1}\permProd{D_2}{\pi_2}\\
        &=\Paren{\frac{\dims}{\ab}}^{|\cycle(\pi)|}.
    \end{align*}
    We proceed to evaluate the expression,
    \begin{align*}
        &\expectDistrOf{U}{\Tr[\Pi_1\Delta]^2\Tr[\Pi_2\Delta]^2}\\
        =&\sum_{\pi\in \Sim^*,\tau\in\Sim_4}\Wg_{\dims}(\pi^{-1}\tau)\Paren{\frac{\dims}{\ab}}^{|\cycle(\pi)|}\permProd{\Delta}{\tau}\\
        =&\Tr[\Delta^4]\sum_{\pi\in \Sim^*,\tau\in(4)}\Wg_{\dims}(\pi^{-1}\tau)\Paren{\frac{\dims}{\ab}}^{|\cycle(\pi)|}+\Tr[\Delta^2]^2\sum_{\pi\in\Sim^*,\tau\in(2^2)}\Wg_{\dims}(\pi^{-1}\tau)\Paren{\frac{\dims}{\ab}}^{|\cycle(\pi)|}.
    \end{align*}
    The cycle types in~\cref{tab:cycle-types-pi-tau} also apply here. Only for $\tau\in(2^2)$, the term has order $\Theta_{\ab}(\dims^{-2})$, and for all other terms the order is $O(\dims^{-3}\ab^{-2})$. We can count the multiplicity of the red terms.
    \begin{enumerate}
        \item $\Wg_{\dims}(2^2)$ has 3 terms, where $\pi=e$, and $\tau\in(2^2)$ can be arbitrary.
        \item $\Wg_{\dims}(21^2)$ has 2 terms. From~\cref{tab:cycle-types-pi-tau}, we know that $|\{\pi^{-1}\tau\in(21^2):\pi\in(21^2),\tau\in(2^2)\}|=2\times 3=6$. By symmetry for each fixed $\pi_0\in(21^2)$, $|\{\pi_0^{-1}\tau\in(21^2):\tau\in(2^2)\}|=6/|(21^2)|=1$. There are two permutations $\pi\in\Sim^*$ with cycle $(21^2)$, so the multiplicity is 2.
        \item $\Wg_\dims(1^4)$ has 1 term. The only $\pi\in\Sim^*$ with cycle $(2^2)$ is $\pi=(2,1,4,3)$, and $\pi^{-1}\tau=e$ if and only if $\tau=\pi=(2,1,4,3)$.
    \end{enumerate}
    Summarizing the results and using the asymptotics in~\cref{tab:wg-deg-4}, 
    \begin{align*}
        \expectDistrOf{U}{\Tr[\Pi_1\Delta]^2\Tr[\Pi_2\Delta]^2}&=\Tr[\Delta^4]O(\dims^{-3}\ab^{-2})+\Tr[\Delta^2]^2\Paren{\frac{1}{\dims^2\ab^2}-\frac{2}{\dims^2\ab^3}+\frac{3}{\dims^2\ab^4}}\\
        &\le \frac{\Tr[\Delta^2]^2}{\dims^2\ab^2}\Paren{1-\frac{2}{\ab}+\frac{3}{\ab^2}+O(\dims^{-1})}.
    \end{align*}
    The proof is complete.
\end{proof}

\end{document}